\documentclass{article}
\usepackage{iclr2027_conference,times}

\usepackage{amsmath}
\usepackage{amssymb}
\usepackage{amsthm}
\usepackage{algorithm}
\usepackage[noend]{algpseudocode}
\usepackage{booktabs}
\usepackage{longtable}
\usepackage{array,tabularx}
\usepackage{needspace}
\usepackage{graphicx}
\usepackage{multirow}
\usepackage{placeins}
\usepackage[hidelinks]{hyperref}
\usepackage{url}

\newtheorem{proposition}{Proposition}

\newtheorem{theorem}{Theorem}

\title{Beyond the Beam: Constructive Repair\\
and Candidate Completion for\\
Generative Recommendation}

\author{%
\begin{minipage}[t]{0.97\textwidth}
\centering
\normalsize
Zijun Zhao\textsuperscript{1}\quad
Peng Zhang\textsuperscript{2}\quad
Gang Zhang\textsuperscript{1}\\[2pt]
Yuanchi Ma\textsuperscript{3}\quad
Hui He\textsuperscript{4}\quad
Zhendong Niu\textsuperscript{1}\thanks{Corresponding author.}\\[5pt]
\normalfont\small
\textsuperscript{1}\,Beijing Institute of Technology\quad
\textsuperscript{2}\,China Meteorological Administration\\[2pt]
\textsuperscript{3}\,Tsinghua University\quad
\textsuperscript{4}\,Singapore Management University\\[4pt]
\href{mailto:zhaozijun@bit.edu.cn}{\texttt{zhaozijun@bit.edu.cn}}\quad
\href{mailto:zniu@bit.edu.cn}{\texttt{zniu@bit.edu.cn}}
\end{minipage}%
}
\hypersetup{pdfauthor={Zijun Zhao, Peng Zhang, Gang Zhang, Yuanchi Ma, Hui He, Zhendong Niu}}

\iclrfinalcopy
\hypersetup{
  pdftitle={Beyond the Beam: Constructive Repair and Candidate Completion for Generative Recommendation},
  pdfkeywords={recommender systems, generative recommendation, semantic identifiers}
}

\begin{document}

\maketitle
% Override the conference publication header set by the original style.
\lhead{Preprint}

\begin{abstract}
Generative recommenders retrieve items by generating identifiers, but a valid
identifier can remain outside the beam after catalog expansion. This raises
two connected questions: which failures can identifier assignment repair,
and how should retrieval proceed beyond the initial beam? We characterize
assignment repair with a fixed generator and retained old identifiers.
Output-invariance certificates identify failures shared by all admissible
assignments. Under a common effective prefix, coupled support and ranking
constraints give the exact feasible interval of new-item counts for target
recovery. Building on this characterization, Beyond the Beam (BB) obtains
minimum-replacement repairs through an integral flow formulation, selects
a shared map and adapts the generator. At inference, generative likelihood and collaborative evidence
define one score for ranking, candidate priority and stopping. Retained
prefix bounds guide candidate completion and certify its global Top-$K$
when the stopping condition is met. Exhaustive finite-catalog evaluation
confirms construction in every feasible case. Across three Amazon Reviews
categories and three random seeds, the full T5 procedure improves mean
Recall@10 by 15.5--46.3\% and NDCG@10 by 15.2--44.4\% over the best-performing
evaluated generative baseline for each dataset and metric. Matched controls show
that shared construction and adaptation improve new-target ranking and
certification efficiency on Beauty and Toys. Combined scoring and candidate
completion improve NDCG@10 across all three datasets with both T5 and
decoder-only LC-Rec.

\end{abstract}

\section{Introduction}
\label{sec:introduction}

Generative recommenders retrieve items by generating compact identifiers
\citep{rajput2023tiger}. After catalog expansion, a new item can have a valid
identifier and strong collaborative support yet be omitted by beam search.
Reranking cannot return an item outside its candidate pool. This failure
motivates two questions: when can changing identifiers repair retrieval
while preserving old codes, and how can inference evaluate promising items
beyond the initial beam? Appendix~\ref{app:worked-example} illustrates both
questions with a concrete example.

Existing approaches update the index, tokenizer or model.
Joint tree/model learning optimizes item-to-leaf matching
\citep{zhu2019joint}; DREAM uses multi-context evidence to choose candidate
identifiers \citep{guan2026dream}; DACT incorporates collaborative information
into tokenization \citep{feng2026dact}. These procedures choose an assignment and measure
its performance. The complementary question concerns the entire admissible
family: which failures persist across every assignment, and which admit a
repair while retaining old identifiers and recommendation eligibility?

A target's identifier score alone does not determine repair feasibility.
An occupied leaf activates prefixes that compete for a finite beam.
Supporting the target's path can therefore displace other paths, while
filling the catalog can introduce higher-ranked terminal competitors.
The challenge is to characterize which catalogs jointly satisfy these
support and ranking requirements while returning the target.

After a common effective prefix, we characterize exactly which catalogs
recover a specified target. Prefix support sets a lower bound on new-item
counts; ranking competition and catalog capacity set an upper bound.
Every feasible count admits an explicit repair. A complementary
output-invariance certificate identifies queries whose ordered predictions
remain unchanged across admissible assignments, yielding bounds on population
improvement.

Building on this feasible family, we develop Beyond the Beam (BB), connecting
constructive repair with bound-guided candidate completion. For each feasible state
and target, minimum-replacement repair retains as much of the current leaf
occupancy as the constraints allow. Disjoint support groups give an integral
minimum-cost-flow solution. We turn local repairs into shared-map proposals,
compare them across re-encoded training histories, and adapt the generator
to the selected map.

The shared map is learned across training contexts, while candidate access
remains query-dependent. At inference, collaborative evidence can favor items
outside the generator's initial beam. We therefore combine generator likelihood
with an item-level correction and evaluate additional candidates under this
common score. Adding the known correction to retained prefix scores yields
bounds for candidate priority and global Top-$K$ certification.

Our contributions connect repair feasibility and minimum leaf replacement
with candidate completion under the combined score. Exhaustive finite catalogs
and direct model decoding test the constructive guarantees. Three-seed T5
controls show that shared construction and adaptation improve new-target
ranking and certification efficiency on Beauty and Toys. Combined scoring
and candidate completion improve NDCG@10 across all three datasets with
both T5 and decoder-only LC-Rec.

\section{Problem Formulation}
\label{sec:update-setting}

Given a user's history $h$, a generative recommender predicts an item through
its semantic identifier, a length-$L$ path in
$\mathcal Z=\Sigma_1\times\cdots\times\Sigma_L$.
An injective assignment $c:\mathcal I\to\mathcal Z$ maps the current items
to the occupied catalog $A=c(\mathcal I)$. Beam search maintains $W$ live prefixes and
returns a ranked item list. We study when assigning identifiers to new items
can recover a target in this list while preserving existing item identifiers.

\paragraph{Assignment repair.}
Let $c_O$ be the retained assignment of old items $\mathcal I_{\mathrm{old}}$,
with occupied paths $O=c_O(\mathcal I_{\mathrm{old}})$.
A legal expansion preserves their identities and recommendation eligibility:
\begin{equation}
 O\subseteq A\subseteq\mathcal Z,\qquad c(i)=c_O(i)\quad(i\in\mathcal I_{\mathrm{old}}).
 \label{eq:admissible-catalog}
\end{equation}
The $m$ new items $\mathcal I_{\mathrm{new}}=\mathcal I\setminus\mathcal I_{\mathrm{old}}$
receive distinct paths in $A\setminus O$.
A repair is an admissible assignment under which a specified target appears
in Top-$K$. The theoretical question is whether such an assignment exists.
For recommendation, we use repairs from training contexts to propose a shared
map and evaluate it over held-out queries. The map used for a test query
does not depend on its target.

To characterize assignment repair, fix the generator, encoded history,
vocabulary, beam width and termination rule. Old-only histories retain the
same encoding as new assignments vary; mixed histories require re-encoding
when evaluating the shared map. At depth $d$, the effective decoder state
$S_{d-1}$ contains its live prefixes, accumulated scores, inputs and caches.
Before masking invalid catalog children, its full-vocabulary expansions are
\begin{equation}
 \mathcal E_d(S_{d-1})=\{(p,a):p\in S_{d-1}^{\rm live},a\in\Sigma_d\},\qquad
 q(p,a)=s(p)+r_\theta(h,S_{d-1};p,a).
 \label{eq:full-expansions}
\end{equation}
Legal children keep these scores. Deterministic state transitions and a fixed
tie order make comparisons across catalogs well defined. The assignment-family
results assume query-separable execution and verify relevant numerical gaps
(Appendix~\ref{app:native-closure}).

\paragraph{Scoring and candidate completion.}
After the shared map and adapted model are fixed, let
$\operatorname{enc}_c(h)$ denote the encoded history. A standard autoregressive
generator assigns complete-code score
$s_\theta(i\mid h)=\sum_{t=1}^{L}\log p_\theta(c_t(i)\mid\operatorname{enc}_c(h),c_{<t}(i))$.
We augment this score with an item-level predictor. The inference task is to
retrieve the Top-$K$ under the combined score. Collaborative evidence may
favor items outside the initial beam, so candidate completion evaluates
additional items while retaining the chosen map. Its inputs are the history,
current catalog and learned scores; the target is used only for evaluation.

We measure Recall and NDCG on the full test population. A target that has not
yet entered the current catalog contributes a miss. The candidate-completion
budget controls the number of additional complete-item likelihoods evaluated,
separately from the initial beam width and the cost of the item-level predictor.

\section{Assignment Limits and Constructive Guarantees}
\label{sec:repair-guarantees}

We first establish when identifier assignment cannot change retrieval, then
characterize when it can recover a target. Both results hold the generator
and encoded input fixed, isolating the effect of catalog occupancy.
We study Top-$K$ retrieval within a width-$W$ beam, with $K\le W$.

\subsection{Limits of assignment-only repair}

Suppose the unmasked Top-$W$ extensions at every layer of a reference run are
finite, live, old-supported prefixes with strict score gaps. Under the scoring
and termination premises of Section~\ref{sec:update-setting}, every admissible
catalog retains these extensions and their order. The effective state is
therefore preserved layer by layer, and all assignments return the same
ordered old-item list (Proposition~\ref{prop:assignment-closure}). A failed
check leaves repair unresolved.

This certificate bounds population metrics normalized to the interval $[0,1]$.
If $C$ is a
certified subset of an $N$-query population and query $u\in C$ has metric
upper bound $c_u$, then, over legal assignments $c$,
$\sup_c N^{-1}\sum_u m_u(c)\le(\sum_{u\in C}c_u+N-|C|)/N$.
Unresolved queries receive the optimistic metric value one. An observed
update exceeding this bound on the same population cannot be matched by any
legal assignment of the fixed parent model. The bound also applies to a
single shared assignment, even when different queries require incompatible
repairs (Appendix~\ref{sec:closure}).

\subsection{Exact feasibility of assignment repair}

Assume all admissible catalogs share a valid effective beam of $W$
old-supported prefixes at depth $L-2$.
Successful old-prefix checks establish this condition. At the next depth,
old-supported prefixes are mandatory, while optional prefixes can be enabled
by occupying a descendant. Enumerate the ordered Top-$W$ states $S$ induced
by the optional prefixes that can reach the mandatory cutoff.
For a state $S$, exclude descendants of unselected optional prefixes above
its cutoff, leaving maximal catalog $Z_S$. Each selected nonold prefix needs
one occupied descendant; these disjoint support groups form $\mathcal G_S$.
Write $U_S=Z_S\setminus O$, $C_S=|U_S|$ and $r_S=|\mathcal G_S|$.

\begin{theorem}[Catalog feasibility and constructive repair]
\label{thm:main-repair}
A catalog reaches $S$ exactly when
$O\subseteq A\subseteq Z_S$ and $A\cap G\ne\varnothing$ for every
$G\in\mathcal G_S$. For a new terminal child $y$ of $S$, let $R_S(y)$ be its full
one-based rank. Let $a_S(y)$ count its old outrankers and $g_S(y)$ count the
support groups whose every leaf precedes $y$. Let $\delta_S(y)$ be zero if $y$ belongs to a support group
and one otherwise. An $m$-item repair returning $y$ in Top-$K$ exists exactly when
\begin{equation}
 a_S(y)+g_S(y)<K,\qquad
 r_S+\delta_S(y)\le m\le C_S-\max\{0,R_S(y)-K\}.
 \label{eq:main-repair-interval}
\end{equation}
A satisfying case yields an explicit legal assignment.
\end{theorem}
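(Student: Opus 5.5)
The proof has two parts: first characterize which catalogs reach the depth-$(L-1)$ state $S$, then count new items subject to support and ranking constraints.

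\emph{Reaching $S$.} By assumption every admissible catalog shares the effective beam at depth $L-2$, so the full expansions $\mathcal E_{L-1}$ and their scores $q(p,a)$ in~\eqref{eq:full-expansions} do not depend on $A$. Masking only removes children with no occupied descendant. The Top-$W$ state at depth $L-1$ is therefore determined by which expansions are enabled.

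Old-supported prefixes are always enabled. The ordered state equals $S$ exactly when:
\begin{itemize}
\item every selected optional prefix is enabled, i.e.\ $A$ meets each $G\in\mathcal G_S$;
\item no unselected optional prefix scoring above the cutoff of $S$ is enabled, i.e.\ $A\subseteq Z_S$.
\end{itemize}
Prefixes below the cutoff are irrelevant, and the fixed tie order settles equal scores. This gives the first claim by a direct two-sided argument. Since $S$ is fixed, the terminal children of $S$ and their full ranks $R_S(\cdot)$ are fixed as well. The returned list is the occupied children of $S$, in that order.

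\emph{Necessity.} Fix $A$ reaching $S$ with $y\in A$ and $|A\setminus O|=m$.
\begin{itemize}
\item \emph{Ranking condition.} Every old outranker of $y$ is occupied. Each support group lying entirely before $y$ contributes at least one occupied outranker, and these are disjoint from the old leaves and from each other. So $y$ has at least $a_S(y)+g_S(y)$ occupied predecessors, and Top-$K$ forces $a_S(y)+g_S(y)<K$.
\item \emph{Lower bound.} Each of the $r_S$ disjoint groups needs a new leaf. If $y$ lies in no group ($\delta_S(y)=1$), it is one further new leaf, giving $m\ge r_S+\delta_S(y)$.
\item \emph{Upper bound.} The new leaves lie in $U_S$. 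Among the $R_S(y)-1$ children ranked above $y$, at most $K-1$ may be occupied. Hence at least $\max\{0,R_S(y)-K\}$ positions of $U_S$ above $y$ must stay empty, giving $m\le C_S-\max\{0,R_S(y)-K\}$. This step needs the non-terminal leaves of $Z_S$ counted in $C_S$ to be handled consistently. I would read $C_S$ and $R_S$ as referring to the same leaf set, namely the children of $S$ together with their interaction with the groups. Making this bookkeeping precise is the main obstacle.
\end{itemize}

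\emph{Sufficiency (construction).} Given $m$ in the interval:
\begin{enumerate}
\item Occupy $y$.
\item In each support group not containing $y$, choose its lowest-ranked leaf, or any leaf after $y$ if one exists. Groups not wholly before $y$ then add no predecessor, and each group wholly before $y$ adds exactly one. After this step, $y$ has exactly $a_S(y)+g_S(y)<K$ predecessors.
\item Fill the remaining budget greedily with leaves of $U_S$ ranked below $y$ first. Then add leaves above $y$ only while the predecessor count stays below $K$.
\end{enumerate}
The number of leaves available without violating the constraint is $C_S-\max\{0,R_S(y)-K\}$, so any $m$ up to that value is reached. The main risk is showing that the greedy choices of group leaves are simultaneously compatible with the upper bound. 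I would handle this with an exchange argument: replacing a chosen group leaf above $y$ by another leaf of the same group below $y$ never increases the predecessor count. The first claim then shows that the resulting $A$ reaches $S$, and assigning the new items bijectively to $A\setminus O$ gives the explicit legal assignment.
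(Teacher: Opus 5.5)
Your overall route is the paper's: the same two-sided characterization of the catalog family, the same necessity counts, and the same greedy witness. That witness occupies $y$, supports each uncovered group with a non-outranking leaf when possible, and pads with non-outrankers before outrankers. What needs correcting is the ``main obstacle'' you flag, because your proposed resolution is wrong: $C_S$ and $R_S$ must not be read over the same leaf set. By definition $C_S=|Z_S\setminus O|$ counts every compatible leaf, including those outside the terminal expansion of $S$. These are non-old descendants of depth-$(L-1)$ candidates below the cutoff and of depth-$(L-2)$ prefixes outside the common beam. That is why the paper's capacity is the symbolic $|\mathcal Z|-|O|-B_L|F_S|$, far larger than the number of children of $S$. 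Restricting $C_S$ to children of $S$ would change the statement and make the interval exclude counts that are in fact feasible.

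No reconciliation is needed. In necessity, the at least $\max\{0,R_S(y)-K\}$ unoccupied predecessors of $y$ are non-old children of $S$, hence already members of $U_S$, so your bound holds verbatim. In sufficiency, the off-expansion leaves are exactly what lets the count reach the upper endpoint. They lie in $Z_S$, so adding them preserves $S$ by the first part. They are not children of $S$, so they never enter the returned list ahead of $y$.

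The paper makes this explicit by splitting $U_S\setminus\{y\}$ into the $h$ optional outrankers $H$ and the $t=C_S-1-h$ remaining leaves $T$, off-expansion leaves included. Any $m$-set containing $y$ then needs at least $\max\{g_S(y),m-1-t\}$ members of $H$. So the best attainable rank is $1+a_S(y)+\max\{g_S(y),m-1-t\}$, and $R_S(y)-1=a_S(y)+h$ turns this into the ranking condition and the upper endpoint.

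Your step 3 should therefore fill with leaves of $U_S$ \emph{not ranked above} $y$, rather than only those ranked below it. The available total is then $1+t+\min\{h,K-1-a_S(y)\}=C_S-\max\{0,R_S(y)-K\}$. Finally, the exchange argument is superfluous: step 2 already takes a non-outranking leaf whenever a group has one, so exactly $g_S(y)$ outrankers are forced.
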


The interval exposes two competing requirements. Enough new leaves must be
occupied to support the selected prefixes and the target. Enough
higher-ranked leaves must remain unoccupied to keep the target in Top-$K$.
Thus catalog capacity and ranking competition impose an upper count bound
alongside the lower bound from prefix support.

The characterization is constructive: select the target, support uncovered
groups with non-outranking leaves whenever possible, and fill remaining
capacity with non-outrankers before using higher-ranked leaves. Enumerating
the candidate states and target leaves decides repair feasibility within
the common-prefix setting. Section~\ref{sec:method-completion} uses this
feasible family as the constraint set for minimizing leaf replacement
while recovering a specified target in a chosen state.
Appendix~\ref{sec:closure} gives the proofs, critical-state search and the
worst-case lower bound for complete-state score-oracle access.

\section{Constructive Repair and Bound-Guided Candidate Completion}
\label{sec:method-completion}

BB connects catalog updating with retrieval for each query
(Figure~\ref{fig:method-overview}). We solve the repair constraints of Theorem~\ref{thm:main-repair}
with minimum leaf replacement, then evaluate shared-map proposals on re-encoded
histories and adapt the generator. With that map fixed, the combined score
guides candidate completion and stopping.
\begin{figure}[t]
\centering
\includegraphics[width=\textwidth]{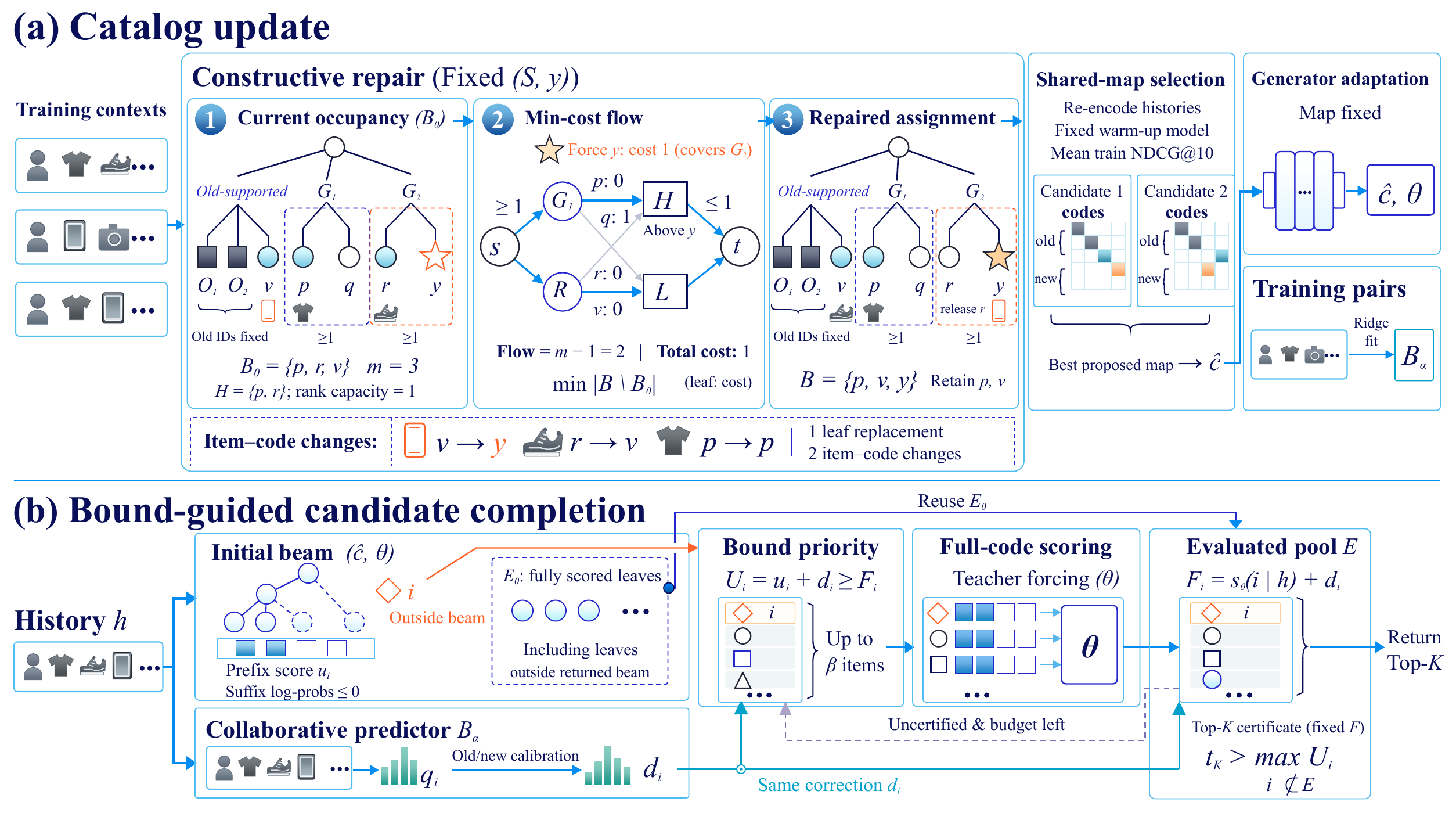}
\caption{Overview of BB. (a) Minimum-cost flow repairs fixed $(S,y)$ while
preserving old identifiers; trees omit unchanged branches and additional leaves.
Replacing $r$ with $y$ changes one occupied leaf and two item codes.
Re-encoded proposals are selected by training NDCG@10 under the fixed warm-up
model; the generator then adapts with the map fixed. (b) The correction $d_i$
enters both $U_i$ and $F_i$. Filled/outlined tokens denote scored prefixes/unevaluated
suffixes. Reuse includes all scored leaves $E_0$, even outside the returned
beam; the highlighted item initially lies outside $E_0$. Priority evaluates
only unscored items. With $|E|\ge K$, \eqref{eq:completion-stop} certifies
global Top-$K$ under fixed $F$; budget exhaustion can leave it uncertified.}

\label{fig:method-overview}
\end{figure}

\subsection{Constructive assignment and generator adaptation}

Repair jointly constrains prefix support, ranking competition and catalog
size. Among feasible occupancies, the objective preserves as much of the
current occupancy as possible. Let $B_0$ be the current
$m$-leaf new-item occupancy and fix a feasible state--target pair $(S,y)$.
Let $H_S(y)$ contain optional leaves ranked above $y$ and $a_S(y)$ count its
mandatory old outrankers. Within the compatible universe $U_S$ of
Theorem~\ref{thm:main-repair}, we solve
\begin{equation}
\begin{aligned}
 \min_{B\subseteq U_S}\quad &|B\setminus B_0|\\
 \mathrm{s.t.}\quad &|B|=m,\quad y\in B,\quad B\cap G\ne\varnothing\ (G\in\mathcal G_S),\\
 &|B\cap H_S(y)|\le K-1-a_S(y).
\end{aligned}
 \label{eq:method-minreplace}
\end{equation}
The universe excludes leaves that would displace the selected beam state.
The count constraint preserves catalog size, the support constraints activate
every required prefix, and the final constraint reserves a Top-$K$ position
for the target. The solution specifies occupied leaves; the next step assigns
item identities and evaluates the resulting shared map.

\paragraph{Reduction to an integral flow.}
Disjoint support groups let us aggregate leaves by support membership,
rank relative to $y$, and retention cost. The resulting minimum-cost-flow
network enforces all three constraints jointly and minimizes leaf replacement
for the specified state and target, avoiding enumeration of the $m$-leaf
subsets of $U_S$. Its integral solution and
reconstruction into a repair are detailed in Appendix~\ref{app:tuned-construction}.

\paragraph{From local solutions to a shared map.}
A local solution specifies leaf occupancy for one decoder state and target.
To obtain item mappings, we pin each proposal's target identity and retain
compatible original identities. We combine compatible proposals
into a finite set $\mathcal C_{\mathrm{prop}}$ of modified candidate maps $c$.
Each map re-encodes histories, so we assess its effect across training contexts
under the fixed warm-up generator. Let $r_c(h,i)$ be the target's one-based
rank in beam output after re-encoding history $h$ under $c$.
We evaluate candidates on a subset $\mathcal D_{\mathrm{sel}}$ of update-training
examples disjoint from the proposal-generation contexts. For a nonempty
candidate set, we select the shared assignment by mean NDCG@10:
\begin{equation}
 \widehat c\in\operatorname*{arg\,max}_{c\in\mathcal C_{\mathrm{prop}}}
 \frac{1}{|\mathcal D_{\mathrm{sel}}|}
 \sum_{(h,i)\in\mathcal D_{\mathrm{sel}}}
 \frac{\mathbf 1\{r_c(h,i)\le 10\}}{\log_2(1+r_c(h,i))}.
 \label{eq:shared-map-selection}
\end{equation}
Unreturned targets have rank $\infty$ and contribute zero. Ties use Recall@10,
fewer changed identities, then deterministic mapping order. If the candidate
set is empty, we retain the original map. Candidate budgets and composition
details are given in Appendix~\ref{app:tuned-construction}.

\paragraph{Learning under the constructed assignment.}
The selected map $\widehat c$ fixes history encodings and target identifiers
throughout generator adaptation. For update-training pairs
$(h,i)\in\mathcal D_{\mathrm{upd}}$, write $z=\widehat c(i)$ and minimize
\begin{equation}
 \mathcal L_{\mathrm{adapt}}(\theta;\widehat c)
 =-\mathbb E_{(h,i)\sim\mathcal D_{\mathrm{upd}}}
 \left[\frac{w(h,i)}{L}\sum_{\ell=1}^{L}
 \log p_\theta\!\left(z_\ell\mid
 \operatorname{enc}_{\widehat c}(h),z_{<\ell}\right)\right].
 \label{eq:generator-adaptation}
\end{equation}
Here $w(h,i)$ is the example weight, normalized to mean one. Adaptation starts
from the common warm-up model; Appendices~\ref{app:tuned-construction}
and~\ref{app:lcrec} specify the weights, trainable parameters and validation
checkpoint selection.

\needspace{7\baselineskip}
\subsection{Catalog-wide scoring for candidate completion}

With the map and adapted generator fixed, the candidate beam depends on the
query history. Collaborative evidence can favor items outside this
likelihood-based beam. A catalog-wide predictor provides item scores before
identifier evaluation, supplying the correction used for ranking and
candidate completion.

We fit a supervised ridge predictor $B_\alpha$
from normalized, recency-weighted history vectors to one-hot next items,
using only available training examples. With these inputs and targets
stacked as $X$ and $Y$, the fit minimizes
$\|Y-XB\|_F^2+\alpha\|B\|_F^2$ over $B$.
Linear item-to-item prediction is an established component of
recommendation~\citep{steck2019ease}; the closed-form solution and parameter
grids are given in Appendix~\ref{app:completion}. For history $h$, write
$q_i=\max\{(x_hB_\alpha)_i,10^{-8}\}$.

The amount of training evidence differs between old and newly admitted items.
We therefore calibrate the new-item contribution around the uniform reference
$\mu=-\log|\mathcal I|$. For an item with identifier $z=\widehat c(i)$,
the correction and final score are
\begin{align}
 d_i&=\lambda\log q_i+\mathbf 1\{i\in\mathcal I_{\mathrm{new}}\}
 \bigl[\lambda(\gamma-1)(\log q_i-\mu)+b\bigr],
 \label{eq:calibrated-correction}\\
 F(i\mid h)&=\sum_{\ell=1}^{L}\log p_\theta\!\left(z_\ell\mid
 \operatorname{enc}_{\widehat c}(h),z_{<\ell}\right)+d_i.
 \label{eq:completion-score}
\end{align}
The summed term is the adapted generator score $s_\theta(i\mid h)$.
Here $\lambda$ weights collaborative evidence, $\gamma$ rescales variation
among new-item corrections around the reference, and $b$ shifts new-item
scores relative to old items.
The choices $(\gamma,b)=(1,0)$ and $(0,0)$ recover the global correction
and the uniform new-item reference, respectively. Validation selects the
parameters for each dataset and retriever run. Because $d_i$ is known before
full identifier evaluation, it can also enter an upper bound on $F(i\mid h)$.
Candidate completion uses this bound to prioritize evaluations under the
same score used for final ranking.

\subsection{Bound-guided candidate completion}

During catalog-constrained beam search, retain each item's longest evaluated
prefix, with length $\ell_i$ and cumulative log probability $u_i$.
For $z=\widehat c(i)$, full-vocabulary log probabilities without length
normalization give
\begin{equation}
 \begin{aligned}
 U_i&:=u_i+d_i
 =\sum_{j=1}^{\ell_i}\log p_\theta\!\left(z_j\mid
 \operatorname{enc}_{\widehat c}(h),z_{<j}\right)+d_i,\\
 U_i-F(i\mid h)&=-\sum_{j=\ell_i+1}^{L}\log p_\theta\!\left(z_j\mid
 \operatorname{enc}_{\widehat c}(h),z_{<j}\right)\ge0.
 \end{aligned}
 \label{eq:item-upper}
\end{equation}
The unevaluated suffix contributes only nonpositive terms. Thus $U_i$
bounds the final score even after the path leaves the beam, allowing pruned
identifiers to remain eligible for completion. An empty prefix has
$\ell_i=0$ and $u_i=0$.

\paragraph{Budgeted candidate selection.}
The initial evaluated set $E_0$ contains every complete catalog leaf scored
in the terminal beam expansion, including leaves outside the returned beam.
At round $t$, let $n_t$ count additional full evaluations, with $n_0=0$.
For batch size $\beta$ and total budget $B_{\max}$, set
$b_t=\min\{\beta,B_{\max}-n_t,|\mathcal I\setminus E_t|\}$.
We select the batch with the largest total upper bound, evaluate its complete
identifier likelihoods by teacher forcing, and update
\begin{equation}
 \begin{aligned}
 Q_t&\in\operatorname*{arg\,max}_{\substack{Q\subseteq\mathcal I\setminus E_t\\|Q|=b_t}}
          \sum_{i\in Q}U_i,\\
 E_{t+1}&=E_t\cup Q_t,\qquad n_{t+1}=n_t+|Q_t|.
 \end{aligned}
 \label{eq:completion-update}
\end{equation}
Sorting the remaining bounds solves this selection problem, with item order
breaking ties.

\needspace{12\baselineskip}
\paragraph{Certified stopping.}
For an evaluated set $E$ containing at least $K$ items, let $t_K$ be
its $K$th largest score. Completion stops with a global Top-$K$ certificate when
\begin{equation}
 t_K>\max_{i\notin E}U_i.
 \label{eq:completion-stop}
\end{equation}
Every unevaluated item then scores below the $K$th evaluated item; the maximum
over an empty remaining set is $-\infty$. If the budget ends first, the
procedure returns the highest-scoring evaluated items without that certificate.
The rule applies optimal-stopping principles for monotone prefix
scores~\citep{huang2017finish,meister2020bestfirst} to the corrected catalog
score. Appendix~\ref{app:completion-proof} gives the certificate and the
nonincreasing certification-gap argument.
Algorithm~\ref{alg:completion} summarizes training and inference.

\begin{algorithm}[htbp]
\caption{Minimum-replacement repair and bound-guided completion}
\label{alg:completion}
\begin{algorithmic}[1]
\Require Training pairs, catalog, warm-up model; width $W$, cutoff $K$, batch $\beta$, budget $B_{\max}$
\State Solve~\eqref{eq:method-minreplace} for sampled state--target proposals by integral flow
\State Pin target identities and compose compatible proposals into shared maps
\State Select $\widehat c$ by~\eqref{eq:shared-map-selection}, retaining the original map if $\mathcal C_{\mathrm{prop}}=\varnothing$
\State Adapt the generator with $\widehat c$ fixed using~\eqref{eq:generator-adaptation}
\State Fit the next-item predictor; select scoring parameters on validation
\For{each recommendation history $h$}
 \State Run catalog beam search; retain prefix bounds $u_i$ and complete leaves $E_0$
 \State Compute $d_i$, $U_i$ for all items and $F_i$ for $i\in E_0$; set $n_0\gets0$, $t\gets0$
 \While{\eqref{eq:completion-stop} fails for $E_t$ and $n_t<B_{\max}$}
  \State Select $Q_t$ by~\eqref{eq:completion-update} and evaluate $F_i$ for $i\in Q_t$
  \State Update $E_{t+1},n_{t+1}$ by~\eqref{eq:completion-update}; set $t\gets t+1$
 \EndWhile
 \State Return the $K$ highest-scoring items in $E_t$ and whether~\eqref{eq:completion-stop} holds
\EndFor
\end{algorithmic}
\end{algorithm}

\section{Experiments}
\label{sec:evaluation}
\label{sec:results}

\subsection{Experimental setup}
\label{sec:setup}

\paragraph{Data and task.}
We use Amazon 2014 Beauty, Tools, and Toys and Games~\citep{he2016ups}.
Beauty and Toys expand their catalogs from the 60th to the 80th event-time
percentile; Tools uses DACT's public period-0.8 sequences.
All methods within a domain share the examples and available catalog.
Beauty/Tools/Toys provide 7,335/5,360/6,266 test queries.
Future-catalog targets remain misses;
\emph{Primary} denotes current-catalog new targets with entirely old-item
histories. Appendix~\ref{app:protocol} gives preprocessing and training coverage.

\paragraph{Comparisons.}
We adapt JTM matching~\citep{zhu2019joint} and DREAM voting~\citep{guan2026dream}
to the shared T5/candidate framework. DACT~\citep{feng2026dact},
Reformer~\citep{shi2025incremental} and five tokenizer/generative recommendation
model (GRM) update strategies share a tokenizer-compatible parent.
All methods are validation-tuned from original-paper or official settings;
BB selects adaptation and scoring settings per dataset and seed.
Appendix~\ref{app:protocol} specifies initialization and training budgets.
We also evaluate decoder-only LC-Rec~\citep{zheng2024lcrec,feng2026dact}
on the same datasets (Appendix~\ref{app:lcrec}).

\paragraph{Evaluation and selection.}
T5 uses four-token identifiers and beam width 40. Completion reuses fully
scored leaves, returns 20 items and evaluates at most 80 additional items
in batches of 20. Validation Recall and NDCG at 10 and 20 jointly guide
tuning, with one configuration per run for all four metrics.
The main comparison reports means and sample standard deviations over
seeds 17, 42 and 2027 for every method on all three datasets.
Matched ablations inherit each BB
run's settings. \emph{Without construction} uses the paired static map and
adapted checkpoint, distinct from the separately tuned \emph{Static assignment}
baseline. The linear predictor is shared across retriever seeds.
Table~\ref{tab:current-scoring} lists the nine main-result configurations.

% Keep compact experimental floats from stretching their inter-float gaps.
\raggedbottom
\subsection{Validation of constructive repair and guarantees}
\label{sec:count-validation}

Across 96 scorers and 221,184 exhaustively decoded catalogs, full construction
solves all 1,816 feasible count/cutoff conditions, versus 1,708 for
reference-state construction. High-score allocation solves all 882
prefix-local cases but only 733 of 934 complete-state cases.
Removing support or either count bound produces false feasibility decisions;
the full criterion makes none. Thus each constraint is necessary in this
finite design, and searching beyond the reference state recovers additional cases.

Direct Beauty decoding verifies 27 repairs, including 22 requiring new-prefix
support; simpler allocations match these counts. Population-bound experiments
establish two Beauty separations across three parents; all six Tools
comparisons remain unresolved. Appendix~\ref{app:assignment-diagnostics}
gives the full feasibility and population accounting.

\subsection{Main recommendation comparison}
\label{sec:shared-results}

With T5 and decoder-only LC-Rec, BB leads Recall@10 and NDCG@10 on all
three datasets (Table~\ref{tab:trained-comparison}; Appendix~\ref{app:lcrec}).
On T5, changing the tokenizer alone reduces accuracy; adapting the retriever
recovers much of it. Among standard strategies,
retraining is strongest on Beauty; retaining the tokenizer and fine-tuning
the retriever is strongest on Tools and Toys.

\begin{table}[t]
\centering\scriptsize
\setlength{\tabcolsep}{1.5pt}
\caption{T5 recommendation under catalog updates: Recall and NDCG at 10 (\%),
mean $\pm$ sample s.d. over seeds 17/42/2027 for all methods;
bold marks the highest mean. In tokenizer/GRM pairs,
FT denotes fine-tuning and RT retraining.}

\label{tab:trained-comparison}
\begin{tabular}{lrrrrrr}
\toprule
& \multicolumn{2}{c}{Beauty} & \multicolumn{2}{c}{Tools} & \multicolumn{2}{c}{Toys} \\
Method & R@10 & N@10 & R@10 & N@10 & R@10 & N@10 \\
\midrule
Frozen model & $1.636\!\pm\!0.098$ & $0.775\!\pm\!0.084$ & $2.077\!\pm\!0.075$ & $1.074\!\pm\!0.043$ & $1.059\!\pm\!0.066$ & $0.532\!\pm\!0.003$ \\
Static assignment & $2.122\!\pm\!0.127$ & $1.006\!\pm\!0.105$ & $3.464\!\pm\!0.127$ & $1.913\!\pm\!0.065$ & $1.612\!\pm\!0.188$ & $0.747\!\pm\!0.090$ \\
\midrule
Frozen/Frozen & $1.445\!\pm\!0.024$ & $0.734\!\pm\!0.038$ & $2.475\!\pm\!0.106$ & $1.390\!\pm\!0.056$ & $1.107\!\pm\!0.176$ & $0.546\!\pm\!0.068$ \\
FT/Frozen & $0.345\!\pm\!0.008$ & $0.156\!\pm\!0.004$ & $0.336\!\pm\!0.000$ & $0.166\!\pm\!0.021$ & $0.681\!\pm\!0.082$ & $0.373\!\pm\!0.035$ \\
Frozen/FT & $1.822\!\pm\!0.048$ & $0.914\!\pm\!0.021$ & $4.291\!\pm\!0.226$ & $2.325\!\pm\!0.128$ & $1.468\!\pm\!0.120$ & $0.743\!\pm\!0.059$ \\
FT/FT & $2.086\!\pm\!0.076$ & $0.998\!\pm\!0.067$ & $4.080\!\pm\!0.118$ & $2.163\!\pm\!0.085$ & $1.426\!\pm\!0.088$ & $0.704\!\pm\!0.052$ \\
FT/RT & $2.200\!\pm\!0.180$ & $1.058\!\pm\!0.062$ & $2.730\!\pm\!0.324$ & $1.467\!\pm\!0.127$ & $1.468\!\pm\!0.100$ & $0.714\!\pm\!0.064$ \\
\midrule
JTM matching (adapted) & $2.059\!\pm\!0.165$ & $0.984\!\pm\!0.116$ & $3.302\!\pm\!0.202$ & $1.799\!\pm\!0.190$ & $1.623\!\pm\!0.060$ & $0.735\!\pm\!0.024$ \\
DREAM voting (adapted) & $2.154\!\pm\!0.165$ & $1.022\!\pm\!0.122$ & $3.470\!\pm\!0.135$ & $1.915\!\pm\!0.067$ & $1.325\!\pm\!0.032$ & $0.614\!\pm\!0.012$ \\
DACT~\citeyearpar{feng2026dact} & $2.095\!\pm\!0.200$ & $1.063\!\pm\!0.068$ & $4.142\!\pm\!0.391$ & $2.260\!\pm\!0.214$ & $1.505\!\pm\!0.066$ & $0.762\!\pm\!0.047$ \\
Reformer~\citeyearpar{shi2025incremental} & $2.063\!\pm\!0.143$ & $1.015\!\pm\!0.072$ & $4.210\!\pm\!0.159$ & $2.436\!\pm\!0.140$ & $1.623\!\pm\!0.202$ & $0.809\!\pm\!0.109$ \\
\midrule
\textbf{BB (Ours)} & $\mathbf{3.217\!\pm\!0.072}$ & $\mathbf{1.535\!\pm\!0.047}$ & $\mathbf{4.956\!\pm\!0.138}$ & $\mathbf{2.806\!\pm\!0.053}$ & $\mathbf{2.154\!\pm\!0.064}$ & $\mathbf{1.027\!\pm\!0.063}$ \\
\bottomrule
\end{tabular}
\end{table}

\subsection{Component effects on ranking and certification}
\label{sec:ablation-results}

Table~\ref{tab:completion-all-at10} and Figure~\ref{fig:main-ablation-profiles}
report matched ranking, cohort and certification effects.

\begin{table}[t]
\centering\scriptsize
\setlength{\tabcolsep}{3pt}
\caption{Matched T5 component controls at cutoff 10: Recall and NDCG (\%),
mean $\pm$ sample s.d. over seeds 17/42/2027. Each removal inherits its
corresponding full-method settings; BB uses the runs in Table~\ref{tab:trained-comparison}.
Without completion reranks the returned beam; initial-pool reranking uses
all fully scored leaves $E_0$ with no additional evaluations.}
\label{tab:completion-all-at10}
\begin{tabular}{lrrrrrr}
\toprule
& \multicolumn{2}{c}{Beauty} & \multicolumn{2}{c}{Tools} & \multicolumn{2}{c}{Toys} \\
Variant & R@10 & N@10 & R@10 & N@10 & R@10 & N@10 \\
\midrule
Generator only & $2.336\!\pm\!0.202$ & $1.115\!\pm\!0.105$ & $3.408\!\pm\!0.151$ & $1.850\!\pm\!0.129$ & $1.649\!\pm\!0.103$ & $0.790\!\pm\!0.072$ \\
Without collaborative score & $2.336\!\pm\!0.202$ & $1.115\!\pm\!0.105$ & $3.414\!\pm\!0.153$ & $1.852\!\pm\!0.129$ & $1.649\!\pm\!0.103$ & $0.790\!\pm\!0.071$ \\
Without completion & $2.872\!\pm\!0.163$ & $1.407\!\pm\!0.076$ & $4.291\!\pm\!0.067$ & $2.558\!\pm\!0.016$ & $2.016\!\pm\!0.040$ & $0.975\!\pm\!0.049$ \\
Initial-pool reranking & $2.881\!\pm\!0.142$ & $1.416\!\pm\!0.070$ & $4.291\!\pm\!0.067$ & $2.558\!\pm\!0.016$ & $2.080\!\pm\!0.024$ & $1.003\!\pm\!0.032$ \\
Without construction & $3.231\!\pm\!0.189$ & $1.518\!\pm\!0.081$ & $4.963\!\pm\!0.131$ & $2.810\!\pm\!0.054$ & $2.096\!\pm\!0.109$ & $1.023\!\pm\!0.083$ \\
Item correction only & $2.449\!\pm\!0.042$ & $1.156\!\pm\!0.026$ & $3.197\!\pm\!0.298$ & $1.790\!\pm\!0.281$ & $1.282\!\pm\!0.009$ & $0.639\!\pm\!0.003$ \\
Collaborative priority & $3.222\!\pm\!0.075$ & $1.537\!\pm\!0.048$ & $4.820\!\pm\!0.163$ & $2.760\!\pm\!0.079$ & $2.197\!\pm\!0.079$ & $1.042\!\pm\!0.060$ \\
Without calibration & $2.568\!\pm\!0.028$ & $1.206\!\pm\!0.016$ & $4.434\!\pm\!0.152$ & $2.406\!\pm\!0.142$ & $2.032\!\pm\!0.024$ & $0.968\!\pm\!0.062$ \\
BB (Ours) & $3.217\!\pm\!0.072$ & $1.535\!\pm\!0.047$ & $4.956\!\pm\!0.138$ & $2.806\!\pm\!0.053$ & $2.154\!\pm\!0.064$ & $1.027\!\pm\!0.063$ \\
\bottomrule
\end{tabular}
\end{table}

\paragraph{Generative and collaborative evidence.}
Combining generative and collaborative evidence improves NDCG@10 over
either alone on all datasets. Calibration further improves these means by
adjusting collaborative evidence for old and new items
(Figure~\ref{fig:main-ablation-profiles}a).

\paragraph{Shared construction and adaptation.}
Relative to the paired without-construction branch, shared construction and
adaptation improve new-target NDCG@10 by 3.82/9.56\% on Beauty/Toys,
with larger Primary gains of 13.95/19.42\%.
They also raise numerical Top-20 certification by 3.54/9.87 percentage points
and reduce additional evaluations by 2.86/7.90\%
(Figure~\ref{fig:main-ablation-profiles}; Appendix~\ref{app:final-construction-effects}).
Old-target losses partly offset the ranking gains, leaving overall NDCG@10
improvements of 1.13/0.37\%; Beauty Recall@10 is slightly lower.

Tools's 133 new-target training contexts are far fewer than Beauty's 2,993
or Toys' 4,950, limiting shared-map training support. Two maps remain
unchanged; the third changes two identities, with near-zero matched
construction effects (Appendix~\ref{app:final-construction-effects}).

\subsection{Candidate access, certification and inference cost}
\label{sec:completion-results}

\paragraph{Separating candidate reuse and additional scoring.}
Reranking $E_0$ reuses fully scored leaves without additional evaluations,
improving mean NDCG@10 over returned-beam reranking on Beauty/Toys and
tying Tools. Full completion further improves Recall@10 and NDCG@10 in
all nine runs, recovering a net 3.36/6.65/0.74 Top-10 hits per 1,000 queries
(Table~\ref{tab:completion-all-at10}; Appendix~\ref{app:access-decomposition}).

\paragraph{Certification efficiency.}
At the same score, stopping test and 80-item maximum, bound priority reduces
Beauty's additional evaluations from 60.69 to 51.08 (15.83\%) at nearly
identical Recall@10 and NDCG@10. Numerical Top-20 certification reaches
77.64/4.45/61.53\% on Beauty/Tools/Toys, versus 45.40/0.71/31.95\%
under collaborative priority. Tools nearly exhausts both budgets.
Certification follows Appendix~\ref{app:completion-numerics}'s numerical
allowance and requires~\eqref{eq:completion-stop}; budget exhaustion is insufficient.

\begin{figure}[tbp]
\centering
\includegraphics[width=\textwidth]{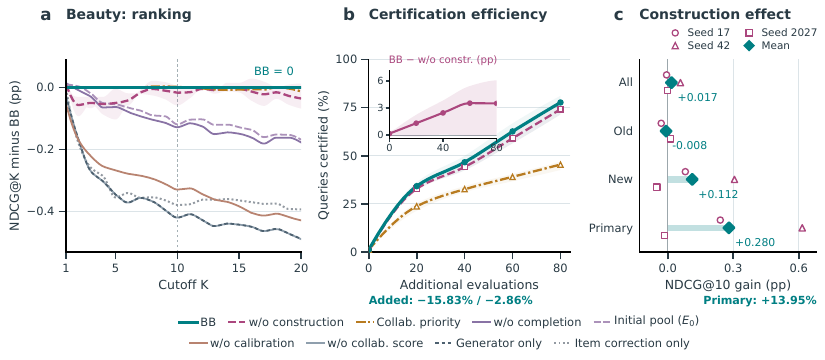}
\caption{Beauty T5 ablations, seeds 17/42/2027.
(a) Paired NDCG@$K$ differences, variant minus BB (pp).
(b) Certification efficiency: numerical Top-20 certification versus
additional evaluations; inset: paired BB-minus-without-construction gap (pp).
The two reductions compare BB with collaborative priority and without
construction, respectively.
(c) NDCG@10 differences, BB minus without construction; labelled open symbols
identify seeds and diamonds denote means. Primary denotes current-catalog new
targets with entirely old-item histories, a subset of New.
All/Old/New/Primary contain 7,335/5,334/1,486/734 queries per seed;
All includes future-target misses. Curves interpolate measured seed means,
with bands showing the range across seeds. The Primary annotation reports
the relative gain.}
\label{fig:main-ablation-profiles}
\end{figure}

\paragraph{Measured inference cost.}
GPU wall time covers beam search, item scoring and completion.
Batched means for bound/collaborative priority are 10.16/10.50, 10.64/10.63 and 10.96/11.98
ms/query on fixed Beauty, Tools and Toys subsets (Table~\ref{tab:completion-cost}).

\FloatBarrier

\section{Related Work}
\label{sec:related-work}

\paragraph{Identifier and model updates.}
JTM matches items to leaves \citep{zhu2019joint}; Reformer, DACT and
SID-Staleness update tokenization \citep{shi2025incremental,feng2026dact,baikalov2026staleness};
DREAM refines assignments and preserves multiple decoding paths \citep{guan2026dream}.
BB characterizes repair feasibility while preserving old identifiers, then
minimizes leaf replacement for a specified state and target.

\paragraph{Reachability and scoring.}
Prefix support, training rollouts and item resolution affect retrieval
\citep{peng2026cold,zhang2026hcgrec,ding2026sidscope}; beam-aware training
addresses pruning during learning \citep{zhuo2020beam,yang2026bear}.
BB connects assignment repair with collaborative scoring and access to
items outside the initial beam.

\paragraph{Candidate access and search.}
SpecGR proposes and verifies candidates \citep{ding2026specgr}.
Building on monotone search bounds \citep{huang2017finish,meister2020bestfirst},
BB uses a known item correction to bound the combined score, including
pruned paths, and certify catalog-wide Top-$K$ when the stopping condition holds.
Appendix~\ref{app:extended-related-work} develops these connections.

\section{Discussion and Conclusion}
\label{sec:discussion}
\label{sec:conclusion}

Shared construction and adaptation improve new-target ranking and certification
with fewer evaluations on Beauty and Toys. Ranking gains are largest for
entirely old-item histories; old-target losses partly offset them in the population mean.

Combined scoring and completion improve NDCG@10 across all three datasets
with both T5 and decoder-only LC-Rec, linking candidate access to the final
ranking objective across these backbone families. Extending repair
beyond late branching after a common effective prefix, optimizing shared
assignments across contexts and tightening completion bounds are natural
next steps.
\label{maintext:end}

% The preprint has no conference page-limit boundary; let statements follow
% naturally so the enlarged author block does not strand the conclusion.
\par\medskip
\flushbottom
\subsection*{AI use statement}
Generative AI tools were used solely to assist with language editing and
polishing during manuscript preparation. All other aspects of this work were
completed by the authors without generative AI assistance. The authors reviewed
and approved the final manuscript and take full responsibility for its content.

\subsection*{Reproducibility statement}
Experimental settings and evaluation protocols are provided in
Appendices~\ref{app:protocol} and~\ref{app:lcrec} for T5 and LC-Rec,
respectively. The selected T5 configurations are listed in
Table~\ref{tab:current-scoring}. These details are provided to facilitate
reproduction of the reported results.
The code is available at \url{https://anonymous.4open.science/r/bb_code-809D/}.

% References and appendices are also outside the main-text limit.
\bibliography{references}
\bibliographystyle{iclr2027_conference}

\appendix
% Supplement-only presentation settings; official ICLR dimensions and fonts remain unchanged.
% Long text cells wrap without forced word spacing. Numeric columns retain their source values.
\newcolumntype{L}[1]{>{\raggedright\arraybackslash}p{#1}}
\newcolumntype{Y}{>{\raggedright\arraybackslash}X}
\raggedbottom
% Permit mixed figure/table pages without flushing half-filled float pages.
\setcounter{topnumber}{4}
\setcounter{bottomnumber}{3}
\setcounter{totalnumber}{6}
\renewcommand{\bottomfraction}{0.85}
\renewcommand{\floatpagefraction}{0.8}
\renewcommand{\arraystretch}{1.15}
\setlength{\tabcolsep}{4pt}
\setlength{\belowcaptionskip}{5pt}
\setlength{\intextsep}{12pt plus 2pt minus 2pt}
\setlength{\textfloatsep}{14pt plus 2pt minus 2pt}
\setlength{\floatsep}{14pt plus 2pt minus 2pt}
% Float-only pages start at the top instead of stretching gaps above and between tables.
\makeatletter
\setlength{\@fptop}{0pt}
\setlength{\@fpsep}{14pt}
\setlength{\@fpbot}{0pt plus 1fil}
\makeatother

\section{Theoretical Results, Proofs and Counterexamples}
\label{sec:closure}
\label{app:theory}

We first certify when assignment leaves the output unchanged and convert this
certificate into a cohort bound. We then characterize exact repair after a
common effective prefix, using the catalogs that realize each candidate state.

\subsection{Certifying output invariance}

Write $z_{d,1},\ldots,z_{d,W+1}$ for the highest-scoring unmasked extensions,
with scores $q_{d,1}\ge\cdots\ge q_{d,W+1}$, and $O_d$ for the depth-$d$ prefixes
of $O$. Define
\begin{equation}
 z_{d,j}\in O_d\ (1\le j\le W),\qquad
 \gamma_d=\min_{1\le j\le W}(q_{d,j}-q_{d,j+1})>0.
 \label{eq:closure-condition}
\end{equation}
All selected hypotheses must be finite and live. Inactive or finished hypotheses
must not displace them before the common termination.
\begin{proposition}[All-assignment output invariance]
\label{prop:assignment-closure}
Under the stated state, scoring, and termination premises, if
\eqref{eq:closure-condition} holds at every layer of one admissible run, every
catalog in~\eqref{eq:admissible-catalog} returns the same ordered $W$ old paths
and the same item identities.
\end{proposition}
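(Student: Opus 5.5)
Induction on depth $d$, comparing an arbitrary admissible catalog $A$ (with $O\subseteq A$) against the reference run on which \eqref{eq:closure-condition} was checked. The invariant is that after depth $d$ the effective decoder state $S_d$ under $A$ equals the reference state $S_d^{\rm ref}$: the same ordered live prefixes, accumulated scores, inputs and caches. At $d=0$ this holds because the generator, the encoded history (old-only, hence unchanged across assignments), the vocabulary, the beam width and the tie order are all fixed.

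For the inductive step, assume $S_{d-1}=S_{d-1}^{\rm ref}$. By \eqref{eq:full-expansions}, the full-vocabulary expansion multiset $\mathcal E_d(S_{d-1})$ and every score $q(p,a)$ coincide with the reference values, since $r_\theta$ depends only on $(h,S_{d-1};p,a)$ by query-separable deterministic execution. Masking to catalog $A$ only deletes expansions. The first $W$ unmasked extensions $z_{d,1},\dots,z_{d,W}$ lie in $O_d$, and since $O\subseteq A$ every catalog keeps them as legal children with unchanged scores. Any legal child of $A$ outside this list is an expansion of the same state that was not among the reference Top-$W$. Its score is therefore at most $q_{d,W+1}$, which is strictly less than $q_{d,W}$ by $\gamma_d>0$. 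Hence the Top-$W$ selection under $A$ is exactly $z_{d,1},\dots,z_{d,W}$. The strict gaps between consecutive $q_{d,j}$ also fix their internal order without invoking tie-breaking; where ties would matter, the fixed tie order covers it.

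Finally, the premise that the selected hypotheses are finite and live, and that inactive or finished hypotheses cannot displace them before the common termination, gives that the state update is a deterministic function of the selected ordered list. So $S_d=S_d^{\rm ref}$.

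At $d=L$ the returned beam is the same ordered list of $W$ complete paths, all in $O_L=O$. Each such path is the code of a unique old item because $c$ is injective and $c(i)=c_O(i)$ on $\mathcal I_{\mathrm{old}}$, while new items receive paths in $A\setminus O$. The returned item identities therefore agree across all catalogs.

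The main obstacle is the claim that a catalog-independent effective state is all that matters. Concretely, one must rule out influence of masked or non-selected children on the scores: no renormalization over legal children, no catalog-dependent caching, and no finished hypotheses from catalog-specific branches re-entering at termination. I would handle this by stating explicitly that legal children keep their unmasked scores, which is the scoring premise of Section~\ref{sec:update-setting}, and that termination compares only the selected live hypotheses. These premises are exactly what the proposition assumes, so the induction closes.
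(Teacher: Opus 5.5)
Your proposal is correct and follows essentially the same route as the paper's proof. Both argue by induction that the effective pre-score state under any admissible $A$ equals the reference state, using identical unmasked scores, survival of the old-supported Top-$W$ under every mask containing $O$, the strict gaps $\gamma_d>0$ to fix their rank, identical transitions from identical predecessor indices and tokens, and resolution of the returned paths in $O$ to the same old items; your appeal to the tie order is superfluous, since the strict adjacent gaps through position $W+1$ already exclude every relevant tie.
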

The detailed proof is given in Appendix~\ref{app:closure-proof}.

The pass/fail flag is itself allocation-independent: two reference runs share
all states through their first failed layer, including that layer's test.
The checker observes the decoder's unmasked scores and records one flag/margin per
layer; it requires no target labels, candidate pool, or extra model forward pass.
The result is a sufficient certificate: a failed check leaves repair unresolved.
Numerical ties or unverified execution premises return unknown
(Appendix~\ref{app:native-closure}).

\subsection{Bounding improvement over a query cohort}

For a fixed $N$-query cohort $\mathcal Q$, let $C$ be queries with certified
per-query metric upper bounds $m_u(c)\le c_u\in[0,1]$ over legal assignments $c$. Then
\begin{equation}
 \sup_c\frac1N\sum_{u\in\mathcal Q}m_u(c)
 \le \frac{\sum_{u\in C}c_u+N-|C|}{N}=:U_C.
 \label{eq:metric-bound}
\end{equation}
Ordered-output invariance permits $c_u=m_u^0$, its fixed metric value.
For a metric on new targets, any sound no-new Top-$K$ certificate permits
$c_u=0$, even when old outputs differ across states, giving $U_C=1-|C|/N$.
Old-target contributions use their own certified bounds. Unknown and mixed-input
queries contribute the optimistic value one. One shared allocation obeys
\begin{equation}
 \underbrace{\sup_c\frac1N\sum_u m_u(c)}_{\mathrm{OPT}_{\rm shared}}
 \le\underbrace{\frac1N\sum_u\sup_c m_u(c)}_{\mathrm{OPT}_{\rm separate}}
 \le U_C.
 \label{eq:bound-slack}
\end{equation}
An observed update exceeding $U_C$ establishes a separation from every legal
allocation of the parent. A value below $U_C$ leaves the comparison unresolved.
Equation~\eqref{eq:bound-slack} also locates a source of slack: different queries
can require incompatible assignments. Thus individual witnesses and cohort
exclusion answer distinct questions about the same allocation family.

\subsection{Exact catalogs and fixed-count reachability}
\label{sec:exact-repair}

The sufficient certificate requires old prefixes to occupy the whole beam.
We now allow the beam to branch in the final two layers and solve reachability
through the catalogs that induce each state.

\paragraph{Common-prefix setting.}
Assume a common, valid effective beam $b$ of $W$ old-supported prefixes at depth
$L-2$ for every containing catalog, independently of its new-item count.
Successful prefix checks in Section~\ref{sec:closure} establish this premise.
When the premise is known for a single $m$, the result applies at that count;
reuse across counts requires a common state over the count range. Let $K\le W$.
A fixed total order extends scores and is preserved when candidates are deleted.
Decoder evaluations verify strict relevant gaps and leave ambiguous ties unknown.

Let $E$ extend $b$ to depth $L-1$, and let $M=E\cap O_{L-1}$ be its mandatory
candidates. At least $W$ exist. If $\tau$ is their $W$th score, optional candidates
strictly below $\tau$ cannot survive. Enumerating subsets of
$Q=\{e\in E\setminus M:q(e)\ge\tau\}$ and merging identical ordered Top-$W$
indices therefore covers every possible penultimate beam $S$; put $q=|Q|$.

For one such $S$, let $F_S$ be the unselected optional prefixes preceding its last
member. Write $D(e)$ for the full leaf descendants of $e$ and set
\begin{equation}
 Z_S=\mathcal Z\setminus\bigcup_{e\in F_S}D(e),\quad
 U_S=Z_S\setminus O,\quad C_S=|U_S|.
 \label{eq:maximal-catalog}
\end{equation}
The selected nonold prefixes require support groups
$\mathcal G_S=\{D(s):s\in S\setminus O_{L-1}\}$, with $r_S=|\mathcal G_S|$.
Each group needs an occupied descendant to keep its prefix in the beam.
Together, forbidden subtrees and required support determine the complete
catalog family for $S$.

\begin{theorem}[Exact catalog family]
\label{thm:catalog-family}
A containing catalog reaches $S$ if and only if
\begin{equation}
 O\subseteq A\subseteq Z_S,\qquad A\cap G\ne\varnothing\quad(G\in\mathcal G_S).
 \label{eq:catalog-family}
\end{equation}
It does so with exactly $m$ new leaves if and only if $r_S\le m\le C_S$.
\end{theorem}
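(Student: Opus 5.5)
The plan is to prove the two claims in order: first the set characterization \eqref{eq:catalog-family}, then the count range. Both rest on how the penultimate beam is formed from the common depth-$(L-2)$ beam $b$. Under the common-prefix premise every containing catalog $A$ produces the same effective state at depth $L-2$. Each extension $e\in E$ then keeps its unmasked score $q(e)$. It is \emph{legal} in $A$ exactly when $A\cap D(e)\neq\varnothing$. The penultimate beam is the ordered Top-$W$ of the legal extensions under the fixed total order, and this order is preserved under deletion.

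For the \emph{if} direction of \eqref{eq:catalog-family}, take $A$ satisfying both conditions.
\begin{itemize}
\item Every mandatory candidate in $M$ is legal, because $O\subseteq A$ and $M\subseteq O_{L-1}$.
\item Every selected nonold prefix is legal by the support-group condition.
\item No $e\in F_S$ is legal, because $A\subseteq Z_S$ excludes all of $D(e)$.
\end{itemize}
Every legal candidate that is not in $S$ therefore lies strictly after the last member of $S$ in the total order. Such a candidate is either an optional prefix below that member or a mandatory/optional candidate not preceding it; by the definition of $S$ as an ordered Top-$W$ state, it cannot enter. So the Top-$W$ of legal extensions is exactly $S$, with the same order and scores.

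For the \emph{only if} direction, suppose $A$ reaches $S$. Each member of $S$ must be legal; for nonold members this is exactly $A\cap D(s)\neq\varnothing$. Suppose some $e\in F_S$ had an occupied descendant. Then $e$ would be legal and precede the last member of $S$, so it would displace that member or reorder the beam, a contradiction. Hence $A\cap D(e)=\varnothing$ for all $e\in F_S$, that is, $A\subseteq Z_S$. The condition $O\subseteq A$ is part of containment. The step needing the most care is this displacement argument. I must check that $F_S$ consists precisely of the unselected legal-capable optional prefixes ranked above the cutoff, that mandatory prefixes never lie in $F_S$, and that the fixed tie order makes ``precedes'' unambiguous. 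This last point is what turns the Top-$W$ comparison into a strict set statement.

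For the count claim, the new leaves are $A\setminus O\subseteq U_S$, so $m\le C_S$. The groups in $\mathcal G_S$ are disjoint, being descendant sets of distinct depth-$(L-1)$ prefixes. They contain no old leaves, since their prefixes are nonold, and they lie inside $Z_S$, since selected prefixes are not in $F_S$. Each group therefore needs its own new leaf, giving $m\ge r_S$.

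Conversely, for $r_S\le m\le C_S$, I construct $A$ explicitly:
\begin{enumerate}
\item Pick one leaf from each group; the picks are distinct by disjointness, giving $r_S$ leaves.
\item Add any $m-r_S$ further leaves of $U_S$, which is possible because $C_S-r_S\ge m-r_S$.
\end{enumerate}
The resulting $O\cup B$ satisfies \eqref{eq:catalog-family}, and by the first part it reaches $S$ with exactly $m$ new leaves. The one assumption to confirm here is that $U_S\cap G=G$ for each support group, i.e., that the groups avoid both $O$ and the forbidden subtrees.
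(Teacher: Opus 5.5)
Your proposal is correct and follows essentially the same route as the paper. Both derive \eqref{eq:catalog-family} from two facts: every selected prefix must be eligible, and every unselected optional prefix above the cutoff must be ineligible. Both then get the count range from disjoint, nonempty support groups inside $U_S$, padded with arbitrary further leaves of $U_S$. The one check you leave implicit, which the paper states, is that forbidden subtrees contain no old leaf because their roots are optional. Hence $O\subseteq Z_S$, so your constructed catalog $O\cup B$ does satisfy $A\subseteq Z_S$.
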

The detailed proof is given in Appendix~\ref{app:exact-repair}.

For final alphabet size $B_L$, $C_S=|\mathcal Z|-|O|-B_L|F_S|$ is symbolic.
It includes compatible leaves outside the current beam, which can supply padding.

\paragraph{Which admission counts permit a repair?}
After fixing $S$, a repair must both support the beam and keep enough
higher-ranked leaves out of the catalog. Score all terminal children of $S$.
For a new terminal leaf $y$, let $R_S(y)$ be its full
one-based rank. Let $a_S(y)$ count the old leaves preceding it and $g_S(y)$ count
the required groups whose \emph{every} leaf precedes it. Put $\delta_S(y)=0$ if $y$ belongs to a
required group, and $1$ otherwise.
\begin{theorem}[Exact fixed-count reachability]
\label{thm:count-repair}
There exists an admissible allocation of exactly $m$ new items reaching $S$ and
returning $y$ in Top-$K$ if and only if
\begin{equation}
 \boxed{a_S(y)+g_S(y)<K,\qquad
 r_S+\delta_S(y)\le m\le C_S-\max\{0,R_S(y)-K\}.}
 \label{eq:repair-interval}
\end{equation}
A satisfying case constructively yields a catalog and an assignment of a
specified new target to $y$.
\end{theorem}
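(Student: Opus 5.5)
Take Theorem~\ref{thm:catalog-family} as given: the catalogs reaching $S$ are exactly the sets $A$ with $O\subseteq A\subseteq Z_S$ that meet every support group. With $S$ fixed, the final beam consists of the terminal children of $S$ that lie in $A$, in the fixed total order. Since deleting candidates preserves this order, the rank of $y$ among the returned items is
\[
1+\bigl|\{x\in A:\ x \text{ precedes } y\}\bigr| .
\]
Thus $y$ is in Top-$K$ exactly when $y\in A$ and fewer than $K$ leaves of $A$ precede $y$. Here $A$ consists of $O$ together with $m$ new leaves from $U_S$. The whole statement reduces to a combinatorial question about choosing $B=A\setminus O\subseteq U_S$ with $|B|=m$, $y\in B$, $B\cap G\neq\varnothing$ for all $G\in\mathcal G_S$, and $a_S(y)+|\{x\in B: x\text{ precedes }y\}|<K$. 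We also use that $a_S(y)$ counts the old outrankers, all of which are forced into $A$.

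\textbf{Necessity.} The $a_S(y)$ old outrankers are always present. Each group whose every leaf precedes $y$ must contribute a distinct new outranker, because the groups are disjoint descendant sets of distinct prefixes. Hence $a_S(y)+g_S(y)<K$. For the lower bound on $m$, the disjoint groups need $r_S$ distinct leaves. If $y$ lies in no group, $y$ is an additional required leaf, which gives $m\ge r_S+\delta_S(y)$. For the upper bound, the number of terminal leaves preceding $y$ is $R_S(y)-1$, and at most $K-1$ of them can be in $A$. So at least $R_S(y)-K$ preceding leaves are excluded from $A$ when this quantity is positive, and all of them lie in $U_S$ since old leaves are always present. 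Therefore $m\le C_S-\max\{0,R_S(y)-K\}$. For this we need $R_S(y)$ to be ranked among the leaves compatible with $Z_S$, meaning leaves of children of $S$, which all lie in $Z_S$. I will verify that every terminal child of $S$ lies in $Z_S$, because descendants of unselected prefixes are not children of $S$.

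\textbf{Sufficiency.} The construction follows the greedy described in the main text. Put $y\in B$, and let $y$ cover its own group if it has one. For each remaining group, pick a leaf that does not precede $y$ if one exists; otherwise pick any leaf. Exactly $g_S(y)$ groups are forced to use outrankers, so after this step the preceding count is $a_S(y)+g_S(y)\le K-1$ and $|B|=r_S+\delta_S(y)\le m$.

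Then pad up to $m$. First add new leaves of $U_S$ that do not precede $y$; there are $C_S-(R_S(y)-1-a_S(y))$ such slots counting those already used. Then add preceding new leaves until the preceding count reaches $K-1$. The main obstacle is the bookkeeping that shows this padding never needs more than $C_S-\max\{0,R_S(y)-K\}$ leaves. Concretely, the total available budget is
\[
\bigl(\#\text{non-outranking new leaves}\bigr)+\bigl(K-1-a_S(y)\bigr)\ \text{preceding new leaves},
\]
which equals $C_S-(R_S(y)-K)$ when $R_S(y)>K$ and equals $C_S$ otherwise. I expect the corner cases to need care: group leaves that precede $y$ already count toward the $K-1$ allowance, and leaves counted in $R_S(y)$ lie in the beam children while padding leaves may lie outside them. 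Padding leaves outside $S$'s children never compete at the terminal layer, so they count as non-outrankers.

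Finally, by Theorem~\ref{thm:catalog-family} the resulting $A$ reaches $S$. Assigning the specified new target to $y$ and the other new items bijectively to $B\setminus\{y\}$ gives the explicit legal assignment.
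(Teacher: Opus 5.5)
Your proposal is correct and takes essentially the same route as the paper. Both split $U_S\setminus\{y\}$ into optional outrankers and compatible non-outrankers (the latter including leaves outside the terminal expansion), derive the same necessary counts, and use the same greedy: support each group with a non-outranker when possible, then pad with non-outrankers before outrankers. Your budget count $C_S-h+\min\{h,K-1-a_S(y)\}$ is equivalent to the paper's best-attainable rank $1+a_S(y)+\max\{g_S(y),m-1-t\}$ under $R_S(y)-1=a_S(y)+h$.
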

The detailed proof is given in Appendix~\ref{app:exact-repair}.

Equation~\eqref{eq:repair-interval} separates three requirements. Old outrankers
and unavoidable support must leave room in Top-$K$. The lower endpoint supplies
enough new leaves to support the state and target. The upper endpoint reserves
enough unoccupied paths to remove optional outrankers. These requirements connect
catalog size directly to retrieval, beyond checking whether a target path exists.

\subsection{Constructing repairs and accounting for search}
\label{sec:repair-algorithm}

\paragraph{Exact decision and witnesses.}
Enumerate $S$, discard states with $m\notin[r_S,C_S]$, obtain their complete
terminal scores, and apply~\eqref{eq:repair-interval} to every new leaf.
Within the common-prefix setting, the absence of a feasible pair certifies
that no $m$-admission catalog returns a new item in Top-$K$. A feasible pair gives one query's
witness; different queries need not share that allocation. With unrestricted
admission count, universal no-new retrieval requires every state's full
Top-$K$ to contain only old leaves: if a
new leaf is in a state's full Top-$K$, $Z_S$ realizes that state and outcome.
Such a leaf has a witness using at most $r_S+1\le W+1$ new paths. Computation caps
and numerical ambiguity still mean unknown, not an exact negative.

\paragraph{Algorithm and cost boundary.}
One terminal sort plus cumulative old counts and required-group endpoints gives
all intervals in $O(WB_L\log(WB_L))$ time and $O(WB_L)$ storage after scoring.
There are at most $2^q$ distinct states, each with its own decoder continuation.
Algorithm~\ref{alg:exact-repair} summarizes the construction. Supporting the
selected state and filling with non-outrankers first attains the rank bound.
The cost depends on critical branches and terminal scoring, in addition to
writing the allocation. Appendix~\ref{app:exact-repair} gives the full procedure
and complexity accounting. Unverified execution premises or ambiguous numerical
ties return unknown before an affected decision is made.
\begin{theorem}[Complete-state oracle lower bound]
\label{thm:oracle-bound}
For every $q\ge1$, a positive normalized three-layer score family with
$W=q+1$, $K=1$, and $m=q+1$ has $2^q$ reachable penultimate states and $q$ critical
optional prefixes. Any deterministic exact algorithm deciding universal no-new
Top-1 through arbitrary complete-state terminal-score oracle access requires
$2^q$ state queries in the worst case.
\end{theorem}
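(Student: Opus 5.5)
The argument has two parts: a lower-bound construction (an adversarial score family) and an adversary argument against any deterministic algorithm.

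\textbf{Construction.} Take $L=3$ with first layer trivial, so the common effective prefix at depth $L-2=1$ is a single old-supported prefix. For $W=q+1$ beam slots, we need $W$ mandatory candidates at depth $2$, but only those with scores $\ge\tau$ matter. I would use $W$ old-supported depth-$1$ prefixes forming $b$. At depth $2$, create:
\begin{itemize}
\item $q+1$ mandatory old-supported candidates $M=\{o_0,\dots,o_q\}$;
\item $q$ optional candidates $e_1,\dots,e_q$ whose normalized scores interleave above the mandatory cutoff $\tau$.
\end{itemize}
Each optional $e_j$, when supported by an occupied descendant, enters the beam and displaces one mandatory prefix. If $e_j$ outranks $o_j$ and every other $o_i$ outranks it, then selecting $e_j$ removes exactly $o_j$ from the Top-$W$. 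The $2^q$ subsets of $\{e_1,\dots,e_q\}$ therefore give $2^q$ distinct ordered penultimate states. Each state is reachable with exactly $m=q+1$ new paths, since $r_S\le q<m\le C_S$ by Theorem~\ref{thm:catalog-family} after padding with compatible leaves. Positivity and normalization are arranged by choosing softmax-style probabilities with strict gaps, all $q$ optional prefixes being critical.

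\textbf{Adversary.} Terminal scores are supplied by an oracle that, given a complete state $S$, returns all terminal child scores of $S$. The adversary answers every queried state with scores under which the Top-$1$ terminal child is an old leaf. Since $K=1$, this forces $a_S(y)\ge 1$ for every new $y$, so Theorem~\ref{thm:count-repair} gives no feasible pair for that state. Because each state has its own decoder continuation, answers for distinct states are independent.

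Suppose the algorithm halts after fewer than $2^q$ queries. Then some state $S^\ast$ is unqueried. The adversary can now choose between two consistent instances:
\begin{itemize}
\item one in which $S^\ast$'s Top-$1$ terminal child is old, so universal no-new holds;
\item one in which it is a new leaf $y$ with $a=g=0$ and $\delta$ accounted for in the count.
\end{itemize}
In the second instance, Theorem~\ref{thm:count-repair}, or the witness remark using $Z_{S^\ast}$, gives a repair with $m=q+1$ new paths. For this I need $r_{S^\ast}+\delta\le q+1$ and $m\le C_{S^\ast}-\max\{0,R-1\}=C_{S^\ast}$, so $C_{S^\ast}\ge q+1$ must hold, which can be ensured by a large enough final alphabet. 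The two instances agree on every queried answer, so the algorithm's output is wrong on one of them. Hence at least $2^q$ queries are required.

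\textbf{Main obstacle.} The main obstacle is making terminal answers genuinely independent across states while keeping the score family fixed and normalized at depth $3$. Terminal scores depend on the state only through the decoder continuation (cache and state contents), so I would model the generator as state-dependent at the final layer. Both alternatives must also stay consistent with the fixed penultimate scores. I would verify this by letting the final-layer distribution be an arbitrary positive normalized function of the full state $S_{L-1}$, which the premises of Section~\ref{sec:update-setting} permit.
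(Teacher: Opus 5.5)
Your adversary argument is the paper's. You answer every queried state with an old leaf first, so $a_S(y)\ge1=K$ rules out every new leaf there. An unqueried $S^\ast$ can then be made bad by changing only its terminal vector. The witness at $m=q+1$ exists because $r_{S^\ast}+\delta_{S^\ast}(y)\le q+1\le C_{S^\ast}$; the paper spells it out as $y$, one support per selected optional prefix, and padding.

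The gap is in producing the $2^q$ distinct states that serve as hiding places. Your displacement mechanism cannot be realised. Under either reading of ``it'', the condition makes $o_j$ the lowest mandatory prefix for every $j$, which is impossible once $q\ge2$. More basically, Top-$W$ truncation always drops the currently lowest eligible candidates. Which mandatory prefix leaves therefore depends only on how many optionals are enabled, not on a pairing with $e_j$. Distinctness has to come from every enabled optional surviving: if an enabled $e$ falls outside the Top-$W$, deleting it leaves the beam unchanged, and two subsets collide. A genuine interleaving $o_0\succ e_1\succ o_1\succ e_2\succ\cdots\succ e_q\succ o_q$ already fails at $q=2$. There $\{e_1\}$ and $\{e_1,e_2\}$ both give the beam $(o_0,e_1,o_1)$. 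Only three states remain, so your argument would yield a bound of three, not four.

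Enabling all $q$ optionals shows what is needed. The lowest optional already has $q-1$ optionals above it, so at most one mandatory prefix may precede it. The optionals must sit as a block above all but at most one mandatory, rather than interleave. The paper realises this with an anchor root of mass $3/4$ whose old child dominates. It adds $q$ lower roots with masses within a factor $1.01$, each split $2/5$ old and $3/5$ optional, so all lower optional children precede all lower old children. The anchor's own optional child stays below $\tau$. Any subset then survives intact, the displaced mandatories are the lowest lower-old children, and all $2^q$ states are distinct.

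Putting $y$ under the anchor's old prefix, which lies in every state, also makes your bad instance concrete. Changing only that prefix's terminal split from $0.9/0.1$ to $0.01/0.99$ puts $y$ first, while every other oracle answer is unchanged.

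Finally, correct your opening slip: the common depth-$1$ beam must consist of $W$ old-supported prefixes, not a single one. Any further roots must be old-supported and strictly lower, so that padding cannot alter that beam. The paper's low-mass off-beam roots play this role and also supply the padding leaves.
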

A query returns the entire terminal-score vector for one ordered effective state.
The proof places the successful repair in one hidden state
(Appendix~\ref{app:exact-repair}). It identifies the worst-case information cost
of exact decisions when complete-state scores are accessed through this oracle.
Prefix-local scoring can supply additional structure beyond that interface.

\subsection{Proof Details and Diagnostic Counterexamples}
\label{app:closure-proof}

\subsubsection{Effective-state induction}

Proposition~\ref{prop:assignment-closure} concerns the state consumed by the next scoring step,
and the final ordered predictions. It does not require every temporary decoder variable to agree.
Let $S_d^A$ contain the ordered live hypotheses, their accumulated scores, model inputs/positions,
and the cache state after the layer-$d$ transition under catalog $A$. All random choices and
numerical execution conditions are fixed. Query separability means that changing another query's
values, without changing this query's numerical execution shape, does not change its scoring
function or transition. This is a stated numerical premise.

For an arbitrary $A$ satisfying~\eqref{eq:admissible-catalog}, assume
$S_{d-1}^A=S_{d-1}^{A_0}$ for the reference $A_0$. The unmasked score vector is identical.
Every reference top-$W$ expansion belongs to $O_d$, hence to the prefix closure of $A$.
The positive minimum adjacent gap fixes its rank among all codebook extensions.
An admissible mask can delete only lower-ranked competing extensions, so the same ordered
top $W$ is selected. Identical predecessor indices and tokens produce identical next inputs
and cache reorderings through the decoder transition. Thus $S_d^A=S_d^{A_0}$ for every layer
whose state is consumed by a subsequent score computation. At the common terminal layer,
the selected paths and scores agree; all paths lie in $O$ and resolve to the same old items.

Catalog independence of the \emph{certificate} follows from the same induction until the first
failed condition. Before that layer the states coincide, so the failure agrees too. No claim is
made about later flags on an already failed trace. An old-only catalog can therefore serve as
the reference if it meets the decoder's validity assumptions. Same-length new aliases do not
help: any such catalog is still included in~\eqref{eq:admissible-catalog}.

For~\eqref{eq:metric-bound}, fix any legal assignment $c$ and split the cohort
sum into $C$ and $\mathcal Q\setminus C$. By assumption, the first sum is at
most $\sum_{u\in C}c_u$; each remaining contribution is at most one. Dividing
by $N$ and taking the supremum over $c$ proves the bound. Under ordered-output
invariance, the certified contributions equal their fixed reference values.
The supremum need not attain the bound because all queries share one global assignment.
For a new-target metric the certified contribution is zero. For an old-target metric it is
its fixed reference contribution, not necessarily zero.

\subsubsection{Final-only and static-support checks are insufficient}

Consider two constructed scorers, a two-level code space with four symbols per level,
$W=2$, and old paths $O=\{a0,a1,c1,c2,c3\}$. In the reference allocation the
single new item occupies $a2$. Both its atomic symbols and its first prefix occur in old
paths. The two scorers share all second-level probabilities:
\begin{center}
\begin{tabular}{lrrrr}
\toprule
Prefix & 0 & 1 & 2 & 3\\
\midrule
$a$ & .60 & .38 & .01 & .01\\
$b$ & .26 & .25 & .25 & .24\\
$c$ & .97 & .01 & .01 & .01\\
$d$ & .25 & .25 & .25 & .25\\
\bottomrule
\end{tabular}
\end{center}
Their root probabilities over $a,b,c,d$ are $(.75,.04,.20,.01)$ and
$(.49,.30,.20,.01)$, respectively. All values are evaluated as exact rational numbers.

For the first scorer, full-codebook search keeps $a,c$ and then selects old paths
$a0=.45$ and $a1=.285$, ahead of $c0=.194$. Every layer satisfies the strict old-prefix
condition. Enumerating all 11 unoccupied paths as the new item's allocation gives the same
old prediction list and no new hit.

For the second scorer, full-codebook search keeps $a,b$, then also returns old paths
$a0=.294$ and $a1=.1862$. The published catalog permits only root prefixes $a,c$;
new $a2=.0049$ still loses. However, allocating the new item to $c0$ instead gives
$c0=.194>a1=.1862$, so it enters at rank two. All old paths and identities stay fixed.
Full-codebook search is not a monotone envelope of returned paths in its subcatalogs:
branch $b$ removes $c$ early but produces no competitive final leaf.
The two scorers therefore agree on static support and both simple final-only observations,
yet differ in whether allocation repair is possible. The complete example uses 26 exact beam
calls: two scorers, each with the current catalog, full codebook, and all 11 legal allocations.
The construction isolates the effect of intermediate beam competition.

\subsubsection{Why masking and old eligibility matter}

Local renormalization changes the score function with the catalog. Let $W=L=2$,
$O=\{00,01,10\}$, and use the following probabilities over four symbols:
\[
p_{\rm root}=(.59,.40,.005,.005),\qquad
p_0=(.65,.34,.005,.005),\qquad p_1=(.01,.45,.27,.27).
\]
The unrenormalized full-codebook top two are old paths $00=.3835$ and $01=.2006$.
Renormalizing over valid children and then adding new path 11 instead yields
$P(11)=100/253\approx.395257$, ahead of $P(00)=3835/9801\approx.391287$.
The original mask-only condition cannot be transferred to that decoder.

Old lookup validity is also insufficient. With width one and fixed scores .6 for an old path
and .4 for a new path, hiding the old candidate returns the new item even if a historical
resolver still maps the old path correctly. Our $O\subseteq A$ requirement includes eligibility.
Reserved tombstones, versioned identifiers, vocabulary expansion, and different input encodings
must therefore be analyzed under their actual interfaces.

\subsection{Exact-repair proofs and decision procedure}
\label{app:exact-repair}

\subsubsection{Catalog family and fixed-count endpoints}

The common-prefix premise covers every containing catalog
$O\subseteq A\subseteq\mathcal Z$, irrespective of $|A\setminus O|$.
Consequently it also covers the maximal catalog $Z_S$ and all padding choices.
If that premise is established only for a fixed count, the theorem is pointwise
at that count; terminal scores cannot automatically be reused at other counts.
Effective next state is determined by the common prior state and ordered selected
candidate indices, including the corresponding decoder cache update.

All mandatory prefixes are enabled. For an obtainable ordered beam $S$, no
unselected mandatory candidate can precede its final member. Every selected
optional prefix must have an occupied descendant, and every unselected optional
prefix above the cutoff must have none. Conversely those conditions leave all
selected prefixes eligible and all other eligible candidates below the cutoff.
This proves the necessity and sufficiency of~\eqref{eq:catalog-family}. The sets
$D(s)$ for selected optional prefixes are disjoint, nonempty subsets of $U_S$;
selecting one leaf per group realizes $r_S$ admissions. Any remaining member of
$U_S$ can be added without changing the beam, realizing every count through $C_S$.
Forbidden subtrees are also disjoint and contain no old leaf, giving the symbolic
capacity $C_S=|\mathcal Z|-|O|-B_L|F_S|$.

For a specified new leaf $y$, partition $U_S\setminus\{y\}$ into the $h$ optional
terminal outrankers $H$ and the $t=C_S-1-h$ remaining compatible leaves $T$.
The latter include all compatible leaves outside the terminal expansion.
A required group wholly in $H$ forces one selected outranker. Every other group
contains $y$ or a member of $T$. The disjointness of groups makes the minimum
support count $r_S+\delta_S(y)$ exact. Any $m$-element new set containing $y$
requires at least $m-1-t$ leaves from $H$, and at least $g_S(y)$ for support.
Thus its best possible terminal rank is
\[
 1+a_S(y)+\max\{g_S(y),m-1-t\}.
\]
Choose $y$, one $T$ member of each unsupported group whenever possible, and one
$H$ member otherwise. Fill unused $T$ capacity before selecting further $H$
members. This attains the lower bound for every
$r_S+\delta_S(y)\le m\le C_S$. Using $R_S(y)-1=a_S(y)+h$ gives the upper endpoint
$C_S-\max\{0,R_S(y)-K\}$ and the separate obstruction $a_S(y)+g_S(y)<K$.
Equality at either endpoint is permitted. Too few admissions cannot support
both the state and target; near a full catalog, too many can force outrankers
back in. The interval describes the reachability of the specified leaf as $m$ varies.

If the unrestricted full terminal Top-$K$ contains a new leaf, its full rank is
at most $K$ and $Z_S$ realizes that outcome. Supporting each optional selected
prefix and $y$ needs at most $r_S+1\le W+1$ new leaves; deleting the other optional
leaves cannot worsen $y$ while those supports and forbidden-prefix constraints
preserve $S$. Conversely, $K$ mandatory old leaves preceding every new leaf
remain present in every compatible catalog. This gives exact unrestricted no-new certification over the complete
reachable state set. The branch experiment in
Appendix~\ref{app:assignment-diagnostics} evaluates the sufficient no-new
predicate used for its coverage comparisons.

\subsubsection{Constructive decision procedure and complexity}

\begin{algorithm}[t]
\caption{Fixed-count repair after a certified common prefix}
\label{alg:exact-repair}
\label{alg:repair-overview}
\begin{algorithmic}[1]
\Require Fixed input/model/decoder, mandatory old paths $O$, admission count $m$, $K\le W$
\If{common depth-$(L-2)$ effective state is not established}
  \State \Return unknown
\EndIf
\State Obtain its global expansion scores; form mandatory $M$ and critical optional $Q$
\State Enumerate subsets of $Q$ with $M$ retained; merge identical ordered Top-$W$ states
\For{each state $S$}
  \State Form forbidden prefixes $F_S$, support groups $\mathcal G_S$, and capacity $C_S$
  \If{$r_S\le m\le C_S$}
    \State Obtain terminal scores at the full effective state $S$
    \State Sort once; compute ranks, old-before counts, and required-group endpoints
    \For{each new terminal leaf $y$ satisfying \eqref{eq:repair-interval}}
      \State Select $y$ and support every uncovered group using a non-outranker when possible
      \State Pad to $m$ new leaves with compatible non-outrankers before outrankers
      \State Assign the designated target to $y$ and biject the other new items to remaining leaves
      \State \Return reachable and the complete allocation witness
    \EndFor
  \EndIf
\EndFor
\State \Return unreachable
\end{algorithmic}
\end{algorithm}
Every unverified numerical tie, execution premise, or exhausted computation cap
returns unknown; Algorithm~\ref{alg:exact-repair}'s final negative requires a
complete valid enumeration. A fixed total secondary order handles abstract ties.
Applications to trained models retain the tie order of the implemented decoder.

A group's contribution to $g_S(y)$ starts just after its worst-ranked leaf.
After one sort, cumulative old flags and these group endpoints give all count
intervals in linear additional time. If $n\le2^q$ distinct states survive the
count filter, a straightforward implementation reads $WB_{L-1}$ expansion
scores, spends $O(2^q(|M|+q)\log(|M|+q))$ time on subset planning, requests $n$
terminal-state evaluations, and spends $O(nWB_L\log(WB_L))$ on terminal ranking.
Intervals do not require rescoring for each $m$ within the common-state range.
Explicitly writing a catalog of $m$ new identifiers still costs at least $m$
outputs; symbolic capacity does not remove that cost.

\subsubsection{Proof of the state-oracle lower bound}

For each $q\ge1$, use three layers with root alphabet size $2q+1$ and binary
second/final alphabets. Every root has one old leaf at suffix $(0,0)$.
Set $W=q+1$. The anchor root has probability $3/4$. Distribute total mass
$249/1000$ among $q$ lower roots in proportion to
$1+i/(100q)$, $i=1,\ldots,q$. Each of $q$ off-beam roots has mass $1/(1000q)$.
All roots are old-supported, so the ordered beam after layer one is common:
the anchor and all lower roots. The padding roots are always excluded.

Split the anchor into old/optional probabilities
$1-1/(10000q)$ and $1/(10000q)$; split each lower root into $2/5$ old and $3/5$
optional. Lower root masses vary by less than a factor $1.01$, so all lower
optional children precede all lower old children. The anchor old child dominates
them, and the anchor optional child is below the lowest mandatory score. Exactly
$q$ optional children are critical. Every subset survives entirely, with mandatory
lower old children filling unused positions; all $2^q$ states are distinct.

The anchor's old terminal leaf wins under the good oracle, with old/new terminal
probabilities $.9/.1$. Other prefixes can use $.6/.4$; even their entire accumulated
mass is smaller than the winning anchor leaf. For any designated state $S^*$,
change only the anchor terminal distribution there to $.01/.99$. Its new leaf
$y$ then wins. Every probability remains positive and normalized, and all other
state answers agree with the good oracle. Arbitrary dependence on the complete
ordered state permits this family.

Each off-beam root has three optional leaves, providing $3q$ padding leaves that
cannot alter the first beam. A fixed-count witness for any state includes $y$,
one leaf for each selected optional prefix, and enough such padding to reach
$m=q+1$. It introduces no forbidden prefix and reaches the designated state.
Thus every hidden bad state has a witness at the same admission count.

One oracle query supplies the complete terminal vector for one specified ordered
state. On a run receiving only good answers, any unqueried state might be $S^*$.
The good oracle requires a universal-no-new answer, whereas the alternative
requires the opposite answer. Their observed responses are identical, so a
deterministic exact algorithm cannot stop before all $2^q$ state queries.
This lower bound concerns complete-state terminal-score oracle access.
Network weights, symbolic or prefix-local score structure, and richer batched
oracles provide information beyond this model.

\section{Construction, Adaptation and Completion Details}
\label{app:implementation}
\label{app:tuned-construction}
This appendix develops the steps from assignment repair to candidate completion.
A worked example first shows how changing catalog occupancy and evaluating
additional candidates each affect retrieval. We then specify shared-map
construction, generator adaptation and completion, with the final T5
configurations listed in Table~\ref{tab:current-scoring}.

\subsection{A worked example of repair and completion}
\label{app:worked-example}

Figure~\ref{fig:repair-completion-example} follows an old item $a$ and a new
item $x$ under a fixed generator and old-item history, with two-token codes
and $W=K=1$. Their initial codes are P0 and Q0. Empty leaves are masked
without renormalizing token probabilities. The beam selects prefix P and
returns only $a$, giving the initial fully scored catalog set $E_0=\{a\}$.

\begin{figure}[!htbp]
\centering
\includegraphics[width=\linewidth]{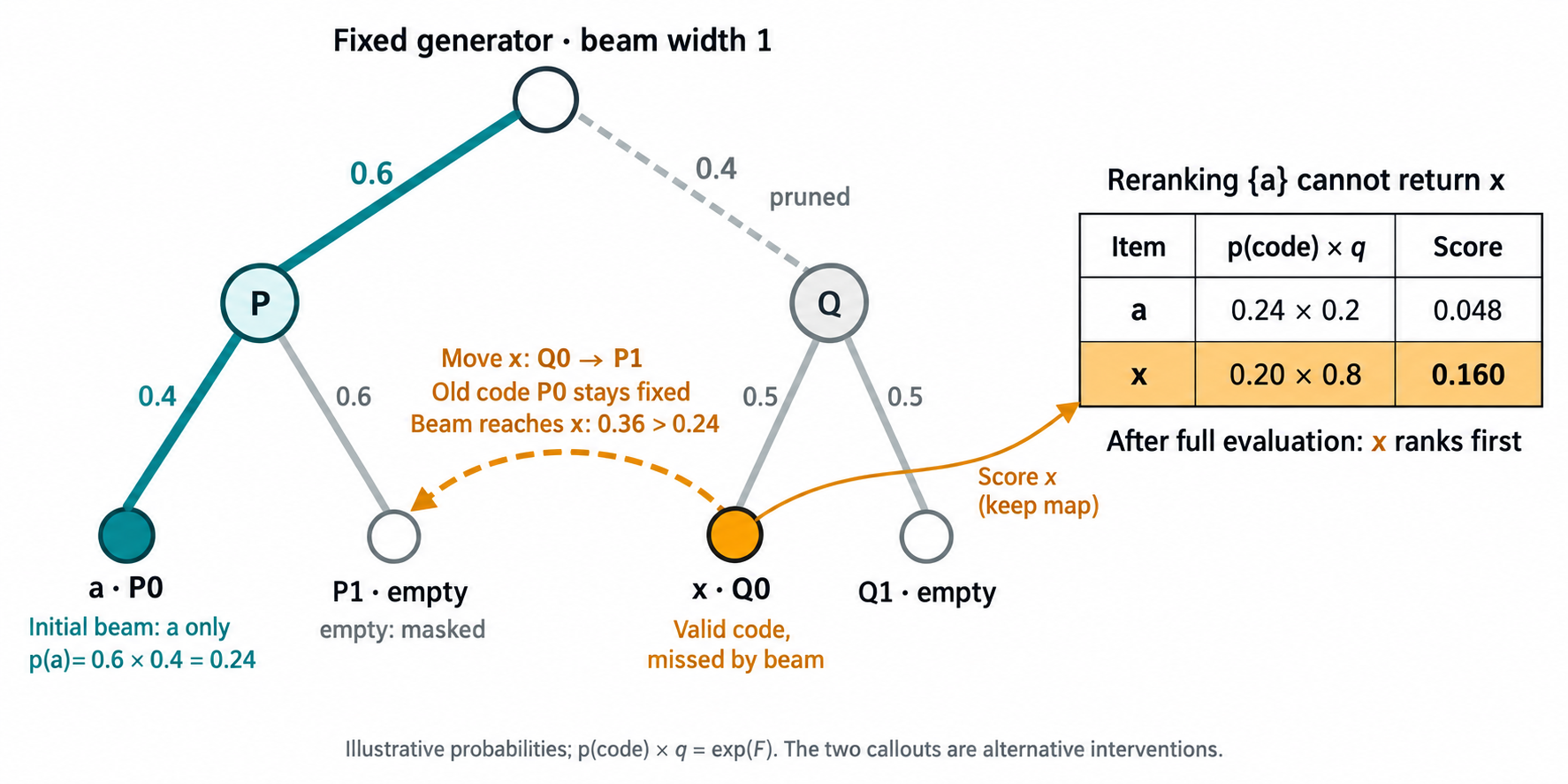}
\caption{Repair and completion in a two-item catalog with illustrative
probabilities. The teal path is the initial beam; hollow leaves are
unoccupied. The dashed arrow changes the assignment, whereas the solid
arrow keeps it fixed and adds a complete likelihood evaluation. These
interventions are applied separately to the initial miss. Suffix
probabilities below the pruned Q branch and the completed score of $x$
are shown retrospectively.}
\label{fig:repair-completion-example}
\end{figure}

The dashed arrow moves $x$ from Q0 to P1 while retaining the old code P0.
The beam then returns $x$, whose code probability $0.36$ exceeds $a$'s
$0.24$. The previously occupied Q0 becomes empty, completing a repair
with one occupied-leaf replacement.

The solid arrow retains the original map. With known collaborative item
scores $q_a=0.2$ and $q_x=0.8$, choose $(\lambda,\gamma,b)=(1,1,0)$ in
\eqref{eq:calibrated-correction}. The figure displays
$\exp(F_i)=p_\theta(c(i)\mid h)q_i$, preserving the ranking of
\eqref{eq:completion-score}. Reranking $E_0$ still cannot return $x$.
Evaluating Q0 expands the scored set to $\{a,x\}$; $x$ then ranks first
with $0.160>0.048$, while both identifiers remain fixed.

In the full method, construction and adaptation determine the shared map
and generator on which candidate completion operates. The following
subsections detail these stages in that order.
\FloatBarrier

\needspace{15\baselineskip}
\subsection{From a repair witness to a shared assignment}

The example repairs one query by replacing a single occupied leaf. General
repairs must also satisfy prefix-support and rank constraints while preserving
catalog size. For a fixed state and target leaf $y$, let $U$ be the admissible
new leaves, $B_0$ the
current $m$-leaf occupancy, and $G_j$ the disjoint support groups.
Let $H$ contain optional leaves ranked above $y$, and let $a$ count mandatory
old outrankers. The exact-state and rank constraints give the optimization
\[
 \min_{B\subseteq U}|B\setminus B_0|
 \quad\text{s.t.}\quad |B|=m,\quad y\in B,\quad
 B\cap G_j\ne\varnothing\ (\forall j),\quad
 |B\cap H|\le K-1-a.
\]
In the experiments, $K=10$ matches the primary recommendation cutoff.
Removing the forced target reduces demand to $m-1$ and discharges its
support group. The remaining leaves are partitioned by support group,
membership in $H$, and membership in $B_0$. Each support group and the
ungrouped remainder has at most four such buckets. A bucket has its integer leaf
capacity and unit cost zero for retained leaves or one for replacements.

This formulation is an integral minimum-cost-flow problem. Source arcs
supply group nodes, with lower bound one for each uncovered support group.
Bucket arcs connect group nodes to high- or low-ranked nodes; the high node
has sink capacity $K-1-a$. Total flow is $m-1$. Integer capacities and
network-flow integrality yield an optimal integral occupancy
\citep[Section~2.3]{bertsekas1991linear}. The forced
target's replacement cost is constant and is added to the objective.
This minimizes leaf replacement for the specified state and target;
identity edits and the search over states are separate decisions.
\paragraph{Equivalence and optimality.}
After removing $y$, let $V_j=G_j\setminus\{y\}$ for each support group,
and let $V_0$ contain the remaining ungrouped leaves of $U\setminus\{y\}$.
These sets partition the residual universe. The source-to-group arc has
upper capacity $|V_j|$ and lower capacity one exactly when $j\ne0$ and
$y\notin G_j$; all other lower capacities are zero. Split $V_j$ into
buckets according to membership in $H$ and $B_0$. A bucket arc has capacity
equal to its size and cost zero or one according to retention, and ends
at the corresponding high- or low-ranked node. Their sink capacities are
$K-1-a$ and $m-1$, respectively; the required total flow is $m-1$.

Every feasible occupancy sends one unit along the bucket route of each
leaf other than $y$. Its support constraints meet the lower capacities,
and its outranker count meets the high-ranked capacity. Conversely,
an integral feasible flow selects its prescribed number of distinct leaves
from each bucket. Bucket capacities and disjointness make this possible;
adding $y$ then satisfies every occupancy constraint. Flow cost plus
$\mathbf 1\{y\notin B_0\}$ equals $|B\setminus B_0|$ in both directions.
Integral capacities, lower bounds and total demand admit an integral
minimum-cost flow whenever feasible, proving equality of the two optima.
Negative high-ranked capacity or an unsupported empty group is infeasible.
The network has at most four bucket arcs per group, so its size is
$O(r_S+1)$ once bucket capacities are available; finding those capacities
and writing the selected leaves remain separate costs.

We pin the target identity, retain compatible original identities, and
assign displaced identities deterministically. Complete mappings are
checked after history re-encoding.

Candidate generation samples up to 256 training contexts and retains at
most 12 proposals. It considers up to eight feasible states per context,
with at most eight critical branches, 256 enumerated states, and four target
leaves per state. Every proposal preserves old identities and changes at
most 16 identities. A disjoint subset of up to 1,024 update-training examples
ranks proposals by re-encoded NDCG@10, Recall@10, and fewer identity changes.
It balances old and new targets where available and draws from the
training examples used in the common warm-up.
Starting from the original map, greedy composition accepts at most four
proposals. Each proposal must change identities disjoint from earlier
accepted proposals, preserve injectivity, and keep the total within 16
identity changes. At each step, we re-encode and evaluate every compatible
merge on the same training subset. We accept the largest strict improvement
in the ordered criteria above, retaining proposal order on ties, and stop
when none improves the current map. Each accepted intermediate map joins
the candidate pool.
One nonoriginal mapping is selected from single proposals and these
compositions; the lexicographically largest SHA-256 digest of the serialized mapping
resolves remaining ties.
When no feasible proposal exists, the mapping remains static.

\subsection{Training and checkpoint selection}

For the selected assignment, both history items and target items are encoded
with the same fixed map. Equation~\eqref{eq:generator-adaptation} averages
cross-entropy over the four identifier tokens using example weights $w(h,i)$
normalized to mean one over the training set. Uniform weighting sets $w(h,i)=1$.
The main T5 results use cross-day weighting for Beauty seed 2027 and
Toys seed 42, and uniform weighting for the other runs. LC-Rec uses uniform weights.
T5 updates all generator parameters. The LC-Rec adaptation uses the same
identifier-token objective with the trainable modules described in
Appendix~\ref{app:lcrec}.

Within each dataset and seed, the constructed and static branches start
from the same five-epoch warm-up. We treat the adaptation setting as a
categorical hyperparameter with three candidate values.
\emph{Constant} uses a fixed learning rate of $3\times10^{-4}$ throughout adaptation.
\emph{Cosine+EMA} uses cosine learning-rate decay from $10^{-4}$ to $10^{-5}$
and an exponential moving average (EMA) of decay 0.99.
\emph{Weighted EMA} adds cross-day training weight two and same-day weight
one half to the cosine+EMA setting.
For this weighted setting, $w(h,i)=\widetilde w(h,i)/\overline w$, where
$\widetilde w$ is the stated raw weight and $\overline w$ is its mean over
the whole training set.
EMA is initialized from the warm-up weights, updated after every optimizer
step, and used for validation and testing. AdamW, batch size 256, weight
decay 0.01, and gradient clipping at one remain fixed. Each run initializes
one optimizer and retains its state throughout adaptation.

Adaptation lasts at least 20 and at most 60 epochs after the common warm-up.
Checkpoint selection uses the 1,040 Beauty and 774 Toys positive-day-gap
validation examples; Tools uses all 3,007 validation examples because timestamps
are unavailable. We evaluate at the warm-up checkpoint and every five adaptation epochs.
Four evaluations without a strict NDCG@10 improvement on this selection subset
trigger early stopping after the minimum duration. Checkpoint ordering uses
NDCG@10 and Recall@10 on that subset, then full-validation NDCG@10 and earlier epoch.

The main comparison selects an adaptation setting for each dataset and seed
using validation results. Table~\ref{tab:current-scoring} gives the selected
setting and checkpoint epochs for both paired branches, counting from the
start of their common warm-up. Epoch 5 therefore denotes the trained warm-up
checkpoint before any further adaptation. The without-construction branch uses
the same adaptation setting and randomization with the original mapping.
JTM and DREAM retain their 20-epoch update schedule; DACT follows
Appendix~\ref{app:dact-comparison}.

\subsection{Training data and parameter selection}
\label{app:completion}

For a history of length $T$, its item at position $t$ receives weight
$\exp[-\rho(T-t)]$; repeated occurrences are summed and the vector is
normalized to unit sum. Training histories form $X$ and their one-hot next
items form $Y$. The predictor minimizes a supervised ridge objective:
\begin{equation}
 B_\alpha=\arg\min_B\|Y-XB\|_F^2+\alpha\|B\|_F^2
       =(X^\top X+\alpha I)^{-1}X^\top Y.
 \label{eq:transition-ridge}
\end{equation}
Its next-item target differs from reconstruction of the history vector.
The linear predictor fits available initial and update training examples,
deduplicated by user, history and target. Beauty, Tools and Toys contain
55,909, 49,661 and 46,964 distinct examples. Histories are normalized recency
weighted item counts; the target is a one-hot next-item vector. Validation
and test examples are excluded from this supervised fit. Repeated items
remain eligible under the shared recommendation protocol.

We evaluate six predictor configurations per domain: the Cartesian product
of decay $\rho\in\{0,0.2\}$ and ridge $\alpha\in\{1,10,100\}$.
Validation NDCG@10 selects a configuration, followed by Recall@10 and fixed
grid order to resolve ties. The selected ridge is 100 in all domains;
decay is 0.2 for Beauty and Tools and zero for Toys. The normal equations
are solved by double-precision Cholesky factorization; sampled-column relative residuals
are below $10^{-8}$. Selected coefficients are stored in single precision.

The collaborative score is clipped at $10^{-8}$ before taking its logarithm.
Scoring parameters $(\lambda,\gamma,b)$ are selected using the validation
procedure in Appendix~\ref{app:selection-measurements}.
Table~\ref{tab:current-scoring} reports the configuration used for each
test run and its matched ablations. Figure~\ref{fig:calibration-test-tradeoffs}
compares calibrated and global corrections under these configurations.

\subsection{Inference and numerical verification}
\label{app:completion-numerics}

The decoder computes full-vocabulary log probabilities before masking
invalid catalog paths. Prefix accumulation uses their sum, without beam
length normalization. The longest scored prefix of each catalog item
provides its bound. All catalog leaves already scored in the terminal beam
expansion enter the initial evaluated set, including leaves outside the
final width-40 output. We record this set size separately from additional
teacher-forced evaluations.

Additional candidates are scored in groups of 20, with an 80-item maximum
per query. The final output contains 20 items. For the numerical stopping
check, the twentieth known score minus $10^{-3}$ must exceed every remaining
bound plus $10^{-3}$. The mathematical guarantee in
Proposition~\ref{prop:completion} concerns exact scores; this allowance and
numerical checks do not constitute a formal floating-point error bound.
A run reaching its item budget can still produce a useful list without
satisfying the stopping check.

We verify prefix bounds and complete Top-20 rankings by exhaustive scoring
of 64-leaf catalogs on three training histories, at weights 0, 0.5 and 2
with both candidate priorities. Complete scores are also checked against
independent teacher forcing. The initial beam
predictions match the unchanged generator outputs. Calibration is additionally
checked in 144 exhaustive small-catalog comparisons covering the grid and
both priorities. All test metrics are independently recomputed and candidate
budgets verified across domains and seeds. Fresh evaluation on the timing
subsets reproduces the saved rankings and added-item counts.

\subsection{Completion guarantee}
\label{app:completion-proof}
\begin{proposition}[Completion certificate]
\label{prop:completion}
For fixed history, model, map and collaborative scores, let $E$ be the set
of items with complete likelihoods, and let $U_i$ be defined by
\eqref{eq:item-upper}. If the $K$th largest score in $E$ satisfies
\eqref{eq:completion-stop}, the returned list is the global Top-$K$ under
$F(\cdot\mid h)$.
\end{proposition}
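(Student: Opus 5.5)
The plan is a direct comparison argument, resting on the upper-bound inequality~\eqref{eq:item-upper} and the stopping test~\eqref{eq:completion-stop}.

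\emph{Step 1: the bound is valid for every item.} Fix the history, model, map and collaborative scores. Then $F(\cdot\mid h)$ is a fixed function on $\mathcal I$. Each $d_i$ is a known constant, determined before any identifier evaluation. For every catalog item $i$ I take its retained longest evaluated prefix of length $\ell_i$, with $u_i=0$ when $\ell_i=0$. The full-vocabulary log probabilities are unnormalized and unrenormalized, so each is at most zero. Hence the suffix sum in~\eqref{eq:item-upper} is nonpositive, and $F(i\mid h)\le U_i$. This holds whether or not $i$ stayed in the beam, because the prefix scores are the same quantities that enter $F$ through teacher forcing. I will note explicitly that no masking renormalization enters these terms; by the masking counterexample in the appendix, renormalization would break the inequality.

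\emph{Step 2: the comparison.} Let $E$ contain at least $K$ items and let $t_K$ be its $K$th largest $F$-value. Take any $j\notin E$. By Step~1 and the stopping condition,
\[
F(j\mid h)\le U_j\le \max_{i\notin E}U_i<t_K .
\]
So every unevaluated item scores strictly below at least $K$ evaluated items. If $E=\mathcal I$, the maximum is $-\infty$ and the claim is trivial.

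\emph{Step 3: identify the returned list.} The returned list $T$ consists of the $K$ highest-scoring items of $E$, with ties broken by the fixed item order. I need $T$ to be a global Top-$K$ under $F$. Suppose some $j\notin T$ outranks some $i\in T$.
\begin{itemize}
\item If $j\notin E$, then $F(j)<t_K\le F(i)$, so $j$ cannot outrank $i$.
\item If $j\in E\setminus T$, then $j$ cannot outrank $i$ by the construction of $T$ within $E$ under the same tie order.
\end{itemize}
Hence no item outside $T$ outranks any item inside $T$, and $T$ is the global Top-$K$ with the same ordering.

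The only delicate point is ties. The strict inequality in~\eqref{eq:completion-stop} rules out ties between unevaluated items and the $K$th evaluated score. Ties inside $E$ are resolved by the same deterministic order that defines the global Top-$K$. I expect this bookkeeping, together with confirming that $U_i$ uses exactly the unnormalized prefix sum appearing in $F$, to be the main thing to state carefully; the rest is immediate.
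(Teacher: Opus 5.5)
Your proposal is correct and follows the paper's own argument: nonpositive suffix log probabilities give $F(i\mid h)\le U_i$ for every unevaluated item, and the strict stopping test then places all such items below the $K$th evaluated score. Your explicit tie bookkeeping and your remark that masking renormalization must not enter the prefix scores add care to the paper's three-line proof without changing its route.
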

\begin{proof}
Every suffix contributes a sum of nonpositive log probabilities, so each
unevaluated item's complete score is at most $U_i$. Under the stopping
condition, all such items score below the $K$th evaluated item. Hence none
can enter the returned Top-$K$ set.
\end{proof}

\paragraph{Progress toward certification.}
With at least $K$ evaluated items, define the certification gap
\[
 \Delta(E)=\max_{i\notin E}U_i-t_K(E),
 \qquad \max\varnothing=-\infty.
\]
For fixed prefix bounds and item corrections, the update in
\eqref{eq:completion-update} enlarges $E$, so its $K$th largest score cannot
decrease and the largest remaining bound cannot increase. Hence
$\Delta(E_{t+1})\le\Delta(E_t)$; a negative gap gives the stopping
certificate. This monotonicity does not guarantee certification within
the item budget or a gain in recommendation quality.

\subsection{Ablation definitions and computational accounting}

The \emph{generator only} control retains the constructed map and adapted
model with ordinary beam decoding. \emph{Without completion} applies the
selected combined score to the original 40 candidates. \emph{Without the
collaborative score} sets $d_i=0$ and retains completion. \emph{Without
construction} runs the full inference procedure on the paired static map
and adapted checkpoint; it measures this complete assignment-and-adaptation
choice. \emph{Item correction only} ranks the current catalog by the calibrated
item correction, with ties ordered by item index. \emph{Collaborative
priority} replaces the bound-based candidate order with this same correction,
keeping the final score, bound checks, batch size and budget fixed.
\emph{Without calibration} sets $(\gamma,b)=(1,0)$, retaining the selected
$\lambda$ and applying the global correction to all items.

For each query, the candidate budget counts newly evaluated complete items
outside its initial fully scored set $E_0$. Initial
terminal expansions, encoder work and the linear predictor also incur
cost. Candidate counts and stopping fractions use every test query. Fresh runtime
measurements use the same first 128 test histories per domain and retriever
seed (17/42/2027), with neural inference on one NVIDIA RTX 3090 and two CPU
threads for host-side work. All timed runs use the configurations in Table~\ref{tab:current-scoring}.
Each seed has three timed repetitions, with CUDA synchronization around
each neural stage. The common width-40 beam and linear predictor are
recomputed once per repetition and included in both priority costs.
Each priority separately recomputes its encoder and candidate likelihoods;
no likelihood cache is shared between them or reused from earlier experiments.
Model/input loading and a four-query beam-and-completion warmup for both
priorities are excluded. Beam batches contain 64 histories, completion batches
16, and teacher-forcing calls at most 256 sequences. Priority order alternates
by repetition and seed. We average repetitions within each seed before
computing seed means and sample standard deviations. All timed rankings,
candidate counts and stopping flags match their full-test cached outputs.
Table~\ref{tab:completion-cost} separates full-population search counts from
amortized wall time per query on this fixed, batched runtime sample.

For $I$ items, the dense predictor stores $O(I^2)$ coefficients. A direct
Cholesky fit costs $O(I^3)$; prediction from a sparse length-$T$ history costs
$O(TI)$. Prefix-bound arrays also scale with the current catalog.

\section{Data, Baselines and Evaluation Protocols}
\label{app:protocol}
\label{app:trained-comparison}
\subsection{Task, data, and common training procedure}

We evaluate shared assignments together with ordinary retriever updating on
Amazon 2014 Beauty, Tools, and Toys and Games~\citep{he2016ups}.
Beauty retains the chronological
split described in Appendix~\ref{app:beauty-protocol}; Toys uses the same
preprocessing rule with its own timestamps and content embeddings. Their
initial catalogs contain items observed by the 60th event-time percentile,
and the update catalogs include items observed by the 80th percentile.
Training uses up to the last four eligible next-event prefixes per training
user, with histories truncated to 20 items. Validation users are disjoint
from training users. Test targets are the first subsequent interaction,
with the history ending at the update cutoff. Boundary timestamp ties are
included consistently. Tools uses the public period-0.8 training,
validation, and test sequences supplied with DACT~\citep{feng2026dact}.
Thus all methods within a dataset share the same examples and information;
the two raw-review splits and the supplied Tools split retain their respective
validation conventions.

We use all available update examples: 49,784/1,870 training/validation
examples for Beauty, 4,227/3,007 for Tools, and 41,230/1,559 for Toys.
The assignment objectives draw from all available new-target training contexts,
including histories containing new items: 2,993, 133, and 4,950, respectively.
These cover 532/595 Beauty, 74/158 Tools, and 944/1,025 Toys new items.
The remaining identities stay in the candidate catalog and in evaluation.
The complete test sets contain 7,335, 5,360, and 6,266 queries.
Targets outside the available catalog remain misses in the full-stream
metrics; current-catalog metrics and all denominators are also retained.

The T5 retriever~\citep{rajput2023tiger} architecture and content encodings
are shared within each dataset. The main comparison uses parent seeds
17/42/2027, with checkpoint selection by validation loss.
Beauty and Toys parents train for 40 epochs from random initialization,
using AdamW at $3\times10^{-4}$, weight decay 0.01, batch size 256
and gradient clipping at one. Tools parents start from the public period-0.6
checkpoint and train on period-0.7 data with AdamW at $10^{-4}$,
weight decay 0.01 and batch size 256. Examples targeting items newly admitted
in period 0.7 receive weight two.
Tools parent training stops after 15 epochs without lower validation loss
or at 200 epochs. In both procedures, validation is evaluated every epoch,
including epoch zero, and the earliest minimum-loss checkpoint is retained.

JTM and DREAM use a common five-epoch warm-up before assignment selection,
followed by 15 further training epochs. Training uses AdamW with learning rate $3\times10^{-4}$,
weight decay 0.01, batch size 256, and gradient norm clipping at one.
The optimizer is restarted after assignment. Within each compatible-parent comparison group and seed, methods
receive the same initial weights and minibatch randomization.
BB and its paired without-construction control
use the validation-selected adaptation in Appendix~\ref{app:tuned-construction};
DACT and Reformer use the selected settings specified below.

\subsection{Chronological split and content encoding}
\label{app:beauty-protocol}
Beauty uses Amazon Reviews 2014 5-core reviews and content-only
sentence-T5-base embeddings~\citep{he2016ups,ni2022sentencet5}.
Toys follows the same preprocessing rule with its own timestamps and embeddings.

\paragraph{Time and user boundaries.}
The 60th and 80th percentiles of event time, using the inverted empirical CDF, are Unix
times 1380326400 and 1394150400. All ties at each boundary are included. The old catalog
contains 11,183 items observed by the first cutoff; the current catalog contains 11,778
items observed by the second, admitting 595. At each cutoff, training retains the last
four eligible next-event prefixes per user, with at least three history items and at most
20 inputs. Users whose SHA-256 account-ID integer is zero modulo ten are excluded from
training and supply one last eligible validation prefix. No raw account IDs are released.
This gives 34,836/1,416 parent train/validation examples and 49,784/1,870 update examples.

Each test query contains the last 20 events at or before the second cutoff and predicts
the first event strictly after it. All 7,335 eligible users are retained; 5,732 queries
have pure-old inputs, 1,486 have an admitted new target, and their intersection
contains the 734 Primary queries. Mixed histories remain unknown in an allocation bound.
Future targets are retained and score zero under the current catalog. Metrics decode
predicted paths to eligible item identities: a future item's unallocated placeholder code
must never count as a hit by accidentally matching an eligible item's code.

\paragraph{Fixed content tokenizer and learning.}
We normalize each 768-dimensional content vector and fit three residual KMeans levels,
each with 64 centers, on old items only. Each level uses one initialization, at most 100
iterations, and seed $20260910+d$ for zero-based level $d$. Each item receives its three
center indices and a unique suffix from 0--255. Old items are assigned first, then new
items, each in stable numeric item order; choose the first free suffix. Old paths are
retained exactly. Four position-specific blocks of 256 tokens, with offsets 1/257/513/769,
give a 1,025-token vocabulary including pad/start token zero.

Each parent is a fresh T5 with four encoder and decoder layers, model dimension 128,
feed-forward dimension 1,024, six heads of key/value dimension 64, ReLU and dropout 0.1.

\paragraph{Temporal validation populations.}
Beauty and Toys timestamps have day-level granularity.
Validation predicts the last eligible interaction before the update cutoff;
testing predicts the first interaction after it using pre-cutoff history.
Same-day targets constitute 44.39\% and 50.35\% of their validation sets,
and none of their test sets. The corresponding median prediction gaps
are 7/0 days in validation and 134/150 days in testing.
For adaptation and scoring selection, the positive-day-gap subsets contain
1,040 Beauty and 774 Toys queries. Tools uses all 3,007 validation queries
because timestamps are unavailable.

\subsection{JTM and DREAM assignment objectives}

JTM and DREAM share a score-based candidate bank and retriever procedure.
A seed-controlled sample first selects one new-target context per observed
item in random order, then additional contexts if required, up to 256.
The bank contains the original catalog and up to eight catalogs formed
from the highest-scoring free leaves on these sampled contexts.

For JTM~\citep{zhu2019joint}, we adapt its assignment objective by
maximizing summed training sequence log probability. For DREAM, we use
its UC3 module with temperature-one, entropy-weighted multi-context
probabilities, at most eight candidates per item, minimum support three,
and vote margin 0.05~\citep{guan2026dream}. Its candidate union includes
every item's original code. An item failing either confidence gate
retains that code; rectangular maximum-weight matching resolves competing
assignments among the remaining items.

A shared assignment can change codes in histories as well as targets.
Every complete candidate mapping is therefore re-encoded and decoded on
the full validation set before selection. The original mapping remains
available as a fallback.

\subsection{Parameter selection and evaluation}
\label{app:selection-measurements}

For JTM and DREAM, full-validation NDCG@10 selects
the shared mapping, followed by Recall@10,
fewer changed identities, and candidate order. We evaluate checkpoints
after 5, 10, 15, and 20 total update epochs and select by validation
NDCG@10, Recall@10, and earlier epoch. Test evaluation uses beam width 40,
with Recall and NDCG at
10 and 20. The main comparison uses complete test populations;
Appendix~\ref{app:extended-t5} reports component effects and old/new-target
analyses, with subset definitions and denominators given alongside each result.

\paragraph{Main T5 results and matched ablations.}
We evaluate every method in the main T5 comparison on Beauty, Tools and Toys
using seeds 17, 42 and 2027.
For every seed, validation Recall@10, NDCG@10, Recall@20 and NDCG@20 jointly
guide parameter tuning. We select one configuration per dataset and seed
and use it to report all four test metrics. Table~\ref{tab:current-scoring}
specifies the selected adaptation setting, paired checkpoint epochs, mapping changes
and scoring parameters. Table~\ref{tab:trained-comparison} reports means
and sample standard deviations across these three seeds.
This dispersion includes differences in the adaptation settings and scoring parameters
selected for individual runs.

\paragraph{Initialization and validation populations.}
Each seed has a parent model and a five-epoch update warm-up. Beauty and Toys
use seed-specific T5 parent training on their initial-period data; Tools
uses seed-specific training from a shared public initialization. Constructed
and static branches share the warm-up weights within each dataset and seed.
The selected mapping is fixed before their adaptation trajectories.
For adaptation and scoring selection, Beauty uses 1,040 positive-day-gap
validation queries, Toys uses 774, and Tools uses all 3,007 validation queries.
The training-subset shared-map selector is specified separately in
\eqref{eq:shared-map-selection}.

\paragraph{Adaptation and scoring search.}
The categorical adaptation hyperparameter and checkpoint rules are given in
Appendix~\ref{app:tuned-construction}. Scoring search combines coarse
comparisons and local refinement of $(\lambda,\gamma,b)$.
The coarse Tools comparison uses
$\lambda\in\{0.125,0.25,0.5,1,2\}$,
$\gamma\in\{0,0.25,0.5,1\}$ and $b\in\{0.5,1,1.5,2\}$.
Additional local comparisons use multipliers 0.75/1 for $\lambda$,
increments 0/0.125 for $\gamma$, and $b\in\{1.5,2.25\}$.
Refinement multiplies the reference $\lambda$ by 0.75/1/1.25 and adds
$-0.125/0/0.125$ to $\gamma$ and $-0.25/0/0.25$ to $b$.
Repeated configurations are evaluated once. These comparisons use the same
four validation metrics; the reported parameter values identify the
configuration used for each test run.

\begin{table}[!htbp]
\centering\small
\setlength{\tabcolsep}{2.5pt}
\caption{Configurations for the nine T5 runs in Table~\ref{tab:trained-comparison}.
Adaptation setting is a categorical hyperparameter with three values defined in
Appendix~\ref{app:tuned-construction}. Selected update epochs include the common
five-epoch warm-up and exclude parent training; epoch 5 denotes the trained
warm-up checkpoint. They locate the selected checkpoints, not the stopping
epochs. IDs count changed item codes; leaves count newly occupied paths,
both relative to the paired static mapping. Validation uses the cross-day subset
for Beauty/Toys and the complete split for Tools.}
\label{tab:current-scoring}
\begin{tabular}{lllrrrrrrr}
\toprule
\multirow{2}{*}{Dataset} & \multirow{2}{*}{Seed} & \multirow{2}{*}{Adaptation setting} &
\multicolumn{2}{c}{Selected update epoch} & \multirow{2}{*}{IDs} &
\multirow{2}{*}{Leaves} & \multirow{2}{*}{$\lambda$} &
\multirow{2}{*}{$\gamma$} & \multirow{2}{*}{$b$} \\
\cmidrule(lr){4-5}
& & & BB & W/o construction & & & & & \\
\midrule
Beauty & 17 & Constant & 25 & 15 & 5 & 1 & 1.125 & 1.25 & 2.5 \\
Beauty & 42 & Constant & 15 & 5 & 2 & 0 & 1.5 & 1.25 & 2.5 \\
Beauty & 2027 & Weighted EMA & 10 & 10 & 2 & 0 & 1.125 & 1.125 & 2.5 \\
\addlinespace[2pt]
Tools & 17 & Cosine+EMA & 10 & 10 & 0 & 0 & 0.625 & 0.25 & 1.75 \\
Tools & 42 & Cosine+EMA & 5 & 5 & 2 & 0 & 0.75 & 0.25 & 2.25 \\
Tools & 2027 & Cosine+EMA & 5 & 5 & 0 & 0 & 0.25 & 0.625 & 2.25 \\
\addlinespace[2pt]
Toys & 17 & Constant & 45 & 10 & 2 & 0 & 0.25 & 1 & 1 \\
Toys & 42 & Weighted EMA & 60 & 60 & 1 & 1 & 0.5 & 1 & 1.5 \\
Toys & 2027 & Constant & 25 & 25 & 2 & 0 & 0.25 & 1 & 1 \\
\bottomrule
\end{tabular}
\end{table}

\paragraph{Matched component controls.}
Component removals inherit the corresponding BB scoring parameters.
Without construction uses the static map and paired adaptation setting,
with its selected checkpoint shown in Table~\ref{tab:current-scoring}.
The comparison therefore measures assignment together with subsequent
adaptation. Score and candidate-access controls retain the BB map and
checkpoint and change only the specified inference component.

\subsection{DACT Comparison Across Three Domains}
\label{app:dact-comparison}

\subsubsection{Data and compatible initialization}

DACT~\citep{feng2026dact} is evaluated on the same Beauty, Tools, and Toys
queries and candidate catalogs as Table~\ref{tab:trained-comparison}.
Each domain has a collaboration-aware tokenizer and a retriever trained on
its initial codes. Adaptation therefore starts from a model that has learned
the preceding DACT encoding. Hyperparameters are tuned on validation data
starting from the published configuration. Beauty and Toys use the existing initial-period
training examples. Tools uses the deduplicated training examples from its
first two observed periods: 41,576 plus 3,858 examples, with 2,993 validation
examples at the second origin. This cumulative initialization retains the
earlier interactions available to the other Tools parents.
The update training, validation, and test examples are unchanged.

\subsubsection{Collaborative tokenization}

We use the public DACT encoder, residual quantizer, memory gate, and
differentiated update objective. Item content is represented by the existing
768-dimensional embeddings. The encoder has widths 512, 256, 128, 64, and 32;
the decoder reverses these widths. Three codebooks each contain 256 vectors
of dimension 32. Initial training minimizes reconstruction, quantization,
and collaborative alignment losses, with commitment weight 0.25 and
alignment weight 0.02. It runs for 20,000 AdamW steps at learning rate
$10^{-3}$. DACT updating runs for 5,000 steps at $10^{-4}$, with batch size
1,024, weight decay $10^{-4}$, linear decay after 500 warm-up steps, and
gradient clipping at one. These are optimizer-step budgets, following the
paper's implementation description.

The update uses a 0.3 drift-selection fraction, gate temperature 0.1,
old-branch weight 0.1, latent-stability weight one, assignment-divergence
weight five, and gate penalty 0.001. The public implementation applies the
assignment-divergence term to the first codebook. Code reassignment retains
the entire previous code when its first token remains unchanged. Retained
codes are reserved before assigning collision suffixes to changed and new
items, giving an injective four-token map without altering those retained
identities.

Collaborative features come from a 32-dimensional, three-layer
LightGCN~\citep{he2020lightgcn}. Its graph contains deduplicated user--item
pairs from training histories and targets. The update inherits the initial
embedding parameters; sampled negatives exclude all known training
positives from both periods. Full-batch Bayesian personalized ranking uses
Adam at $10^{-3}$ and regularization $10^{-4}$. Validation uses the mean
history-item embedding to represent each query, including held-out users.
Validation is checked every 20 epochs, with patience five and a
300-epoch cap in every domain. These fixed collaborative-feature preparation
settings are shared across domains; the public DACT release does not specify
a collaborative-feature training configuration for these splits.

\subsubsection{Retriever training and selection}

The retriever follows the public DACT training defaults: AdamW at learning
rate $10^{-4}$, batch size 256, weight decay 0.01, and at most 200 epochs.
Full-validation cross entropy is evaluated every epoch in batches of 128,
using the arithmetic mean of batch losses. Training stops after 15 consecutive
epochs without a strictly lower validation loss; the minimum-loss checkpoint
is selected, with earlier epochs retained on ties. Each phase uses one
optimizer, without a learning-rate scheduler or gradient clipping.

The same fixed recipe trains the encoding-compatible parent from random
initialization and adapts its selected checkpoint to the updated codes,
with a fresh optimizer for adaptation. The public release supplies the
adaptation recipe; we also apply it to parent training on these shared
benchmark splits. The four-layer T5 dimensions follow the public model;
the per-domain vocabulary and data are those of the common benchmark.
The random seed is set before model initialization, and the best checkpoint
is saved separately from the current training state.

All nine selected checkpoints are fixed before testing. Decoding uses beam
width 40 and the complete current catalog; unavailable future targets remain
misses. Three retriever seeds share one tokenizer and collaborative-feature
fit per domain, so the reported standard deviations measure retriever-seed
variation conditional on those fits. DACT's own parent and adaptation budget
differ from the common-warm-up comparisons described in
Appendix~\ref{app:trained-comparison}.

\subsection{Standard Update Strategies and Reformer}
\label{app:continual-updates}

This comparison evaluates how tokenizer and retriever updates affect
recommendation under the same catalog expansion. Table~\ref{tab:trained-comparison}
includes all five standard strategies and Reformer; Tables~\ref{tab:continual-at20}
and~\ref{tab:continual-groups} give cutoff-20 and old/new-target results.
All methods rank the same current catalog for the same test histories,
using a width-40 beam and seeds 17, 42, and 2027.

\paragraph{Standard update strategies.}
Following the tokenizer/GRM comparison in DACT~\citep{feng2026dact},
Frozen/Frozen retains both the initial tokenizer and retriever;
FT/Frozen updates only the tokenizer; Frozen/FT updates only the retriever;
FT/FT updates both; and FT/RT updates the tokenizer and trains a new retriever
from random initialization. These strategies share DACT's initial residual
quantizer, collaborative features, and corresponding initial T5 models.
The frozen tokenizer preserves existing codes and assigns new items using
its unchanged quantizer. Tokenizer fine-tuning re-encodes all items with
the updated quantizer. A fourth token resolves complete-code collisions.
The frozen-model and static-assignment rows in the main table instead use
the initial models of the construction comparison
(Appendix~\ref{app:trained-comparison}) and serve as separate controls.

Plain tokenizer fine-tuning minimizes reconstruction, quantization, and
collaborative-alignment losses, with alignment coefficient 0.02.
It uses 5,000 AdamW steps, learning rate $10^{-4}$, batch size 1,024,
weight decay $10^{-4}$, 500 warmup steps, and gradient clipping at 1.
The final tokenizer is used in each strategy. Retriever fine-tuning uses
update training examples. Retraining uses the deduplicated union of initial
and update training examples: 55,909 for Beauty, 49,661 for Tools,
and 46,964 for Toys. Both use AdamW at $10^{-4}$, batch size 256,
weight decay 0.01, and a 200-epoch maximum. Full update-validation
cross-entropy is evaluated each epoch with batch size 128; 15 consecutive
epochs without a strict improvement stop training, with earlier ties retained.
Frozen retrievers require no additional fitting.

\paragraph{Reformer.}
We evaluate Reformer's incremental quantizer and identifier assignment
\citep{shi2025incremental} with the same initial tokenizer and T5 models
as the standard strategies. Its three semantic codebooks each gain 16
entries, while existing item identifiers remain fixed. New items train
the quantizer with the published reconstruction and quantization losses;
warm items establish the frequency counts. We use the released frequency
exponents $(0.25,0,0)$, power scaling, and Sinkhorn configuration
$(0,0,0.003)$ with 50 iterations. New-item assignment uses unit distance
scales and up to 11 collision-resolution passes; the common fourth-token
suffix resolves remaining catalog collisions. The retriever gains 48
semantic-token embeddings while retaining all existing embedding rows.
Tokenizer updates use the common 5,000-step configuration above, and
retriever adaptation uses the same optimizer and stopping rule as
the standard strategies. Thus the comparison preserves Reformer's update
mechanisms under the common initialization and training protocol.
Hyperparameters are tuned on validation data starting from the published settings.

\subsection{Conventional recommendation references}
\label{app:default-baselines}
\paragraph{Shared information and evaluation.}
SASRec, EASE and ItemKNN use the same test histories, target items and current
catalog as Table~\ref{tab:trained-comparison}. Training combines the available
initial and update examples, removing duplicate user--history--target records.
This yields 55,909, 49,661 and 46,964 training examples for Beauty, Tools and Toys.
Current validation and test examples are not used as supervised training rows.
The original chronological splits and maximum 20-item histories are retained.
All current-catalog items are ranked, including previously observed items;
future-catalog targets remain misses. Equal scores are ordered by item identity.
These baselines fit accumulated training data, whereas the generative comparisons
use the update procedures specified in their respective appendices.

\paragraph{Selected baseline settings.}
We tune hyperparameters on validation data, using the RecBole v1.2.1
model and training defaults as starting points.\footnote{\url{https://github.com/RUCAIBox/RecBole/tree/v1.2.1}}
SASRec uses two layers, two attention heads, hidden width 64, feed-forward width
256, GELU, hidden and attention dropout 0.5, layer-normalization tolerance
$10^{-12}$, and normal initialization with standard deviation 0.02.
Its position table supports 50 positions; actual input histories retain the
common 20-item maximum. The reference cross-entropy variant predicts the next
item from the last observed position. Adam uses learning rate $10^{-3}$,
batch size 2,048 and no weight decay, scheduler or gradient clipping.
Validation MRR@10 is evaluated after every epoch and rounded to four decimal
places, following the reference evaluation precision. Ties replace the current
best checkpoint and reset patience; training stops after more than ten
consecutive strictly worse epochs or at the fixed 300-epoch cap.
All nine runs stop by this rule. Seeds are 17, 42 and 2027.

EASE fits its zero-diagonal linear reconstruction model with regularization
250. ItemKNN uses cosine similarity, 100 neighbors per output item, zero
shrinkage and the reference $10^{-6}$ denominator offset. Both use a binary
user--item matrix formed only from the available training histories and targets.
The matrices contain 132,687, 65,788 and 112,002 observed user--item pairs,
respectively. At prediction, each method multiplies a query's binary observed
history by the fitted item-weight matrix. This history fold-in uses the same
query information as the other recommenders and does not fit evaluation targets.
Each deterministic model is fitted once per domain.

The complete-population and cohort results are reported in
Appendix~\ref{app:extended-t5}.

\section{Feasibility Validation and Certificate Coverage}
\label{app:assignment-diagnostics}
This appendix evaluates assignment feasibility and certificate coverage
under fixed scoring models. The exhaustive construction comparison uses 96
deterministic scorers; the independent Beauty parent/update comparison uses seeds
17, 29 and 43. Recommendation quality for the final T5 method is evaluated
separately in Appendix~\ref{app:extended-t5}.

\subsection{Exhaustive Evaluation of Assignment Invariance}
\label{app:exact-calibration}

The experiment uses $\mathcal Z=\{0,1\}^3$, widths one and two, 16 seeds, and every
nonempty proper old-path subset: the integer masks 1 through 254, inclusive.
For each query/scorer, positive integer weights derived from a deterministic SHA-256 key
are normalized to exact rational probabilities. The prefix family keys on seed and prefix;
the state family additionally keys on the entire ordered beam and its exact accumulated
probabilities. Ties in decoding use lexical prefix order. Certification uses strict adjacent
gaps and therefore rejects relevant ties rather than relying on this tie-breaking order.

We enumerate all $A\supseteq O$, including $A=O$ and the full codebook. Ground-truth
invariance means equality of the complete ordered final list across this family.
At width two, 992 prefix-scoring cases and 987 state-scoring cases never
return a new path. Only 976 and 965 cases, respectively, preserve the
complete ordered output list. A query can have an invariant
new-target contribution without an invariant ordered old list. The certificate establishes
the stronger list-level claim. Relevant top-rank ties occur in one coverage query/scorer
case and nine cohort query/scorer cases; exact arithmetic retains them.

For cohort calibration, use eight seeds, four fixed queries, two new item identities, and
old-catalog sizes two, four, or six. Query target identities are $(0,1,0,1)$. Every ordered
injection of the two new identities into unused paths defines one shared allocation.
The separate optimum allows a different legal injection for each query, whereas the shared
optimum maximizes the four-query hit count with a single injection. The bound treats all
uncertified queries as potentially correct. All three objectives use the same fixed item
labels; changing an occupied path set without tracking identity would not define this test.

An independent implementation shares only the deterministic probability specification
with the primary implementation. It separately computes catalog decoding,
full-codebook certification, ground-truth invariance, and both allocation optima;
all 16,256 coverage rows and 4,032 cohort rows match exactly. The CPU evaluation
takes approximately three seconds for this complete finite design. Experimental
specifications, per-instance results, and explicit examples of missed invariance
and conflicting query optima are included in the supplementary material.

% Derived from independently verified AL01 summary.json; exact counts, descriptive means.
\begin{table}[!htbp]
\caption{Exhaustive calibration of assignment invariance. Each row contains 4,064 single-query cases and 1,008
four-query cohorts. $C/I$ is certified / all-allocation invariant; the final three columns are
cohort means in percent. $U_C\geq\mathrm{OPT}_{separate}\geq\mathrm{OPT}_{shared}$ throughout.
The width-one equality of $C$ and $I$ is an observation in this finite design.}
\label{tab:al01-calibration}
\centering\small
\begin{tabular*}{\linewidth}{@{\extracolsep{\fill}}lrrrrr@{}}
\toprule
Scoring & $W$ & $C/I$ & $U_C$ & Separate optimum & Shared optimum \\
\midrule
Prefix & 1 & 2,032/2,032 & 51.56 & 50.00 & 39.51 \\
Prefix & 2 & 910/976 & 78.94 & 75.40 & 65.30 \\
Beam state & 1 & 2,032/2,032 & 54.69 & 50.00 & 41.10 \\
Beam state & 2 & 866/965 & 80.51 & 75.97 & 62.85 \\
\bottomrule
\end{tabular*}
\end{table}

\subsection{Can allocation alone match a model update?}
\label{sec:cohort-results}

The output-invariance certificate fixes the returned old items on 6,253 of
6,398 VK-LSVD (VK) Primary queries. Thus every legal allocation of that parent has
new-item Recall@20 at most $145/6{,}398=2.27\%$, compared with the update's
$8.82\%$. The gap excludes allocation alone as an explanation for the update's
performance. On the prospective Beauty cohort, two of three updates also
exceed their width-20 bounds (Table~\ref{tab:assignment-bounds}).
% Historical values unchanged; AL02 rows independently verified.
\begin{table}[!htbp]
\caption{Can an allocation alone match the updated retriever? Each row evaluates the complete pure-old-input, new-target cohort. $|C|$ queries are certified to contribute zero under every legal allocation of the fixed parent. $U=1-|C|/N$ bounds new-item Recall@20. Updated-model recall uses the indicated candidate mapping and the same cohort; all six Tools comparisons and Beauty seed 17 remain below the bound. Beauty uses newly trained models for the chronological comparison. Percent units.}
\label{tab:assignment-bounds}
\centering\small
\setlength{\tabcolsep}{3.3pt}
\begin{tabular*}{\linewidth}{@{\extracolsep{\fill}}llrrrrc@{}}
\toprule
Data/seed & Candidates & $|C|/N$ & $U$ & Updated & Gap (pp) & $R>U$ \\
\midrule
VK/42 & Hungarian & 6,253/6,398 & 2.27 & 8.82 & +6.55 & Yes \\
Tools/17 & Frozen & 172/368 & 53.26 & 13.59 & -39.67 & No \\
Tools/17 & Refreshed & 172/368 & 53.26 & 9.51 & -43.75 & No \\
Tools/29 & Frozen & 224/368 & 39.13 & 14.67 & -24.46 & No \\
Tools/29 & Refreshed & 224/368 & 39.13 & 11.96 & -27.17 & No \\
Tools/43 & Frozen & 173/368 & 52.99 & 12.50 & -40.49 & No \\
Tools/43 & Refreshed & 173/368 & 52.99 & 12.23 & -40.76 & No \\
\midrule
Beauty/17 & Fixed content & 712/734 & 3.00 & 2.72 & -0.27 & No \\
Beauty/29 & Fixed content & 709/734 & 3.41 & 5.04 & +1.63 & Yes \\
Beauty/43 & Fixed content & 717/734 & 2.32 & 4.63 & +2.32 & Yes \\
\bottomrule
\end{tabular*}
\end{table}

Tools shows the complementary outcome. All six updates remain below their
39.13--53.26\% bounds, leaving allocation-only repair unresolved despite
updated-model recovery on parent-certified subsets. Adding mixed-input queries to the VK and Beauty
new-target cohorts also makes their bounds inconclusive. On VK, the denominator
increases to 13,676 while 6,253 queries remain certified, giving a 54.28\%
bound versus 14.21\% updated Recall@20. On Beauty, the denominator increases
from 734 to 1,486. The certified counts remain 712/709/717 for seeds 17/29/43,
giving bounds of 52.09/52.29/51.75\% versus updated recall of
3.23/5.65/5.59\%. Mixed histories contribute the maximum possible value one
to these bounds because their inputs can change with the assignment.
These comparisons measure the coverage of the certificate as well as the
size of the established separation.

\subsection{How much does branch analysis improve coverage?}
\label{sec:branch-results}

Increasing the beam width exposes a limitation of the original certificate.
At width 40, its Beauty bounds rise to 7.49--10.22\%, exceeding all three updated
Recall@20 values. At width 80, all first-layer checks fail because 64 old root
symbols cannot fill 80 strictly ordered old positions. The resulting bounds
leave every width-80 comparison unresolved
(Appendix~\ref{app:width-protocol}).

Final-layer certification and penultimate branching recover useful width-40
bounds. Final-layer certification permits changes in the old ordering while
checking that Top-20 contains only old items. Branch analysis additionally
enumerates penultimate alternatives after the first two layers are certified.
It adds 31/29/23 certified query--seed configurations to the final-layer test,
raising the combined counts to 708/705/714 out of 734 queries.
The corresponding bounds fall to 3.54/3.95/2.72\%, and two updates again exceed
them (Table~\ref{tab:branch-bounds}). The remaining 26/29/20 queries contribute
one each to the bounds.
\begin{table}[!htbp]
\caption{Stronger certificates on the same Beauty cohort ($W=40$, Recall@20;
percent units). All-layer, final-layer, and branch certificates use the same
parent checkpoints. Combined counts are over all 734 queries. Updated-model
outcomes are reused. Branch analysis is a post-hoc extension of the original width experiment.}
\label{tab:branch-bounds}
\centering\small
\begin{tabular*}{\linewidth}{@{\extracolsep{\fill}}lrrrrrrc@{}}
\toprule
Seed & \shortstack{All-layer\\$U$} & \shortstack{Final-layer\\$U$} & Added & \shortstack{Combined\\$|C|$} & New $U$ & Updated & $R>U$\\
\midrule
17 & 9.81 & 7.77 & 31 & 708 & 3.54 & 3.00 & No\\
29 & 10.22 & 7.90 & 29 & 705 & 3.95 & 5.18 & Yes\\
43 & 7.49 & 5.86 & 23 & 714 & 2.72 & 4.36 & Yes\\
\bottomrule
\end{tabular*}
\end{table}

The added configurations comprise 69 distinct test rows across three seeds.
Their critical sets contain $q=1$--6 optional branches: 324 eligibility subsets
merge into 317 ordered states. Of these, 83 reference states are reused and
234 require terminal batch calls. This post-hoc extension of the Beauty
width experiment uses the existing parents and updates. Tools' fixed-condition
applicability analysis is given in Appendix~\ref{app:tools-transfer};
computation for the broader constructive comparison is reported in
Appendix~\ref{app:construction-comparison}.

\subsection{Beam-Width Sensitivity}
\label{app:width-protocol}

Widths 20, 40 and 80 were fixed for all three Beauty parent/update
pairs before observing the chronological-update test outcomes. All widths use
the same checkpoints, full test cohort and seeds. Recall@20 uses the first
20 returned items.

Width 20 reuses the baseline predictions; widths 40 and 80 run new
current-catalog/full-codebook parent and current-catalog update passes. Each width is
certified under its own scoring and state transitions. Independent
reconstruction reproduces the certificate decisions and metrics for the six
additional width/seed conditions. Repeated checkpoint inference also checks
predictions and scores at both wider widths for seed 17.

The width-80 failure follows from the old codebook's at most
64 distinct first-level symbols, fewer than the 80 strict old extensions required by the
condition. This setting measures the resulting structural boundary of the certificate.

% Generated from the full AL02/AL03 independent verification.
\begin{table}[!htbp]
\caption{Complete prespecified beam-width comparison. Recall@20 always uses the first 20 returned items on the same 734 Primary queries. Width 20 uses the baseline predictions. A different width requires a new certificate. All values except counts/widths are percentages or percentage-point gaps.}
\label{tab:al03-width}
\centering\small
\begin{tabular*}{\linewidth}{@{\extracolsep{\fill}}rrrrrr@{}}
\toprule
Seed & $W$ & $|C|/N$ & Bound & Updated R@20 & Gap (pp) \\
\midrule
17 & 20 & 712/734 & 3.00 & 2.72 & -0.27 \\
17 & 40 & 662/734 & 9.81 & 3.00 & -6.81 \\
17 & 80 & 0/734 & 100.00 & 3.00 & -97.00 \\
29 & 20 & 709/734 & 3.41 & 5.04 & +1.63 \\
29 & 40 & 659/734 & 10.22 & 5.18 & -5.04 \\
29 & 80 & 0/734 & 100.00 & 5.18 & -94.82 \\
43 & 20 & 717/734 & 2.32 & 4.63 & +2.32 \\
43 & 40 & 679/734 & 7.49 & 4.36 & -3.13 \\
43 & 80 & 0/734 & 100.00 & 4.36 & -95.64 \\
\bottomrule
\end{tabular*}
\end{table}

\subsection{Fixed-count and branch applicability}
\label{app:tools-transfer}
The Beauty branch study reuses 83 query--seed score arrays containing
317 states and 3,246,080 terminal candidates. At $m=595$, every repair
interval is empty. Capacities range from 4,294,954,577 to 4,294,956,113
and support-group counts from zero to six; the count endpoints are not
tight. The exact count criterion adds no certified queries on these
states. Its endpoint behavior is tested in the exhaustive finite designs.
In the predeclared Tools condition, 172 of 368 queries are certified,
121 fail the common-prefix premise and 75 have a new leaf in the
full-reference Top-20. No query enters additional branch enumeration.
The bound remains 53.26\%, above the update's 13.59\%, with no exclusion.
The complete coverage partition and original score arrays are retained.
\subsection{Constructive repair comparisons}
\label{app:construction-comparison}

\subsubsection{Methods and common evaluation rules}

The comparison concerns per-query existence: a method receives the designated
target identity and may assign it any free leaf while retaining every old item
and admitting exactly $m$ new identities. The output is one complete assignment
for that query. This task measures constructive reachability for a supplied
target identity. The comparisons reuse previously evaluated checkpoints.
Method choices, finite designs, query selection rules and computation limits
were fixed before the comparative executions.

The initial finite allocation uses the first $m$ unoccupied leaves in numerical
order and assigns the target to the first one. Beauty uses its reference
assignment. Random search draws four legal $m$-leaf catalogs, independently
assigns the target to a selected new leaf, and reports whether any trial
retrieves it. All four trials are evaluated, including after success.

High-score allocation selects the highest-scoring new leaf in the full-codebook
reference expansion and fills the remaining slots in numerical order within
the reference-compatible universe: optional prefixes above its beam cutoff
that are absent from that state remain disabled. It does not enforce support
for selected optional prefixes, so decoding the catalog may change the beam. Reference-state
exact construction applies the complete support and count criterion while
retaining only the full-reference penultimate state. Full construction examines
that state first and then the remaining states in deterministic order, returning
the first verified witness or certifying failure after complete enumeration.
The lexical-padding ablation retains exact state and target selection and support
choices but fills remaining slots in numerical order instead of preferring
non-outrankers. Every returned mapping is evaluated by the complete decoder.

\subsubsection{Finite end-to-end comparisons and condition ablations}

\begin{table}[!htbp]
\caption{Constructive comparisons. Finite columns report successful target retrieval
among all feasible conditions; Beauty columns report repairs among all remaining
common-prefix queries (40/44/27), each initially missed. The initial allocation is
lexical append in finite problems and the reference mapping on Beauty. Random
search reports success in any of four trials. Each repair uses its own assignment.}
\label{tab:construction-comparison}
\centering\small
\setlength{\tabcolsep}{3.3pt}
\begin{tabular*}{\linewidth}{@{\extracolsep{\fill}}lrrrrr@{}}
\toprule
& \multicolumn{2}{c}{Finite scoring family} & \multicolumn{3}{c}{Beauty seed}\\
Method & Prefix-local & Complete-state & 17 & 29 & 43\\
\midrule
Initial allocation & 170/882 & 107/934 & 0/40 & 0/44 & 0/27\\
Random, four trials & 488/882 & 494/934 & 0/40 & 0/44 & 0/27\\
High-score allocation & 882/882 & 733/934 & 9/40 & 15/44 & 3/27\\
Reference-state exact & 850/882 & 858/934 & 9/40 & 15/44 & 3/27\\
Full construction & 882/882 & 934/934 & 9/40 & 15/44 & 3/27\\
Lexical padding & 882/882 & 934/934 & 9/40 & 15/44 & 3/27\\
\bottomrule
\end{tabular*}
\end{table}

\begin{figure}[t]
\centering
\includegraphics[width=\linewidth]{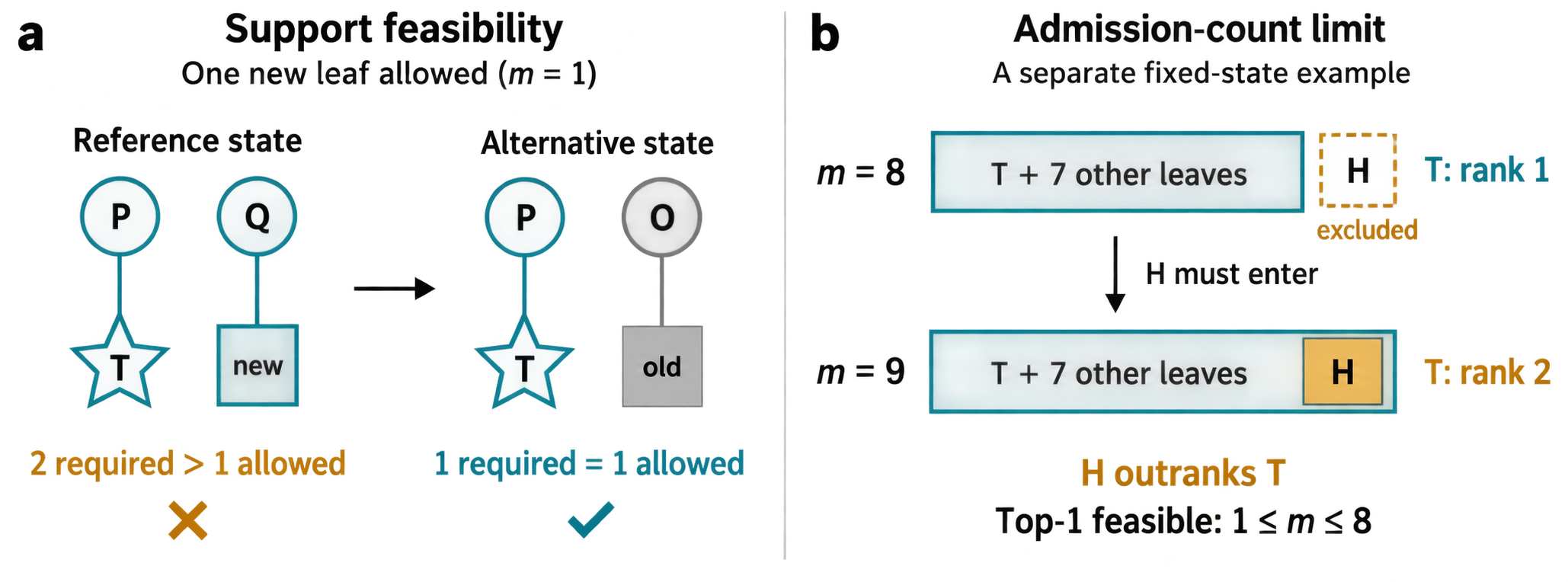}
\caption{Support and admission-count mechanisms in two exact finite examples.
\textbf{a}, At $m=1$, reference prefixes $P=(1,1)$ and $Q=(2,1)$ each require
a new supporting leaf. The alternative replaces $Q$ by the old-supported
prefix $O=(1,0)$, allowing target leaf $T=6$ to be retrieved with one admission.
The left drawing shows required supports, not a feasible one-admission catalog.
\textbf{b}, In a separate fixed state, the nine optional leaves include
target $T=11$ and its higher-scoring competitor $H=10$.
At $m=8$, excluding $H$ gives target rank one; admitting all nine leaves
forces $H$ into the catalog and lowers the target to rank two.
Exhaustive decoding verifies Top-1 feasibility exactly for $1\le m\le8$;
at $m=0$ the target is absent. Mandatory old leaves remain fixed and are
omitted in (b).}
\label{fig:construction-mechanisms}
\end{figure}

We use three-layer alphabets $(3,2,2)$ and $(4,2,2)$ with widths two and three.
Every root has one mandatory old leaf at suffix $(0,0)$, so the root beam is
common to every containing catalog. Each shape has 24 deterministic scoring
seeds for each of two families: prefix-local scoring and scoring that depends
on the complete ordered penultimate state. Hash-derived positive integer
weights are normalized to exact rational probabilities; ties use lexical path
order. There are 96 scorers, with 9 or 12 optional leaves per scorer.

Independent full beam decoding exhausts all 221,184 containing catalogs.
For every positive admission count and $K\in\{1,2\}$, it determines whether
any assignment can retrieve the target. This yields 2,016 conditions, 1,816
feasible. The finite columns of Table~\ref{tab:construction-comparison} use
these feasible denominators, 882 for prefix-local and 934 for complete-state
scoring. Full construction also correctly rejects all 200 infeasible conditions.
High-score allocation solves all feasible prefix-local conditions in this
finite design; its 201 misses all occur under complete-state scoring.
Reference-state exact construction misses 32 prefix-local and 76 complete-state
conditions because another beam is needed. Lexical padding has no observed
success penalty in this comparison.

Condition ablations test a different endpoint: feasibility of a specified leaf
at a specified state and count. Removing support groups also removes their
contribution to required counts and forced outrankers. The other two ablations
remove only the lower or upper count endpoint, retaining the compatible-state
capacity for the latter. Across 24,630 decisions, these removals produce
1,586, 659, and 802 false positives, respectively, with no false negatives.
The full criterion has neither error. These deliberately incomplete predicates
serve as diagnostic checks of the necessary conditions.
An independent local ranking experiment checks 193,408 decisions and
24,252 constructed witnesses. It enumerates 67,392 catalogs across 2,400
ranking instances with local universes of six or eight leaves, disjoint two-leaf
support groups, every old mask and eligible group subset, four ranking
seeds, and $K\in\{1,2,3,4\}$. Two leaves are outside the terminal expansion.
It tests forced support, insufficient count and off-beam padding with no
false-positive or false-negative decisions.

Figure~\ref{fig:construction-mechanisms} uses two deterministically selected
examples from this enumeration. In panel a, $m=1$ cannot support both optional
prefixes in the reference state; replacing one with an old-supported prefix
allows target leaf 6 to be retrieved. In panel b, the specified target is leaf
11, $K=1$, and the required group is $\{10,11\}$. Its predicted interval is
$[1,8]$. At $m=9$, leaf 10 must be included and outranks the target. At $m=0$,
the target cannot be admitted, so no rank is plotted.

\begin{figure}[t]
\centering
\includegraphics[width=\linewidth]{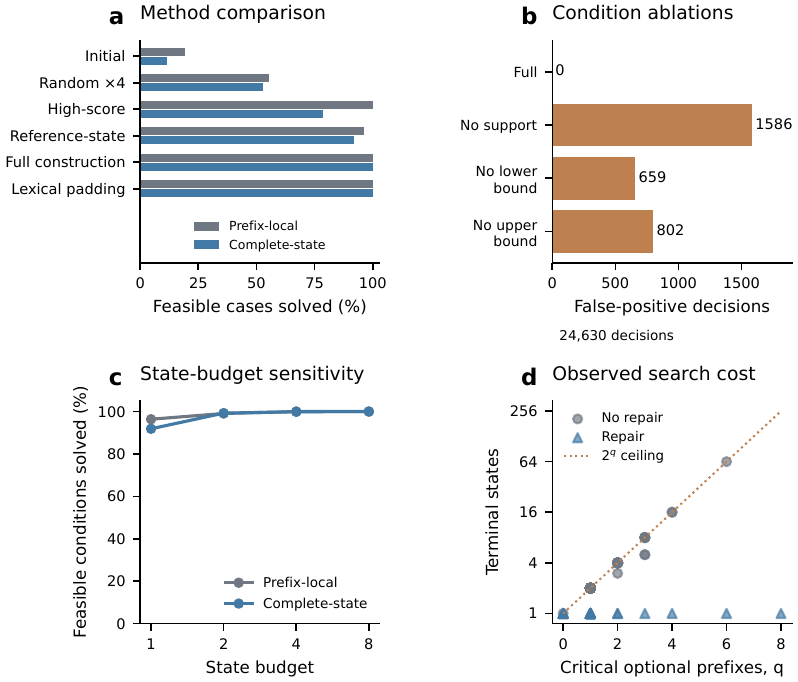}
\caption{Method comparisons, component removals, and search cost.
\textbf{a}, Success among feasible conditions in each finite scoring family
(882 and 934); bars are exact descriptive proportions.
\textbf{b}, False feasibility decisions among 24,630 state--target--count
conditions after removing support or count endpoints.
\textbf{c}, Feasible conditions solved with a fixed terminal-state budget;
unfinished enumeration leaves unresolved cases rather than certifying failure.
\textbf{d}, States evaluated for all 111 trained-model comparisons, with the
$2^q$ ceiling. Successful construction may stop at the first state, whereas
failure requires enumeration. Repeated conditions and query--seed pairs are
within-design measurements.}
\label{fig:construction-results}
\end{figure}

\subsubsection{Paired evaluation with trained Beauty models}

We reuse the three Beauty parents at width 40 and $K=20$, with 11,183 old
items and 595 new items. Each seed retains all 734 Primary queries in the
accounting. Before new comparisons, final-layer certification resolves
677/676/691 queries, while 17/14/16 fail the common first-two-layer condition.
All remaining 40/44/27 queries enter the comparison, including cases with
and without a new leaf in the reference Top-20. The selected query set is
retained after successful repairs.
Every selected query is initially missed under its reference assignment.

Each parent runs on an NVIDIA RTX 3090 with PyTorch 2.6.0+cu124 and
Transformers 5.9.0, deterministic float32 scoring, disabled TF32, and the
original 64-query batches and within-batch slots. CuBLAS uses the deterministic
workspace configuration \texttt{:4096:8}. The complete pre-branch observations
and reference terminal scores agree exactly with the corresponding recorded
GPU evaluations. Branch continuations retain the actual caches. Enumeration
allows at most eight critical optional prefixes and 256 states per query;
no selected query exceeds these limits. Every method uses the same model,
input, catalog size, old identity mapping, and numerical execution shape.

Independent checks verify 636 complete legal assignments across all methods
and random trials. For every full-construction witness, direct decoding
reproduces the constructed state, terminal scores and predicted target rank.

Full construction repairs 9/15/3 queries, matching both high-score allocation
and reference-state exact construction. With four trials per query, random
search retrieves none of the 111 targets. All 27 successful repairs occur
in the reference state. Five need no new-prefix support; the other 22
require one to seven groups (the counts for 1/2/3/4/6/7 groups are
16/2/1/1/1/1).

The remaining 84 compared queries have no repair in the enumerated family.
All 318 of their terminal states have at least 20 strictly higher-scoring
mandatory old leaves than the best new leaf, so internal old-score ties do not
change these negative decisions. The rank-based comparison includes one
seed-43 query outside the certificate coverage in Table~\ref{tab:branch-bounds}:
its old scores contain an internal tie, but 49 old leaves strictly outrank
its best new leaf.

\begin{figure}[t]
\centering
\includegraphics[width=\linewidth]{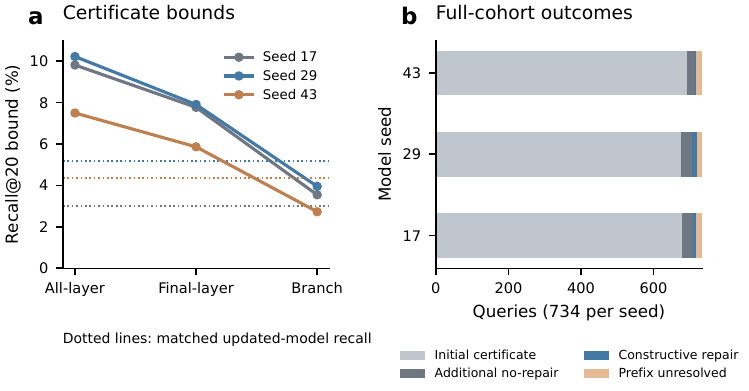}
\caption{Certificate strength and complete-cohort accounting.
\textbf{a}, Original all-layer, final-layer, and branch bounds at Beauty width 40;
dotted lines are the matched updated-model recalls.
\textbf{b}, Every Primary query is retained after constructive comparison:
initial certification, additional no-repair decisions, query-specific successful
constructions, or unresolved common prefixes. The seed-43 partition includes
the additional tied-old-score case described in the text. A different assignment
may realize each repair; the blue portion aggregates per-query repairs.}
\label{fig:certification-scope}
\end{figure}

\subsubsection{Budgets, computation, and statistical units}

The finite budget comparison limits the deterministic state order to
1, 2, 4, or 8 terminal evaluations and retains unknown outcomes when the
budget does not complete a negative decision. The trained-model comparison
evaluates 345 states across 111 queries, including 111 reference states and
234 additional terminal continuations. Critical-set sizes range from zero
to eight. Positive cases stop at their verified first-state construction;
negative cases require all reachable states. Figure~\ref{fig:construction-results}
shows these search outcomes; the sufficient certificate's single-pass cost is
measured separately.

The three complete paired studies take 42.88/52.66/28.58 seconds, including
model loading, all comparison methods, validation, and output recording, but
excluding interpreter and library startup. Peak allocated GPU memory is
5.67 GiB per study. The shared reference reconstruction uses four decoder
steps per compared query. Together, the methods use 1,283/1,547/836 decoder batch steps in the three studies.
Per-method synchronized execution times, shared-reference times, search and
recording times, and forward counts are supplied with the individual results.
The methods have different computation costs: random search uses four full
decodings, while exact search depends on the state count.

For the finite designs, summaries distinguish the 96 scorers, count/cutoff
conditions and state-specific decisions. Trained-model summaries distinguish
the three parents and their query--seed pairs. Random trials and repeated
states are within-study measurements. All summaries are descriptive;
no significance test is applied.

\subsection{Decoder Implementation and Experimental Details}
\label{app:native-closure}

\subsubsection{Parent/update models and data provenance}

The Beauty models use the chronological split and tokenizer in
Appendix~\ref{app:beauty-protocol}. Each parent is trained from scratch.
AdamW uses learning rate $3\times10^{-4}$, weight decay 0.01,
gradient clipping at one, and batch size 256. All 40 parent and 20 update epochs run;
minimum validation teacher-forced loss selects each checkpoint. The update starts at its
selected parent, uses a new optimizer, and includes eligible old training examples.
Checkpoint selection includes epoch zero for both parent and updated models.
Validation loss selects parent epochs 30/23/24 and update epochs 11/17/10.

The VK and Tools certificate experiments use fixed parent/update model pairs. The VK training universe contains 65,659
IDs, including future IDs used as negatives in dense training; decoding
permits 62,398 current items. Its released validation data reuse
gap-training interactions, so the epoch-28 endpoint is fixed independently
of those data. Certificate metrics use explicit query means and capped
labels. Multi-target old/new slices can overlap. Each Tools parent
supplies frozen- and refreshed-candidate updates; both are evaluated
against that parent's certificate on the same Primary cohort.

\subsubsection{Decoder configuration}

The VK GPT and Tools T5 certificate experiments use four generated
SID tokens, width 20, and return 20 paths. These are distinct from the
width-40 T5/LC-Rec recommendation comparisons. The GPT codebook
blocks have 512 tokens per position with zero bias; the T5 blocks have 256 tokens with bias one.
Log-softmax is computed over the full model vocabulary before the sole catalog mask.
No allowed-child renormalization, sampling, repetition penalty, or variable-length valid EOS
completion is introduced. The effective generation settings have length penalty one.

The first beam has score zero and other initial copies have score $-10^9$. Valid top-20
expansions are checked to remain above inactive sentinels. At layers 1--3, finite selected
expansions do not finish, and the inactive completed pool cannot change the live beam or
trigger stopping. A catalog with few legal backup extensions may fill some top-40
positions with invalid/EOS sentinels. These cannot enter the finite top 20 or its effective
completed pool. All paths terminate at layer four and share the same length normalization.
The subsequently unused running state can differ because of sentinel rounding ties; the
proof concerns effective pre-score states and final outputs, not those dead temporaries.

The T5 evaluation additionally tracks the encoder output and the self-/cross-attention caches.
Every T5 generation call performs one encoder and four decoder forwards. Both models run in
evaluation mode without batch normalization. Fixed-shape query separability and deterministic
scoring are premises; matching repeated cache fingerprints is finite implementation evidence.

\subsubsection{Numerical validation}

The implementation is checked separately from the logical criterion.
An independent exhaustive decoder check covers 12,240 binary three-level cases;
its 2,975 certificates agree across all 41,552 containing catalogs.
Reference/full-codebook comparisons, repeated executions and independently
reconstructed layer decisions check the saved VK and Tools trajectories.

Certificates use scores and effective states from the actual cached decoding
pass. A separate full-forward evaluation can differ beyond the fixed
$2\times10^{-5}$ tolerance, so it is not substituted for the cached score
function.

\subsubsection{Metric and cohort definitions}

VK histories take the last 20 items from time parts $[0,9)$; targets are in $[9,10)$.
Old items occur in $[0,8)$, admitted new items in $[0,9)$ but not the old catalog. The
test set also has targets not yet admitted; these remain in whole-catalog evaluation with zero
contribution. Recall counts unique predicted hits against the label set and divides by the
count of label occurrences capped at 20. New-target cohorts retain only their new-label
component for that metric. Multi-target queries can belong to more than one target-kind cohort.
Tools uses the local 0.7-to-0.8 catalog transition and one target per query.

The primary bound comparison uses every pure-old query with an admitted new target. Parent-certified
groups are secondary subsets defined from the parent before examining updated-model results.
Tools' three parent certificates agree across the frozen and refreshed mappings, as implied by
reference independence for those inputs. They certify 4,274/5,091, 4,459/5,091, and 3,951/5,091
pure-old queries, respectively. None of the six updates exceeds the bound on the complete Primary cohort.

Within each certificate comparison, parent and updated checkpoints use
identical numerical settings.

\subsection{Cost of invariance checking}
\label{app:streaming-cost}
Streaming evaluation reproduces every ordered prediction, layer decision
and margin for 39,219 VK and 5,360 Tools queries without an additional
model forward pass. The first and last batches of each architecture receive
one A/B warmup and a fixed ABBAABBA timing sequence: 32 timed calls,
four repeats per mode and shape. Per-call wall time includes score checks
and synchronization, but excludes input transfer, output copying and file
writing. Both modes retain the same old-prefix index; peak-memory
differences exclude its shared resident cost. Index storage is
1,074,560 bytes on VK and 225,688 bytes on Tools. Table~\ref{tab:streaming-cost}
reports the four shapes; branch enumeration incurs separate terminal
continuations and does not share this same-pass cost guarantee.
\begin{table}[!htbp]
\caption{Same-pass checker cost on two NVIDIA RTX 3090 GPUs. A is uninstrumented decoding and B adds the checker. Each fixed shape has four timed A/B calls after one warmup each; mean wall time is shown. Both modes retain the same old-prefix index. $\Delta$peak is the difference in peak allocated memory in that environment, not total temporary or index memory. Negative timing change is not evidence of speedup.}
\label{tab:streaming-cost}
\centering\small
\setlength{\tabcolsep}{3.3pt}
\begin{tabular*}{\linewidth}{@{\extracolsep{\fill}}lrrrrr@{}}
\toprule
Architecture & Batch & A (ms) & B (ms) & Change & $\Delta$peak (KiB) \\
\midrule
GPT / VK & 256 & 1336.52 & 1337.75 & +0.09\% & 20 \\
GPT / VK & 51 & 271.23 & 271.97 & +0.27\% & 4 \\
T5 / Tools & 128 & 1878.40 & 1882.54 & +0.22\% & 10 \\
T5 / Tools & 112 & 1649.98 & 1643.02 & -0.42\% & 9 \\
\bottomrule
\end{tabular*}
\end{table}

\section{Extended T5 Results and Component Analysis}
\label{app:extended-t5}

This appendix evaluates the final T5 models for seeds 17, 42 and 2027.
Parameter and budget sweeps center on the configurations in
Table~\ref{tab:current-scoring}; matched component controls inherit each
full-method run's adaptation and scoring settings.

\subsection{Sensitivity to scoring parameters and evaluation budget}
\label{app:final-sensitivity}

We hold the nine trained models, assignments and collaborative predictors
fixed and vary scoring parameters around the configurations in
Table~\ref{tab:current-scoring}. For each dataset
and seed, we multiply one of $\lambda$, $\gamma$ or $b$ by
$\{0.5,0.75,1,1.25,1.5\}$ while holding the other two fixed.
Beam width is 40, the additional-item budget is 80, and the completion batch
size is 20. Evaluation uses the same 1,040 Beauty, 3,007 Tools and 774 Toys
validation queries as parameter selection. Figure~\ref{fig:final-scoring-sensitivity}
shows paired NDCG@10 changes from each seed's adopted configuration;
all four ranking metrics are retained in the source data.

The response differs across parameters and datasets. Halving $\lambda$
changes mean NDCG@10 by $-0.129$, $-0.145$ and $-0.013$ percentage points
on Beauty, Tools and Toys. On Beauty, halving $b$ increases it by 0.100
points, whereas multiplying $b$ by 1.5 decreases it by 0.166 points.
Across the tested one-parameter perturbations, mean changes range from
$-0.166$ to $+0.100$ on Beauty, $-0.145$ to $+0.029$ on Tools, and
$-0.055$ to $+0.035$ on Toys. Individual seeds can respond in opposite
directions, as seen for Toys at half the fusion weight.

\begin{figure}[!htbp]
\centering
\includegraphics[width=\linewidth]{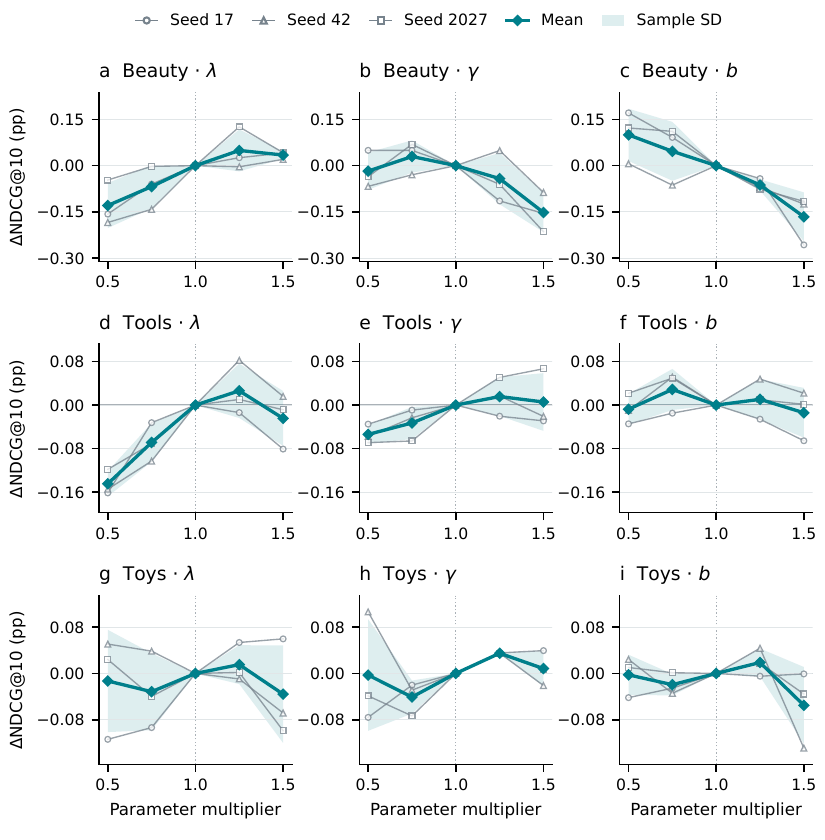}
\caption{Scoring sensitivity around the nine final T5 configurations.
Rows show Beauty, Tools and Toys; columns vary $\lambda$, $\gamma$ and $b$
one at a time. The horizontal coordinate multiplies each seed's own
selected parameter. Vertical coordinates are paired validation NDCG@10
changes in percentage points from multiplier one. Gray paths show seeds
17/42/2027; teal diamonds and bands show their mean and sample standard
deviation. Straight segments join the five measured multipliers. Other
parameters, models, mappings and the 80-item completion budget remain fixed.}
\label{fig:final-scoring-sensitivity}
\end{figure}

\begin{figure}[!tp]
\centering
\includegraphics[width=\linewidth]{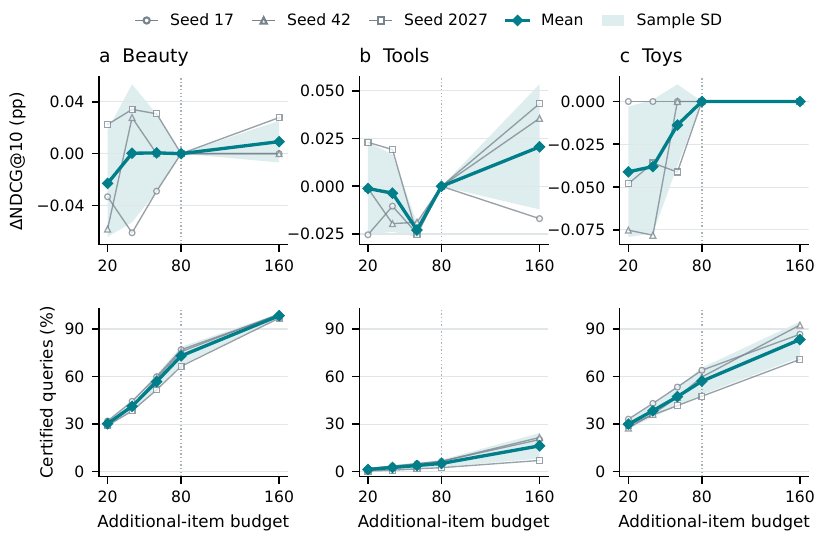}
\caption{Validation sensitivity to the additional-item budget at fixed
final scoring parameters. Top: paired NDCG@10 change from budget 80.
Bottom: the fraction satisfying the numerical Top-20 certificate.
Gray paths show seeds 17/42/2027; teal diamonds and bands give means and
sample standard deviations. Points correspond to budgets 20/40/60/80/160;
straight segments connect observations. The budget caps additional item
evaluations; actual counts and all four ranking metrics appear in
Table~\ref{tab:final-budget-sensitivity}.}
\label{fig:final-budget-sensitivity}
\end{figure}
\begin{table}[!tp]
\centering\small
\setlength{\tabcolsep}{3.5pt}
\caption{Validation ranking and work as the additional-item budget varies.
Means over seeds 17/42/2027 with fixed final models, mappings and scoring
parameters. R/N denote Recall/NDCG (\%); Added counts actual additional
likelihood evaluations, and Cert. is numerical Top-20 certification (\%).
Per-seed variation is shown in Figure~\ref{fig:final-budget-sensitivity}.}
\label{tab:final-budget-sensitivity}
\begin{tabular}{llrrrrrr}
\toprule
Dataset & Budget & R@10 & N@10 & R@20 & N@20 & Added & Cert. \\
\midrule
Beauty & 20 & 5.641 & 3.071 & 8.910 & 3.901 & 19.79 & 30.22 \\
 & 40 & 5.705 & 3.095 & 9.103 & 3.956 & 33.74 & 41.19 \\
 & 60 & 5.705 & 3.095 & 9.167 & 3.972 & 45.51 & 56.76 \\
 & 80 & 5.705 & 3.094 & 9.167 & 3.971 & 54.15 & 73.01 \\
 & 160 & 5.737 & 3.104 & 9.199 & 3.981 & 64.71 & 98.40 \\
\addlinespace[2pt]
Tools & 20 & 5.421 & 3.082 & 7.205 & 3.530 & 19.99 & 1.33 \\
 & 40 & 5.399 & 3.079 & 7.250 & 3.547 & 39.72 & 2.55 \\
 & 60 & 5.332 & 3.060 & 7.250 & 3.542 & 59.21 & 3.81 \\
 & 80 & 5.410 & 3.083 & 7.383 & 3.576 & 78.45 & 5.14 \\
 & 160 & 5.476 & 3.104 & 7.394 & 3.584 & 152.46 & 16.24 \\
\addlinespace[2pt]
Toys & 20 & 4.651 & 2.376 & 7.278 & 3.037 & 18.11 & 29.89 \\
 & 40 & 4.651 & 2.379 & 7.364 & 3.063 & 32.13 & 38.37 \\
 & 60 & 4.737 & 2.404 & 7.450 & 3.084 & 44.45 & 47.29 \\
 & 80 & 4.780 & 2.417 & 7.407 & 3.076 & 55.00 & 57.06 \\
 & 160 & 4.780 & 2.417 & 7.321 & 3.058 & 79.84 & 83.33 \\
\bottomrule
\end{tabular}
\end{table}

% Keep the budget discussion together around the paired figure/table page.
\needspace{15\baselineskip}
\paragraph{Budget and certification.}
At the adopted scoring parameters, we vary only the maximum number of
additional likelihood evaluations over $\{20,40,60,80,160\}$
(Figure~\ref{fig:final-budget-sensitivity},
Table~\ref{tab:final-budget-sensitivity}). Raising the budget from 80 to
160 increases mean numerical Top-20 certification from
73.01/5.14/57.06\% to 98.40/16.24/83.33\% on Beauty/Tools/Toys.
Mean additional evaluations rise from 54.15/78.45/55.00 to
64.71/152.46/79.84. The corresponding NDCG@10 changes are
$+0.009$, $+0.021$ and zero percentage points. Toys Recall@20/NDCG@20
decrease from 7.407/3.076\% to 7.321/3.058\%.
Thus the larger budget
certifies more rankings, while its recommendation effect remains
dataset- and seed-dependent. All models and the main-result configurations
remain fixed throughout these measurements.
\FloatBarrier

\subsection{Calibration tradeoffs and candidate effort}
Figure~\ref{fig:calibration-test-tradeoffs} shows the
old/new-item tradeoffs between global correction and calibrated
scoring across all datasets and three retriever seeds.
Figure~\ref{fig:completion-efficiency} reports their candidate effort and
measured GPU inference costs.

\begin{figure}[!htbp]
\centering
\includegraphics[width=\linewidth]{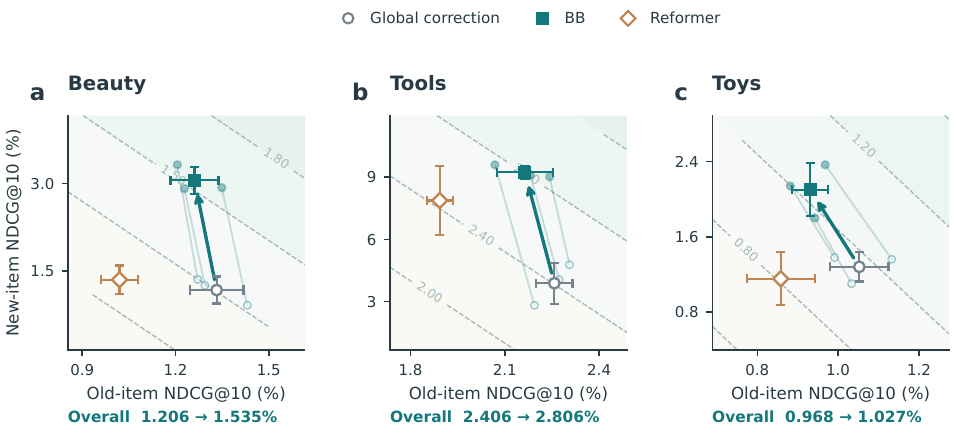}
\caption{Calibration redistributes old- and new-item ranking quality under
the main-result configurations with seeds 17/42/2027.
Thin paths connect global correction to BB for each of three retriever seeds;
large markers and error bars show means and sample standard deviations.
Dashed contours give overall test NDCG@10 (\%) using the complete query
population. Arrows connect matched global/calibrated settings; all methods
use seeds 17/42/2027.}
\label{fig:calibration-test-tradeoffs}
\end{figure}
\begin{figure}[!htbp]
\centering
\includegraphics[width=\linewidth]{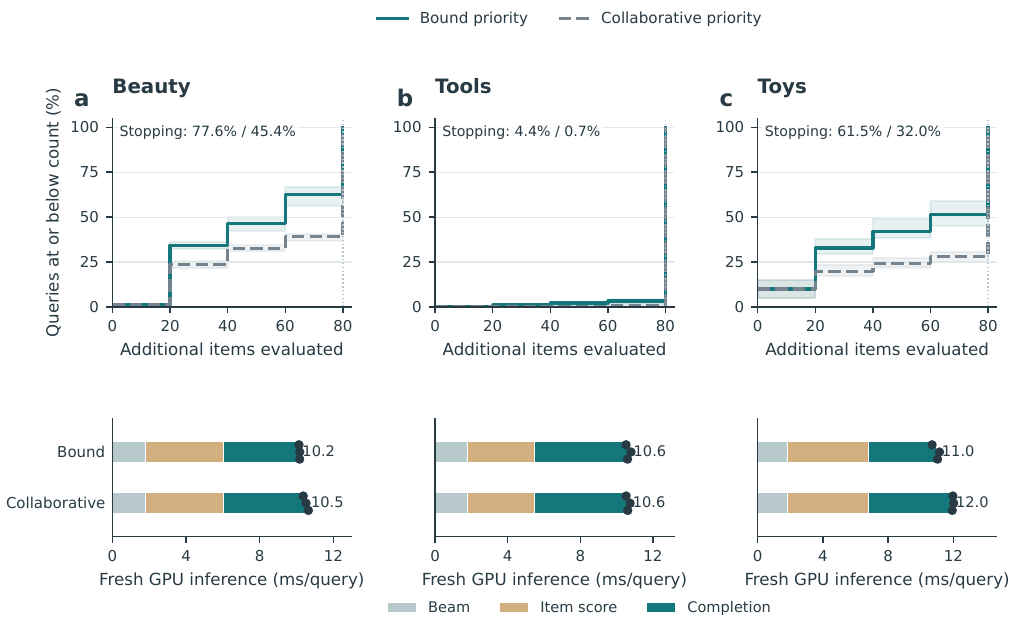}
\caption{Candidate effort and measured inference cost for seeds
17/42/2027 under the same calibrated
score and 80-item maximum. Top: cumulative evaluated-item counts over complete
test sets, with seed means and three-seed ranges. Both curves reach 100\% at
the budget; this includes budget exhaustion. Annotations give the separate
numerical Top-20 certification rates in bound/collaborative order. Bottom: fresh GPU
inference wall time on the same 128 histories per domain and seed, decomposed
into beam, item scoring and completion costs. Bars show seed means and dots
show per-seed totals after averaging three repetitions. Neural inference
uses an RTX 3090; item correction and bookkeeping use two CPU threads.}
\label{fig:completion-efficiency}
\end{figure}
\begin{table}[!htbp]
\centering\footnotesize
\setlength{\tabcolsep}{3pt}
\caption{Candidate accounting and measured GPU inference cost, averaged across seeds 17/42/2027. Initial/added leaves and numerical Top-20 certification rates use complete test sets. Wall time per query uses the same first 128 histories per domain and seed, three repetitions, RTX 3090 neural inference and two CPU threads; it includes fresh beam, item scoring, encoder and completion. Timing excludes loading and warmup (Appendix~\ref{app:completion}).}
\label{tab:completion-cost}
\begin{tabular}{llrrrr}
\toprule
Domain & Priority & Initial & Added & Certified (\%) & ms/query \\
\midrule
Beauty & BB (Ours) & 126.37 & 51.08 & 77.64 & 10.16 \\
Beauty & Collaborative priority & 126.37 & 60.69 & 45.40 & 10.50 \\
Tools & BB (Ours) & 40.17 & 78.70 & 4.45 & 10.64 \\
Tools & Collaborative priority & 40.17 & 79.77 & 0.71 & 10.63 \\
Toys & BB (Ours) & 96.08 & 52.75 & 61.53 & 10.96 \\
Toys & Collaborative priority & 96.08 & 63.65 & 31.95 & 11.98 \\
\bottomrule
\end{tabular}
\end{table}

\paragraph{Runtime and candidate counts.}
On Beauty, mean total wall time is 10.16 versus 10.50 ms/query
for bound and collaborative priority; the completion stage alone is
4.10 versus 4.45 ms/query.
On Toys, total time is 10.96 versus 11.98 ms/query, while Tools is
nearly unchanged (10.64 versus 10.63 ms/query).
Total latency also reflects beam decoding, item scoring, batch utilization and
host-side work. The reported times are batched means on the fixed timing subsets.

\paragraph{Matched component controls.}
Tables~\ref{tab:completion-all-at10}, \ref{tab:completion-extended}
and~\ref{tab:completion-subgroups}
use seeds 17/42/2027 and the corresponding settings in
Table~\ref{tab:current-scoring}. These matched controls share the BB runs in the main comparison,
as do the GPU timings above.
\begin{table}[!htbp]
\centering\scriptsize
\setlength{\tabcolsep}{3pt}
\caption{Matched T5 component controls at cutoff 20: Recall and NDCG (\%),
mean $\pm$ sample s.d. over seeds 17/42/2027. Each removal inherits its
corresponding full-method settings; BB uses the runs in Table~\ref{tab:trained-comparison}.
Initial-pool reranking uses $E_0$ without additional evaluations,
whereas without completion uses only the returned beam.}
\label{tab:completion-extended}
\begin{tabular}{lrrrrrr}
\toprule
& \multicolumn{2}{c}{Beauty} & \multicolumn{2}{c}{Tools} & \multicolumn{2}{c}{Toys} \\
Variant & R@20 & N@20 & R@20 & N@20 & R@20 & N@20 \\
\midrule
Generator only & $4.045\!\pm\!0.089$ & $1.547\!\pm\!0.073$ & $4.932\!\pm\!0.163$ & $2.236\!\pm\!0.123$ & $2.979\!\pm\!0.148$ & $1.124\!\pm\!0.086$ \\
Without collaborative score & $4.040\!\pm\!0.082$ & $1.546\!\pm\!0.073$ & $4.956\!\pm\!0.173$ & $2.242\!\pm\!0.124$ & $2.979\!\pm\!0.148$ & $1.123\!\pm\!0.085$ \\
Without completion & $4.663\!\pm\!0.250$ & $1.858\!\pm\!0.091$ & $5.585\!\pm\!0.235$ & $2.883\!\pm\!0.050$ & $3.240\!\pm\!0.115$ & $1.282\!\pm\!0.060$ \\
Initial-pool reranking & $4.663\!\pm\!0.263$ & $1.867\!\pm\!0.100$ & $5.585\!\pm\!0.235$ & $2.883\!\pm\!0.050$ & $3.383\!\pm\!0.042$ & $1.329\!\pm\!0.043$ \\
Without construction & $5.135\!\pm\!0.322$ & $1.999\!\pm\!0.116$ & $6.430\!\pm\!0.121$ & $3.180\!\pm\!0.010$ & $3.394\!\pm\!0.101$ & $1.348\!\pm\!0.046$ \\
Item correction only & $4.372\!\pm\!0.034$ & $1.642\!\pm\!0.025$ & $4.558\!\pm\!0.814$ & $2.130\!\pm\!0.410$ & $2.043\!\pm\!0.028$ & $0.831\!\pm\!0.007$ \\
Collaborative priority & $5.153\!\pm\!0.157$ & $2.022\!\pm\!0.072$ & $6.437\!\pm\!0.049$ & $3.168\!\pm\!0.039$ & $3.426\!\pm\!0.051$ & $1.349\!\pm\!0.036$ \\
Without calibration & $4.158\!\pm\!0.180$ & $1.606\!\pm\!0.061$ & $5.808\!\pm\!0.138$ & $2.752\!\pm\!0.109$ & $3.346\!\pm\!0.064$ & $1.296\!\pm\!0.060$ \\
BB (Ours) & $5.208\!\pm\!0.189$ & $2.035\!\pm\!0.079$ & $6.437\!\pm\!0.112$ & $3.179\!\pm\!0.009$ & $3.458\!\pm\!0.109$ & $1.352\!\pm\!0.030$ \\
\bottomrule
\end{tabular}
\end{table}

\begin{table}[!htbp]
\centering\footnotesize
\setlength{\tabcolsep}{3pt}
\caption{Subgroup NDCG@10 (\%), averaged over seeds 17/42/2027 using the full-method and matched-control settings in Table~\ref{tab:current-scoring}. New denotes current-catalog new targets; Primary additionally requires old-only histories. Current combines old and new targets. Future-catalog targets remain misses in the main-table denominator.}
\label{tab:completion-subgroups}
\begin{tabular}{llrrrr}
\toprule
Domain & Variant & Old & New & Current & Primary \\
\midrule
Beauty & Generator only & 1.271 & 0.943 & 1.200 & 0.674 \\
Beauty & Without construction & 1.269 & 2.940 & 1.633 & 2.010 \\
Beauty & Item correction only & 0.458 & 4.063 & 1.243 & 3.463 \\
Beauty & Without calibration & 1.332 & 1.171 & 1.297 & 0.383 \\
Beauty & BB (Ours) & 1.261 & 3.052 & 1.651 & 2.290 \\
Tools & Generator only & 1.598 & 4.367 & 1.850 & 4.457 \\
Tools & Without construction & 2.164 & 9.277 & 2.810 & 9.372 \\
Tools & Item correction only & 1.113 & 8.566 & 1.790 & 8.776 \\
Tools & Without calibration & 2.258 & 3.888 & 2.406 & 4.221 \\
Tools & BB (Ours) & 2.164 & 9.234 & 2.806 & 9.355 \\
Toys & Generator only & 0.848 & 1.089 & 0.896 & 1.184 \\
Toys & Without construction & 0.971 & 1.919 & 1.159 & 1.596 \\
Toys & Item correction only & 0.080 & 3.321 & 0.725 & 3.279 \\
Toys & Without calibration & 1.052 & 1.281 & 1.098 & 1.092 \\
Toys & BB (Ours) & 0.931 & 2.103 & 1.164 & 1.907 \\
\bottomrule
\end{tabular}
\end{table}

\FloatBarrier
\subsection{New-target ranking, certification and query effects of construction}
\label{app:final-construction-effects}

This comparison evaluates the selected shared assignment and its subsequent
adaptation against the paired static branch. Both use the adaptation setting,
selected checkpoints and scoring parameters in
Table~\ref{tab:current-scoring}. Queries are paired by history and target;
the Beauty/Tools/Toys denominators are 7,335/5,360/6,266, including
unavailable future targets.

\paragraph{New-target ranking.}
On Beauty and Toys, new-target NDCG@10 increases by 3.82\% and 9.56\%,
respectively (Table~\ref{tab:completion-subgroups}). The gains are larger
for Primary queries, whose new targets follow entirely old-item histories:
2.010\% to 2.290\% on Beauty and 1.596\% to 1.907\% on Toys, or
13.95\% and 19.42\% relative improvements.
These queries account for 734/7,335 and 571/6,266 of the full test
populations. Old-target NDCG@10 decreases by 0.60\% and 4.14\%, leaving
overall improvements of 1.13\% and 0.37\%.
For Beauty, the new- and old-target contributions to the overall NDCG@10
difference are $+0.02273$ and $-0.00557$ percentage points; for Toys they
are $+0.03216$ and $-0.02838$. These contributions use the full-query
denominator and explain how the group effects combine.

\paragraph{Certification and evaluation effort.}
On Beauty, the constructed/adapted branch increases numerical Top-20
certification from 74.10\% to 77.64\%, while reducing additional
evaluations from 52.59 to 51.08 per query (2.86\%). On Toys,
certification increases from 51.66\% to 61.53\%, with evaluations falling
from 57.28 to 52.75 (7.90\%). Tools averages 78.70 additional evaluations
in both branches, with certification rates of 4.43\% and 4.45\%.
These full-test measurements use the same 80-item maximum and numerical
stopping rule. Figure~\ref{fig:main-ablation-profiles}b shows the Beauty
curves and their paired certification differences.

\paragraph{Training support and selected assignments.}
Tools supplies 133 new-target training contexts covering 74 of 158 new
items, compared with 2,993 covering 532 of 595 on Beauty and 4,950
covering 944 of 1,025 on Toys (Appendix~\ref{app:protocol}).
Its selected assignment remains unchanged in seeds 17 and 2027 and
changes two identities in seed 42. In the first two seeds, constructed
and static predictions coincide; seed 42 loses one Top-10 hit. This combination of sparse
new-target supervision and little assignment change distinguishes Tools
from the two datasets with larger cohort and certification gains.

\paragraph{Paired query gains and losses.}
Table~\ref{tab:final-construction-query-effects} resolves the aggregate
effects by seed. Beauty seed 42 recovers 43 Top-10 hits and loses 34,
and its NDCG@10 gain outweighs the small negative overall differences
in seeds 17 and 2027. Seed 17 recovers 34 hits and loses 42, with
positive and negative NDCG@10 contributions of $+0.3591$ and $-0.3643$
percentage points. Toys seed 17 recovers 51 hits and loses 46; its
positive NDCG change exceeds the negative changes in seeds 42 and 2027.
Seed 42 has two recovered and two lost hits, with a negative NDCG change.
Primary effects also vary: the three paired NDCG@10 differences are
$+0.24070$, $+0.61568$ and $-0.01528$ percentage points on Beauty,
and $+0.83901$, $0$ and $+0.09132$ on Toys.

The selected maps directly recode a history or target in 61/32/88 Beauty
queries and 24/5/29 Toys queries. Ranking changes extend beyond these
sets: Beauty seed 17 recovers 33 and loses 40 hits among queries with
unchanged history and target codes, whose competing candidate assignments
and adapted scores can still differ.

\begin{table}[!htbp]
\centering\small
\setlength{\tabcolsep}{3.5pt}
\caption{Paired query effects of final construction and adaptation.
Recovered/lost count Top-10 misses becoming hits and hits becoming misses,
relative to without construction. $\Delta N^+$ and $\Delta N^-$ sum positive
and negative query NDCG@10 changes and divide by all test queries, in
percentage points; their sum is Net. Recoded counts queries whose history
or target contains an item with a changed code.}
\label{tab:final-construction-query-effects}
\begin{tabular}{llrrrrrr}
\toprule
Dataset & Seed & Recoded & Recovered & Lost & $\Delta N^+$ & $\Delta N^-$ & Net \\
\midrule
Beauty & 17 & 61 & 34 & 42 & +0.3591 & -0.3643 & -0.0052 \\
 & 42 & 32 & 43 & 34 & +0.3398 & -0.2817 & +0.0581 \\
 & 2027 & 88 & 4 & 8 & +0.0447 & -0.0461 & -0.0014 \\
\addlinespace[2pt]
Tools & 17 & 0 & 0 & 0 & 0.0000 & 0.0000 & 0.0000 \\
 & 42 & 33 & 0 & 1 & +0.0113 & -0.0232 & -0.0118 \\
 & 2027 & 0 & 0 & 0 & 0.0000 & 0.0000 & 0.0000 \\
\addlinespace[2pt]
Toys & 17 & 24 & 51 & 46 & +0.4449 & -0.4142 & +0.0307 \\
 & 42 & 5 & 2 & 2 & +0.0133 & -0.0319 & -0.0186 \\
 & 2027 & 29 & 27 & 21 & +0.2142 & -0.2149 & -0.0007 \\
\bottomrule
\end{tabular}
\end{table}

\subsection{Query recovery under the combined score}
\label{app:access-decomposition}

\paragraph{Reuse and additional scoring.}
Initial-pool reranking orders all fully scored leaves $E_0$ using the
same model, map and combined score as BB, with no additional likelihood
evaluations. Tables~\ref{tab:completion-all-at10}
and~\ref{tab:completion-extended} place this control between returned-beam
reranking and full completion. Table~\ref{tab:completion-access-decomposition}
reports the two paired increments. Reuse raises mean NDCG@10 by
0.009/0.000/0.028 percentage points on Beauty/Tools/Toys; additional
scoring contributes 0.119/0.249/0.023 percentage points.
Tools has identical four-metric results under returned-beam and initial-pool
reranking for each seed. Its mean initial pool contains 40.17 items, close
to the 40 returned candidates; Beauty and Toys contain 126.37 and 96.08.
The contribution of reuse is largest on Toys, where both stages improve
the dataset means.

Full completion exceeds initial-pool reranking on all four dataset-mean
metrics and on Recall@10 and NDCG@10 in all nine runs.
The paired NDCG@10 increases on Toys seeds 17 and 2027 are small,
at 0.006913 and 0.003187 percentage points. At cutoff 20, Toys seed 42
has four fewer hits (213 to 209; $-0.063837$ percentage points), while
its NDCG@20 increases by 0.004283 percentage points. These paired
contrasts capture both recovered and displaced hits under the fixed score.

\begin{table}[!htbp]
\centering\small
\setlength{\tabcolsep}{3pt}
\caption{Paired decomposition of candidate-access gains, in percentage points:
mean $\pm$ sample s.d. across seeds 17/42/2027.
Reuse is initial-pool reranking minus returned-beam reranking;
additional scoring is full completion minus initial-pool reranking.
The model, map, combined score and queries are fixed within each seed.}
\label{tab:completion-access-decomposition}
\begin{tabular}{llrrrr}
\toprule
Dataset & Contrast & $\Delta$R@10 & $\Delta$N@10 & $\Delta$R@20 & $\Delta$N@20 \\
\midrule
Beauty & Reuse & $0.009\!\pm\!0.021$ & $0.009\!\pm\!0.012$ & $0.000\!\pm\!0.047$ & $0.009\!\pm\!0.010$ \\
Beauty & Additional scoring & $0.336\!\pm\!0.119$ & $0.119\!\pm\!0.050$ & $0.545\!\pm\!0.076$ & $0.169\!\pm\!0.027$ \\
\addlinespace[2pt]
Tools & Reuse & $0.000\!\pm\!0.000$ & $0.000\!\pm\!0.000$ & $0.000\!\pm\!0.000$ & $0.000\!\pm\!0.000$ \\
Tools & Additional scoring & $0.665\!\pm\!0.088$ & $0.249\!\pm\!0.037$ & $0.852\!\pm\!0.191$ & $0.296\!\pm\!0.045$ \\
\addlinespace[2pt]
Toys & Reuse & $0.064\!\pm\!0.058$ & $0.028\!\pm\!0.018$ & $0.144\!\pm\!0.073$ & $0.047\!\pm\!0.019$ \\
Toys & Additional scoring & $0.074\!\pm\!0.060$ & $0.023\!\pm\!0.032$ & $0.074\!\pm\!0.121$ & $0.022\!\pm\!0.016$ \\
\bottomrule
\end{tabular}
\end{table}

\paragraph{Candidate access across cutoffs.}
Figure~\ref{fig:candidate-recovery} compares the generator, returned-beam
reranking, initial-pool reranking and full completion at every cutoff from
1 to 20. Its lower panels separate hit turnover from initial-pool reuse
and additional scoring. At cutoff 10, reuse adds a net
0.09/0.00/0.64 hits per 1,000 queries on Beauty/Tools/Toys; additional
scoring contributes 3.36/6.65/0.74. Each increment includes recovered and
displaced hits under the same model, mapping and score.

\begin{figure}[!htbp]
\centering
\includegraphics[width=\linewidth]{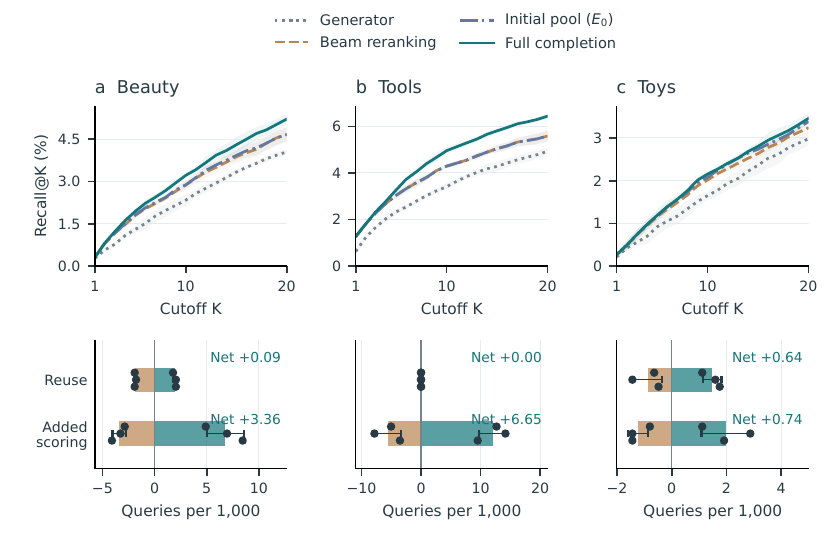}
\caption{Candidate access under the final T5 configurations, seeds
17/42/2027. Top: Recall at all integer cutoffs 1--20; lines and bands show
means and sample standard deviations. Bottom: Top-10 turnover for
returned beam $\to E_0$ (reuse) and $E_0\to$ full completion (added scoring).
Teal bars count recovered hits and ochre bars displaced hits per 1,000
queries. Bars and error bars give means and sample standard deviations;
dots show each seed and annotations give the mean net change.}
\label{fig:candidate-recovery}
\end{figure}

\paragraph{A fixed-score recovery case.}
\label{app:completion-case}
Figure~\ref{fig:completion-score-plane} shows full-range and zoomed views
of one query under the same score. For Beauty seed~17,
we select among the 16 new-target queries whose targets are actually added
by completion and change from a Top-10 miss under beam reranking to a hit.
We choose the final target rank closest to this subset's median (rank~7),
breaking ties by test-query order; the chosen target finishes at rank~8. The case illustrates a successful
recovery; the complete-population gains and losses in
Figure~\ref{fig:candidate-recovery} characterize aggregate effects.

The initial search fully evaluates 82 items but returns only 40 beam
candidates. Completion adds 20 full-likelihood evaluations, yielding 102
scored items. The target is fifth in priority order within that added batch
and finishes at rank~8; 5 added items enter the final Top-10.

The initial-beam control ranks only the returned 40 items, whereas the full
method also reuses scored terminal expansions outside the beam.
Both panels display the same final score coordinates. In particular, scores
obtained during completion are plotted retrospectively in the left panel;
they were not available before those evaluations. No model update or
identifier change occurs between the two panels.

\begin{figure}[!htbp]
\centering
\textbf{Full score range}\par\smallskip
\includegraphics[width=\linewidth]{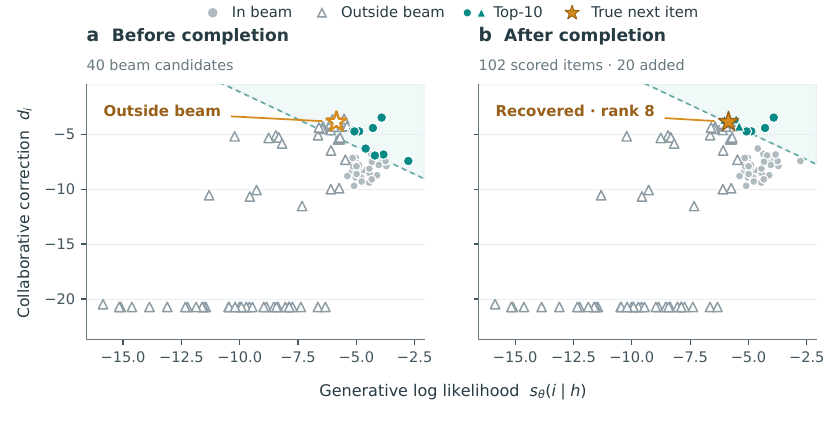}
\par\medskip
\textbf{Detail near the Top-10 cutoffs}\par\smallskip
\includegraphics[width=\linewidth]{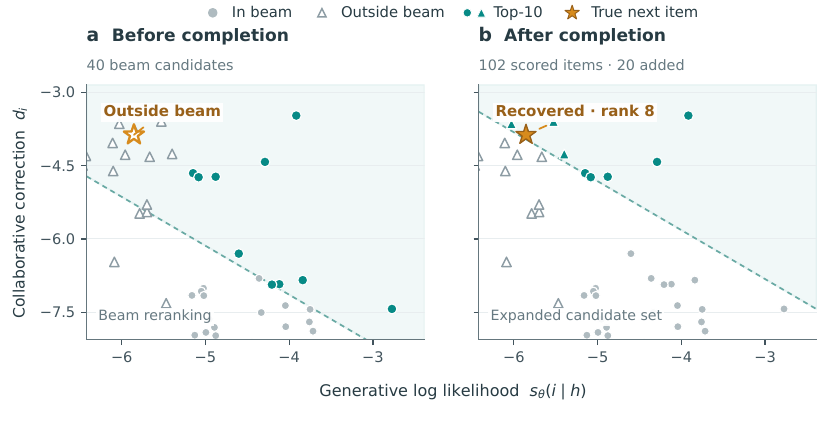}
\caption{Candidate completion for a Beauty seed-17 query under the fixed
score $F_i=s_\theta(i\mid h)+d_i$. Each row compares (a) beam reranking
and (b) ranking after completion. The upper row shows all 102 scored items;
the lower row enlarges the 41 items near the decision boundaries.
Circles/triangles denote items inside/outside the returned beam; teal marks
returned Top-10 items and the gold star marks the target. Dashed lines
mark stage-specific Top-10 cutoffs. Initial search scores 82 items and
returns 40; completion scores 20 more and places the target at rank 8.
Coordinates stay fixed within each row; later-computed scores are shown
retrospectively in (a). The case-selection rule is stated above.}
\label{fig:completion-score-plane}
\label{fig:completion-score-plane-full}
\end{figure}

\needspace{7\baselineskip}
\subsection{Where the Tools ranking difference arises}
\label{app:tools-ranking}

The Tools test set contains 4,873 old-target and 487 new-target queries.
BB achieves Recall@10 of 4.956\%, compared with
Reformer's 4.210\%, and NDCG@10 of 2.806\%, compared with
2.436\%. Old targets contribute $+0.24652$ percentage points to
BB minus Reformer NDCG@10 and new targets contribute $+0.12336$ points.
Their sum is the aggregate difference; the subgroup orderings are
distinct (Table~\ref{tab:continual-groups}).

\paragraph{Calibration and the generator.}
Generator-only new-target NDCG@10 is 4.367\%.
Applying the learned correction to every item yields
3.888\%; the calibrated method reaches
9.234\%, while old-target NDCG changes from
2.258\% to 2.164\%.
New-target Recall@10 changes from 6.982\%
to 13.689\%. This isolates a scoring effect
without changing the generator or its identifiers. Reformer's new-target
NDCG@10 is 7.876\%; the complete-system difference also
includes representation and adaptation choices.

\paragraph{Training support and concentration.}
New denotes addition since the old catalog, not absence of training
interactions. Of the 487 new-target test queries, 308 have target items
observed in update training and 179 have items absent from the available
training interactions. Neither the calibrated BB nor Reformer hits any
of the latter targets in Top-10. Reformer's
new-target gain is concentrated: one item, appearing as a training target
12 times and as a test target 40 times, contributes 75.7\% of its new-target
NDCG. The same item contributes 76.8\% of BB's new-target NDCG.
BB hits eight, seven and eight distinct new items across seeds 17/42/2027;
each Reformer seed hits four. These are query-weighted
recommendation results rather than uniform averages over item identities.

\subsection{Update strategies and conventional recommenders}
Tables~\ref{tab:continual-at20} and~\ref{tab:continual-groups} extend the
standard update and Reformer comparisons to cutoff 20 and target cohorts.
The conventional references use the same test histories and catalog;
their training and scoring protocols are in Appendix~\ref{app:default-baselines}.
With validation-tuned settings, BB has the highest column means in the
conventional comparison at cutoff 10.
\begin{table}[!htbp]
\centering\scriptsize
\setlength{\tabcolsep}{2pt}
\caption{Cutoff-20 results for standard update strategies, Reformer and BB (\%). Mean $\pm$ sample s.d. over seeds 17/42/2027 for all methods.}
\label{tab:continual-at20}
\begin{tabular}{lrrrrrr}
\toprule
& \multicolumn{2}{c}{Beauty} & \multicolumn{2}{c}{Tools} & \multicolumn{2}{c}{Toys} \\
Method & R@20 & N@20 & R@20 & N@20 & R@20 & N@20 \\
\midrule
Frozen/Frozen & $2.259\!\pm\!0.052$ & $0.937\!\pm\!0.043$ & $3.601\!\pm\!0.263$ & $1.674\!\pm\!0.078$ & $1.793\!\pm\!0.189$ & $0.720\!\pm\!0.071$ \\
FT/Frozen & $0.673\!\pm\!0.052$ & $0.239\!\pm\!0.014$ & $0.609\!\pm\!0.103$ & $0.233\!\pm\!0.044$ & $1.005\!\pm\!0.138$ & $0.454\!\pm\!0.049$ \\
Frozen/FT & $3.149\!\pm\!0.121$ & $1.247\!\pm\!0.035$ & $5.896\!\pm\!0.233$ & $2.728\!\pm\!0.130$ & $2.474\!\pm\!0.169$ & $0.995\!\pm\!0.077$ \\
FT/FT & $3.536\!\pm\!0.057$ & $1.363\!\pm\!0.034$ & $5.721\!\pm\!0.106$ & $2.577\!\pm\!0.077$ & $2.522\!\pm\!0.152$ & $0.981\!\pm\!0.027$ \\
FT/RT & $3.913\!\pm\!0.125$ & $1.489\!\pm\!0.048$ & $4.049\!\pm\!0.153$ & $1.796\!\pm\!0.100$ & $2.676\!\pm\!0.088$ & $1.016\!\pm\!0.046$ \\
Reformer~\citeyearpar{shi2025incremental} & $3.499\!\pm\!0.311$ & $1.376\!\pm\!0.115$ & $5.808\!\pm\!0.109$ & $2.837\!\pm\!0.142$ & $2.553\!\pm\!0.257$ & $1.044\!\pm\!0.123$ \\
\textbf{BB (Ours)} & $\mathbf{5.208\!\pm\!0.189}$ & $\mathbf{2.035\!\pm\!0.079}$ & $\mathbf{6.437\!\pm\!0.112}$ & $\mathbf{3.179\!\pm\!0.009}$ & $\mathbf{3.458\!\pm\!0.109}$ & $\mathbf{1.352\!\pm\!0.030}$ \\
\bottomrule
\end{tabular}
\end{table}

\begin{table}[!htbp]
\centering\scriptsize
\setlength{\tabcolsep}{3pt}
\caption{Old- and new-target results for the expanded update comparison (\%). Mean $\pm$ sample s.d. over seeds 17/42/2027 for all methods.}
\label{tab:continual-groups}
\begin{tabular}{llrrrr}
\toprule
Dataset & Method & Old R@10 & Old N@10 & New R@10 & New N@10 \\
\midrule
Beauty & Frozen/Frozen & $1.987\!\pm\!0.032$ & $1.009\!\pm\!0.052$ & $0.000\!\pm\!0.000$ & $0.000\!\pm\!0.000$ \\
 & FT/Frozen & $0.462\!\pm\!0.011$ & $0.211\!\pm\!0.012$ & $0.045\!\pm\!0.078$ & $0.014\!\pm\!0.024$ \\
 & Frozen/FT & $1.937\!\pm\!0.066$ & $1.000\!\pm\!0.008$ & $2.041\!\pm\!0.236$ & $0.924\!\pm\!0.135$ \\
 & FT/FT & $2.025\!\pm\!0.037$ & $1.005\!\pm\!0.043$ & $3.028\!\pm\!0.308$ & $1.315\!\pm\!0.178$ \\
 & FT/RT & $2.100\!\pm\!0.172$ & $1.032\!\pm\!0.053$ & $3.320\!\pm\!0.383$ & $1.520\!\pm\!0.166$ \\
 & Reformer & $1.994\!\pm\!0.103$ & $1.020\!\pm\!0.060$ & $3.028\!\pm\!0.583$ & $1.347\!\pm\!0.247$ \\
 & BB (Ours) & $2.618\!\pm\!0.076$ & $1.261\!\pm\!0.077$ & $6.483\!\pm\!0.371$ & $3.052\!\pm\!0.232$ \\
\midrule
Tools & Frozen/Frozen & $2.722\!\pm\!0.117$ & $1.529\!\pm\!0.061$ & $0.000\!\pm\!0.000$ & $0.000\!\pm\!0.000$ \\
 & FT/Frozen & $0.349\!\pm\!0.036$ & $0.175\!\pm\!0.017$ & $0.205\!\pm\!0.356$ & $0.067\!\pm\!0.115$ \\
 & Frozen/FT & $3.879\!\pm\!0.160$ & $2.067\!\pm\!0.069$ & $8.419\!\pm\!0.895$ & $4.902\!\pm\!0.961$ \\
 & FT/FT & $3.147\!\pm\!0.140$ & $1.642\!\pm\!0.086$ & $13.415\!\pm\!0.314$ & $7.375\!\pm\!0.584$ \\
 & FT/RT & $3.003\!\pm\!0.357$ & $1.614\!\pm\!0.139$ & $0.000\!\pm\!0.000$ & $0.000\!\pm\!0.000$ \\
 & Reformer & $3.441\!\pm\!0.031$ & $1.893\!\pm\!0.042$ & $11.910\!\pm\!1.550$ & $7.876\!\pm\!1.668$ \\
 & BB (Ours) & $4.084\!\pm\!0.142$ & $2.164\!\pm\!0.088$ & $13.689\!\pm\!0.314$ & $9.234\!\pm\!0.314$ \\
\midrule
Toys & Frozen/Frozen & $1.566\!\pm\!0.249$ & $0.773\!\pm\!0.097$ & $0.000\!\pm\!0.000$ & $0.000\!\pm\!0.000$ \\
 & FT/Frozen & $0.964\!\pm\!0.116$ & $0.527\!\pm\!0.050$ & $0.000\!\pm\!0.000$ & $0.000\!\pm\!0.000$ \\
 & Frozen/FT & $1.814\!\pm\!0.079$ & $0.926\!\pm\!0.062$ & $1.062\!\pm\!0.379$ & $0.502\!\pm\!0.193$ \\
 & FT/FT & $1.716\!\pm\!0.158$ & $0.857\!\pm\!0.090$ & $1.213\!\pm\!0.139$ & $0.560\!\pm\!0.067$ \\
 & FT/RT & $1.724\!\pm\!0.192$ & $0.840\!\pm\!0.109$ & $1.426\!\pm\!0.344$ & $0.686\!\pm\!0.123$ \\
 & Reformer & $1.716\!\pm\!0.197$ & $0.858\!\pm\!0.084$ & $2.335\!\pm\!0.430$ & $1.154\!\pm\!0.282$ \\
 & BB (Ours) & $1.972\!\pm\!0.138$ & $0.931\!\pm\!0.044$ & $4.337\!\pm\!0.556$ & $2.103\!\pm\!0.285$ \\
\bottomrule
\end{tabular}
\end{table}

\begin{table}[htbp]
\centering\scriptsize
\setlength{\tabcolsep}{1.5pt}
\caption{Additional comparison with conventional recommenders on complete test sets: Recall and NDCG at 10 (\%).
Neural methods, including BB, use seeds 17/42/2027 and report mean $\pm$
sample s.d. ($n=3$); ItemKNN and EASE are deterministic fits.
All methods use validation-tuned settings, the same test queries and the
same catalog. Training protocols are given in Appendix~\ref{app:default-baselines}.
Bold marks the highest column means.}
\label{tab:conventional-comparison}
\begin{tabular}{lrrrrrr}
\toprule
& \multicolumn{2}{c}{Beauty} & \multicolumn{2}{c}{Tools} & \multicolumn{2}{c}{Toys} \\
Method & R@10 & N@10 & R@10 & N@10 & R@10 & N@10 \\
\midrule
Frozen model & $1.636\!\pm\!0.098$ & $0.775\!\pm\!0.084$ & $2.077\!\pm\!0.075$ & $1.074\!\pm\!0.043$ & $1.059\!\pm\!0.066$ & $0.532\!\pm\!0.003$ \\
Static assignment & $2.122\!\pm\!0.127$ & $1.006\!\pm\!0.105$ & $3.464\!\pm\!0.127$ & $1.913\!\pm\!0.065$ & $1.612\!\pm\!0.188$ & $0.747\!\pm\!0.090$ \\
\midrule
ItemKNN~\citeyearpar{aiolli2013efficient} & $1.391$ & $0.656$ & $1.325$ & $0.738$ & $1.404$ & $0.719$ \\
EASE~\citeyearpar{steck2019ease} & $1.963$ & $1.010$ & $3.265$ & $1.936$ & $1.835$ & $0.968$ \\
SASRec~\citeyearpar{kang2018sasrec} & $2.677\!\pm\!0.088$ & $1.176\!\pm\!0.030$ & $2.935\!\pm\!0.188$ & $1.519\!\pm\!0.032$ & $1.654\!\pm\!0.124$ & $0.756\!\pm\!0.030$ \\
JTM matching (adapted) & $2.059\!\pm\!0.165$ & $0.984\!\pm\!0.116$ & $3.302\!\pm\!0.202$ & $1.799\!\pm\!0.190$ & $1.623\!\pm\!0.060$ & $0.735\!\pm\!0.024$ \\
DREAM voting (adapted) & $2.154\!\pm\!0.165$ & $1.022\!\pm\!0.122$ & $3.470\!\pm\!0.135$ & $1.915\!\pm\!0.067$ & $1.325\!\pm\!0.032$ & $0.614\!\pm\!0.012$ \\
DACT~\citeyearpar{feng2026dact} & $2.095\!\pm\!0.200$ & $1.063\!\pm\!0.068$ & $4.142\!\pm\!0.391$ & $2.260\!\pm\!0.214$ & $1.505\!\pm\!0.066$ & $0.762\!\pm\!0.047$ \\
\midrule
\textbf{BB (Ours)} & $\mathbf{3.217\!\pm\!0.072}$ & $\mathbf{1.535\!\pm\!0.047}$ & $\mathbf{4.956\!\pm\!0.138}$ & $\mathbf{2.806\!\pm\!0.053}$ & $\mathbf{2.154\!\pm\!0.064}$ & $\mathbf{1.027\!\pm\!0.063}$ \\
\bottomrule
\end{tabular}
\end{table}

\begin{table}[htbp]\centering\scriptsize
\caption{Validation-tuned conventional baselines at cutoff 20 (\%). SASRec reports mean $\pm$ sample s.d. over seeds 17/42/2027; the other methods are deterministic.}
\label{tab:default-baselines-at20}
\begin{tabular}{lrrrrrr}\toprule
& \multicolumn{2}{c}{Beauty} & \multicolumn{2}{c}{Tools} & \multicolumn{2}{c}{Toys} \\
Method & R@20 & N@20 & R@20 & N@20 & R@20 & N@20 \\ \midrule
ItemKNN & $2.290$ & $0.883$ & $2.220$ & $0.964$ & $2.474$ & $0.991$ \\
EASE & $3.367$ & $1.363$ & $4.795$ & $2.321$ & $2.777$ & $1.207$ \\
SASRec & $4.394\!\pm\!0.028$ & $1.607\!\pm\!0.023$ & $4.683\!\pm\!0.275$ & $1.963\!\pm\!0.073$ & $2.718\!\pm\!0.037$ & $1.024\!\pm\!0.011$ \\
\bottomrule\end{tabular}\end{table}

\begin{table}[htbp]\centering\scriptsize
\caption{Old- and new-target NDCG@10 for validation-tuned conventional baselines (\%). SASRec reports mean $\pm$ sample s.d. over seeds 17/42/2027; the other methods are deterministic.}
\label{tab:default-baselines-groups}
\begin{tabular}{lrrrrrr}\toprule
& \multicolumn{2}{c}{Beauty} & \multicolumn{2}{c}{Tools} & \multicolumn{2}{c}{Toys} \\
Method & Old & New & Old & New & Old & New \\ \midrule
ItemKNN & $0.758$ & $0.516$ & $0.792$ & $0.198$ & $0.758$ & $1.043$ \\
EASE & $1.315$ & $0.262$ & $2.103$ & $0.267$ & $1.222$ & $0.596$ \\
SASRec & $0.861\!\pm\!0.032$ & $2.713\!\pm\!0.187$ & $1.588\!\pm\!0.034$ & $0.827\!\pm\!0.394$ & $0.761\!\pm\!0.028$ & $1.247\!\pm\!0.246$ \\
\bottomrule\end{tabular}\end{table}

\subsection{Old/new utility across recommendation cutoffs}
\label{app:cohort-cutoff}

Figure~\ref{fig:cohort-cutoff-t5} compares old- and new-target Recall at
$K\in\{5,10,15,20\}$ while keeping each model, map and scoring
configuration fixed. The three datasets retain the same query populations
at every cutoff. Their old/new/all query counts are
5,334/1,486/7,335 (Beauty), 4,873/487/5,360 (Tools) and
4,428/1,099/6,266 (Toys). Future-catalog targets contribute zero to the
all-query metric, which satisfies
\[
 R_{\mathrm{all}}(K)
 = \frac{n_{\mathrm{old}}R_{\mathrm{old}}(K)
          +n_{\mathrm{new}}R_{\mathrm{new}}(K)}{n_{\mathrm{all}}}.
\]

On Tools at cutoff 10, BB raises old-target Recall from Reformer's
3.441\% to 4.084\%, and new-target Recall from 11.910\% to 13.689\%.
BB also exceeds Reformer on both cohorts at cutoff 20.
Newly admitted items can have update-training interactions; these cohorts
distinguish catalog age rather than interaction-free cold start.
The paired component effects are reported in
Table~\ref{tab:completion-all-at10}.

\begin{figure}[!htbp]
\centering
\includegraphics[width=\linewidth]{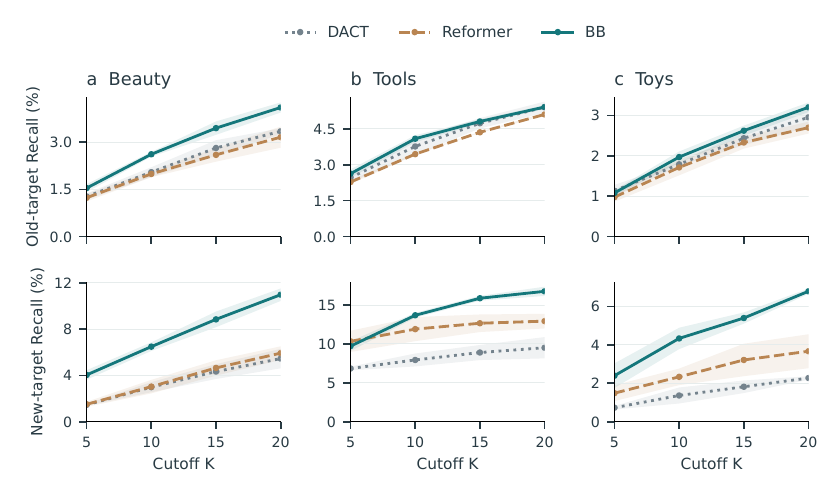}
\caption{T5 cohort utility across cutoffs on Beauty, Tools and Toys.
Top: old-target Recall. Bottom: new-target Recall. All methods use
seeds 17/42/2027; curves and bands show means and sample standard
deviations. Markers give the four evaluated cutoffs, joined by straight
segments. Models, maps, scores and cohort membership are held fixed
as the cutoff changes.}
\label{fig:cohort-cutoff-t5}
\label{fig:cohort-cutoff-t5-beauty}
\label{fig:cohort-cutoff-tools}
\label{fig:cohort-cutoff-t5-toys}
\end{figure}
\FloatBarrier

\section{LC-Rec Backbone: Comparison, Components and Cost}
\label{app:lcrec}
\label{sec:lcrec-results}

This extension examines whether the scoring and completion procedure also
works with a decoder-only recommendation backbone. The full comparison,
component profiles and computational costs are collected here. The study uses seed 17 and
measures scoring and completion on a second backbone; the selected maps
remain unchanged in all three domains.

\subsection{Complete recommendation comparison}
Table~\ref{tab:lcrec-comparison} compares seven methods under the common
catalog update. BB leads the six Recall/NDCG@10 cells, with gains of
15.7--32.9\% and 10.5--35.6\% over the strongest baseline per dataset and
metric. All values use seed 17. The component comparisons below attribute
these gains to scoring and completion; the selected construction maps are
unchanged in all three domains.
\begin{table}[!htbp]
\centering\small
\setlength{\tabcolsep}{5pt}
\caption{Recommendation with the LC-Rec backbone (\%). Single seed 17; the same test queries and available catalogs as the T5 comparison. Bold marks the highest value in each column.}
\label{tab:lcrec-comparison}
\begin{tabular}{lrrrrrr}
\toprule
& \multicolumn{2}{c}{Beauty} & \multicolumn{2}{c}{Tools} & \multicolumn{2}{c}{Toys} \\
Method & R@10 & N@10 & R@10 & N@10 & R@10 & N@10 \\
\midrule
Frozen & 1.159 & 0.631 & 1.474 & 0.812 & 0.958 & 0.414 \\
Fine-tuning & 1.704 & 0.872 & 3.284 & 1.737 & 1.133 & 0.537 \\
LSAT & 1.377 & 0.723 & 1.847 & 1.085 & 1.037 & 0.442 \\
PESO & 1.745 & 0.916 & 3.396 & 1.772 & 1.357 & 0.651 \\
DACT & 2.154 & 1.022 & 3.563 & 1.925 & 1.468 & 0.741 \\
Reformer & 1.909 & 0.926 & 3.526 & 2.049 & 1.053 & 0.501 \\
\midrule
\textbf{BB (Ours)} & \textbf{2.863} & \textbf{1.386} & \textbf{4.123} & \textbf{2.264} & \textbf{1.899} & \textbf{0.858} \\
\bottomrule
\end{tabular}
\end{table}

\subsection{Shared model and update protocol}

The decoder-only comparison uses Qwen2.5-1.5B-Instruct with semantic item
identifiers, following the LC-Rec backbone configuration of
\citet{feng2026dact}. It uses the same Beauty, Tools and Toys examples,
maximum history length and available catalogs as Appendix~\ref{app:protocol}.
All methods start from a shared initial model for each domain and evaluate
the catalog update specified there. The three available catalogs contain
11,778, 10,043 and 11,639 items. Every reported run uses seed 17.

Initial models use rank-8 LoRA with scale 32 and dropout 0.05, applied to
query, value, attention-output and feed-forward up/down projections; input
embeddings and the output head are also trainable. Training uses an 8-bit
base model with bfloat16 computation and full-precision trainable parameters.
AdamW uses learning rate $2\times10^{-5}$, effective batch size 128,
zero weight decay, 200 warmup steps and a cosine schedule. The maximum is
50 epochs. Full validation response cross-entropy is checked approximately
five times per epoch after the first epoch; ten checks without improvement
stop training, and the earliest minimum selects the model. These settings
follow the public DACT recommender implementation; the same optimization
recipe is used to fit the initial models. The initial fits stop after
18.40, 19.00 and 21.81 epochs on Beauty, Tools and Toys, respectively.

Frozen retains the initial model and identifiers. Fine-tuning adapts the
recommender with the shared update mapping. LSAT~\citep{shi2024lsat}
interpolates the historical adapter and the adapted adapter, with weight 0.5.
PESO~\citep{yoo2026peso} uses the released latest-adapter proximal objective
with coefficient 2. DACT uses its updated tokenizer mapping, including
changes to 1,363/1,829/1,110 old-item identifiers. Reformer adds 16 entries
per semantic codebook while retaining existing identifiers. It uses rank-64
LoRA with scale 128 and an initialization preserving the shared parent's
adaptation; DACT retains rank 8 and scale 32. Both reuse the tokenizer
preparation described in Appendices~\ref{app:dact-comparison}
and~\ref{app:continual-updates}. Recommender settings are tuned on validation
from the public implementations; tokenizer preparation follows the protocols above.

BB first adapts the shared parent for five fixed epochs, then evaluates
construction candidates on training examples and continues adaptation.
The paired without-construction model uses the same warmup weights and update
recipe. The selected maps remain unchanged in all three domains, so the
paired models and predictions coincide. Continued adaptation selects its
best validation loss at epochs 5.54/3.00/5.95 and stops at
7.52/4.79/7.93, measured after the five-epoch warm-up.

\subsection{Scoring selection and evaluation}

BB reuses the training-only linear predictor and the correction in
\eqref{eq:calibrated-correction}. Parameters are selected separately for
this backbone. The joint grid uses $\lambda\in\{2,4,6,8\}$ on Beauty/Toys
and $\lambda\in\{0.25,0.5,0.75,1\}$ on Tools, with
$\gamma\in\{0.5,1,1.5\}$ and $b\in\{0,0.5,1,1.5,2\}$.
Each domain evaluates 60 settings with an 80-candidate completion budget,
maximizing validation NDCG@10 followed by Recall@10. Exact ties retain the
previous configuration when available, then minimize parameter distance
from it, followed by smaller parameters. Holding the selected score fixed,
validation compares budgets 80, 160 and 320, preferring the smaller budget
on ties. The selected $(\lambda,\gamma,b,\text{budget})$ are
$(4,1.5,0,80)$, $(0.5,1.5,2,160)$ and $(6,1.5,0,80)$ for
Beauty, Tools and Toys.

Scoring selection uses 1,040 Beauty and 774 Toys cross-day validation
queries and all 3,007 Tools validation queries; model selection by response
loss uses the full validation splits. All three scoring configurations are
fixed before their test evaluation. Inference uses the original full-precision base weights and learned
adapters, a width-40 beam and full-vocabulary log probabilities summed over
four identifier tokens, excluding EOS. Each method re-encodes the same
histories with its own mapping; repeated items remain eligible. The combined
score is used consistently for ranking, candidate bounds and stopping.
As in \eqref{eq:completion-update}, $E_0$ contains all terminal leaves fully
scored by that query's initial search. Completion selects only items outside
the evaluated set, and each selected item counts once toward the additional
budget. Offline likelihood reuse across parameter configurations leaves this
per-query accounting unchanged. The selected budgets are 80/160/80 on
Beauty/Tools/Toys.

\subsection{Component effects and additional metrics}

Table~\ref{tab:lcrec-ablation} reports every paired control.
Collaborative scoring, new-item calibration and candidate completion each
improve NDCG@10 over their corresponding removal in all three domains.
These component gains come from scoring and completion, as the
without-construction control uses the same maps and models.
Section~\ref{sec:count-validation} separately evaluates
constructive feasibility. Collaborative priority slightly exceeds bound
priority on Tools (2.270\% versus 2.264\% NDCG@10) and ties it on the other
two domains. Both use the same final score and candidate budget.

\begin{table}[!htbp]
\centering\scriptsize
\setlength{\tabcolsep}{5pt}
\caption{LC-Rec component controls: Recall and NDCG at 10 (\%), seed 17. Each removal retains the selected parameters and its domain's completion budget; the without-construction control uses the paired map and model.}
\label{tab:lcrec-ablation}
\begin{tabular}{lrrrrrr}
\toprule
& \multicolumn{2}{c}{Beauty} & \multicolumn{2}{c}{Tools} & \multicolumn{2}{c}{Toys} \\
Variant & R@10 & N@10 & R@10 & N@10 & R@10 & N@10 \\
\midrule
BB (Ours) & 2.863 & 1.386 & 4.123 & 2.264 & 1.899 & 0.858 \\
Without construction & 2.863 & 1.386 & 4.123 & 2.264 & 1.899 & 0.858 \\
Without calibration & 2.468 & 1.201 & 3.731 & 1.936 & 1.835 & 0.793 \\
Without collaborative score & 1.922 & 1.010 & 2.761 & 1.465 & 1.293 & 0.618 \\
Without completion & 2.399 & 1.184 & 3.489 & 1.910 & 1.580 & 0.774 \\
Item correction only & 2.590 & 1.267 & 3.563 & 1.949 & 1.787 & 0.804 \\
Collaborative priority & 2.863 & 1.386 & 4.142 & 2.270 & 1.899 & 0.858 \\
Generator only & 1.922 & 1.010 & 2.761 & 1.465 & 1.293 & 0.618 \\
\bottomrule
\end{tabular}
\end{table}

The generator-only Toys NDCG@10 is 0.618\%; a diagnostic ranking by current
update-training target frequency obtains 0.620\% on the same test queries.
The complete procedure reaches 0.858\%, above the generator-only score
and frequency diagnostic on these Toys queries.
Tables~\ref{tab:lcrec-cutoff20} and~\ref{tab:lcrec-groups} give the full
cutoff-20 comparison and BB target-cohort results for seed 17.

\begin{table}[!htbp]
\centering\small
\setlength{\tabcolsep}{5pt}
\caption{LC-Rec Recall and NDCG at 20 (\%), seed 17.}
\label{tab:lcrec-cutoff20}
\begin{tabular}{lrrrrrr}
\toprule
& \multicolumn{2}{c}{Beauty} & \multicolumn{2}{c}{Tools} & \multicolumn{2}{c}{Toys} \\
Method & R@20 & N@20 & R@20 & N@20 & R@20 & N@20 \\
\midrule
Frozen & 1.977 & 0.838 & 2.295 & 1.015 & 1.548 & 0.563 \\
Fine-tuning & 3.095 & 1.217 & 4.776 & 2.115 & 2.139 & 0.787 \\
LSAT & 2.209 & 0.929 & 3.396 & 1.475 & 1.756 & 0.620 \\
PESO & 3.067 & 1.248 & 4.925 & 2.154 & 2.426 & 0.916 \\
DACT & 3.517 & 1.365 & 4.981 & 2.283 & 2.218 & 0.930 \\
Reformer & 3.599 & 1.355 & 4.944 & 2.409 & 2.043 & 0.749 \\
BB (Ours) & 4.513 & 1.802 & 5.690 & 2.662 & 2.968 & 1.128 \\
\bottomrule
\end{tabular}
\end{table}

\begin{table}[!htbp]
\centering\small
\setlength{\tabcolsep}{5pt}
\caption{BB with the LC-Rec backbone by target cohort (\%), seed 17. Future targets lie outside the available catalog and remain misses in the complete-population metrics.}
\label{tab:lcrec-groups}
\begin{tabular}{llrrr}
\toprule
Dataset & Target cohort & Queries & R@10 & N@10 \\
\midrule
Beauty & Old & 5334 & 2.381 & 1.136 \\
Beauty & New & 1486 & 5.585 & 2.762 \\
Beauty & Future & 515 & 0.000 & 0.000 \\
Tools & Old & 4873 & 3.817 & 2.018 \\
Tools & New & 487 & 7.187 & 4.733 \\
Tools & Future & 0 & -- & -- \\
Toys & Old & 4428 & 1.649 & 0.669 \\
Toys & New & 1099 & 4.186 & 2.197 \\
Toys & Future & 739 & 0.000 & 0.000 \\
\bottomrule
\end{tabular}
\end{table}

\subsection{Inference cost}

Table~\ref{tab:lcrec-cost} separates full-test candidate counts from fresh
inference latency on validation queries. The stopping fraction is the
fraction satisfying the score-bound stopping test at the prescribed
numerical tolerance of $10^{-3}$. Queries exhausting their candidate budget
without that condition remain empirical rankings. Latency uses fresh inference
on validation queries; the grid reuses likelihoods across configurations.

\begin{table}[!htbp]
\centering\scriptsize
\setlength{\tabcolsep}{5pt}
\caption{Inference costs of BB with the LC-Rec backbone. Additional candidates and stopping fractions use the full test sets. Latency uses fresh beam decoding and completion for the first 32 validation queries on one RTX 3090, excluding model loading and common input/predictor preparation.}
\label{tab:lcrec-cost}
\begin{tabular}{lrrr}
\toprule
Dataset & Mean additional candidates & Stopping (\%) & Seconds/query \\
\midrule
Beauty & 45.5 & 96.4 & 0.260 \\
Tools & 108.7 & 76.1 & 0.638 \\
Toys & 44.9 & 98.7 & 0.249 \\
\bottomrule
\end{tabular}
\end{table}

\subsection{Old/new utility across cutoffs}
Figure~\ref{fig:lcrec-cohorts} separates old- and new-target Recall using
the cohort definitions in Appendix~\ref{app:cohort-cutoff}. On Tools,
BB improves old-target and all-target utility while new-target Recall
at cutoff 10 remains below DACT; the new-target ordering changes at
cutoff 20. On Toys, BB and DACT tie on old-target Recall at cutoff 10,
while BB has higher new-target Recall. All comparisons use seed 17.

\begin{figure}[!htbp]
\centering
\includegraphics[width=\linewidth]{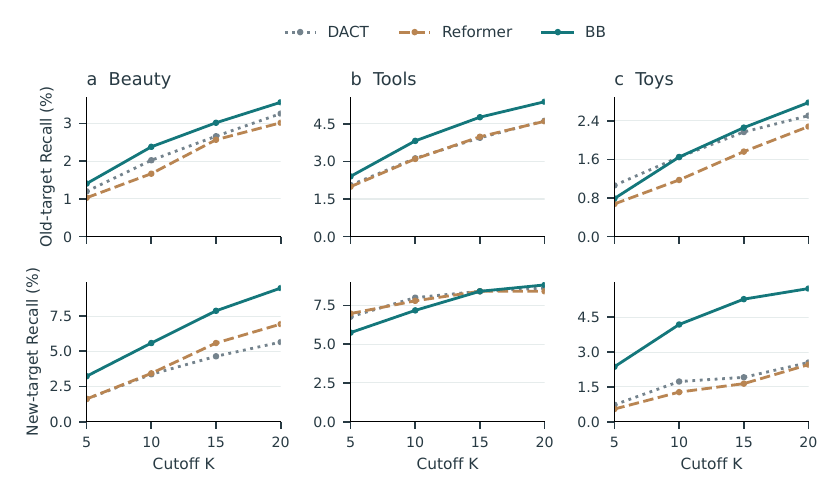}
\caption{LC-Rec cohort utility on Beauty, Tools and Toys, seed 17.
Top: old-target Recall. Bottom: new-target Recall. Models, mappings and
scoring settings stay fixed while the cutoff changes. Markers give the
four evaluated cutoffs, joined by straight segments; no seed variability
interval is estimated. Absolute cutoff-10/20 and component comparisons
are given in Tables~\ref{tab:lcrec-comparison}--\ref{tab:lcrec-groups}.}
\label{fig:lcrec-cohorts}
\label{fig:cohort-cutoff-lcrec-beauty}
\label{fig:cohort-cutoff-lcrec-tools}
\label{fig:cohort-cutoff-lcrec-toys}
\end{figure}
\FloatBarrier

\section{Extended Related Work}
\label{app:extended-related-work}

This section expands Section~\ref{sec:related-work} around the connection
between identifier assignment, candidate access and final ranking that
motivates BB.

\subsection{Identifier updates and constrained repair}

Semantic identifiers make the item mapping part of the retrieval mechanism.
VQ-Rec learns transferable item representations from discrete codes
\citep{hou2023vqrec}. LMIndexer learns semantic identifiers
through document reconstruction and progressive training \citep{jin2024lmindexer}.
TIGER learns residual-quantized identifiers and generates them autoregressively
\citep{rajput2023tiger}. LC-Rec aligns language and collaborative semantics
\citep{zheng2024lcrec}, while LETTER incorporates semantic, collaborative
and diversity objectives into tokenization \citep{wang2024letter}.
These approaches establish the importance of how items are represented
and organized before decoding begins.

Catalog evolution adds the problem of updating this organization while
retaining useful learned structure. Reformer studies incremental
tokenization and recommendation \citep{shi2025incremental}. DACT identifies
collaborative drift and differentiates tokenizer updates, followed by
hierarchical code reassignment \citep{feng2026dact}. SID-Staleness aligns
refreshed codebooks with the previous token space to support warm-start
retriever adaptation \citep{baikalov2026staleness}. GenRecEdit adapts the
generator through token-level editing for cold-start items, with
position-dependent triggering \citep{shen2026genrecedit}.
These different interventions motivate
separating the effects of identifier assignment from changes to the generator.

Score-based assignment provides a particularly close connection.
JTM jointly learns a tree index and preference model, formulating
item-to-leaf assignment through weighted matching \citep{zhu2019joint}.
DREAM constructs collaborative candidate identifiers, uses confidence-weighted
multi-context evidence for commitment, and preserves alternative paths
for inference \citep{guan2026dream}. Such objectives provide evidence for
choosing identifiers. In a finite beam, however, a target's score interacts
with the occupied paths that support or compete with it.

BB makes this interaction explicit through support, terminal-rank and
catalog-count constraints. After a common effective prefix, these constraints
characterize the feasible catalog family for a fixed generator and encoded
history. Minimum-replacement repair then retains the largest feasible
overlap with current leaf occupancy for a specified state and target.
This local construction supplies shared-map proposals, which are evaluated
across re-encoded histories before generator adaptation
(Sections~\ref{sec:repair-guarantees} and~\ref{sec:method-completion}).

The retained-catalog constraint also connects to representation compatibility.
Backward- and forward-compatible embedding methods support retrieval across
model versions \citep{shen2020bct,ramanujan2022fct}; here, compatibility fixes
old item-to-identifier assignments and preserves their recommendation
eligibility. Possible-worlds ranking studies answers over families of
admissible data instances \citep{feng2023uncertainranking}. Our family varies
catalog occupancy, which changes prefix support and hence the beam-search
trajectory itself.

\subsection{Reachability and item-level scoring}

Several studies identify why a valid identifier may still fail to retrieve
its item. Temporal cold-item analysis relates retrieval to observed token
and prefix support \citep{peng2026cold}. SIDScope distinguishes mapping
properties, generated-path survival and item resolution
\citep{ding2026sidscope}. HCGRec addresses unreachable reward groups during
post-training by supplying target-prefix hints to hard training examples
\citep{zhang2026hcgrec}. Together, these studies motivate examining the
intermediate decisions between identifier assignment and item-level ranking.

Training and architectural interventions address different parts of this
process. Beam-aware tree learning studies calibration under the deployed
search procedure \citep{zhuo2020beam}. BEAR regularizes token ranks to reduce
premature beam pruning \citep{yang2026bear}, while Latte introduces a latent
token to relax structural coupling between generation paths
\citep{hou2026latte}. BB studies how legal assignment changes alter the
support and competition encountered by a fixed decoder, then uses the
resulting repairs in shared-map construction and adaptation.

The inference stage addresses the remaining query-dependent candidate
access problem. Linear collaborative models provide item-level evidence
from interactions \citep{steck2019ease}. BB fits a supervised history-to-next-item
ridge predictor and combines its calibrated correction with identifier
likelihood. The correction is available before full identifier evaluation,
so it can guide both final ranking and the evaluation of omitted items.
This connects the ranking objective to candidate completion: reranking
alone cannot recover an item outside the evaluated pool.

\subsection{Candidate access and search guarantees}

Inference can expand access without changing the item mapping.
SpecGR uses an inductive drafter to propose items, verifies them with
generative likelihoods, and guides subsequent drafts using generated
prefixes \citep{ding2026specgr}. Its adaptive exit is triggered when enough
candidates pass a verification threshold; accepted items are ranked by
verifier scores. BB links candidate priority and stopping to the combined
generator--collaborative score. The remaining question is whether an
unevaluated catalog item can still displace the current $K$th item.

This question connects to established search principles.
\citet{huang2017finish} derive optimal stopping modulo beam size for neural
generation, including bounded length rewards. \citet{meister2020bestfirst}
use score monotonicity for best-first beam search and early pruning.
BB applies the prefix upper-bound principle to fixed catalog identifiers
under the final ranking score. Adding the known item correction to an
evaluated prefix score bounds the complete score because the remaining
log-probability terms are nonpositive.

The bound covers every unevaluated catalog item, including identifiers
whose paths left the initial beam. Additional evaluations are ordered by
these bounds, and the strict stopping condition certifies the global
Top-$K$ under the combined score (Appendix~\ref{app:completion-proof}).
An exhausted evaluation budget can instead return an uncertified ranking.
Thus constructive repair organizes the shared catalog, while completion
uses query-specific evidence to recover promising omitted candidates and
determine when further evaluation cannot change the result.

\end{document}